%% file: ArXiv_ICALP.tex
\documentclass[11pt, oneside,cleveref, autoref]{article}   	
\usepackage{geometry}                		
\usepackage[scr=boondoxo,scrscaled=1.05]{mathalfa}

\usepackage{amssymb}

\input{macros}

\usepackage[bibliography=common]{apxproof}
\usepackage{stmaryrd}
\usepackage{amsmath}
\usepackage{amssymb}
\newtheorem{notation}{Notation}
\newtheorem{definition}{Definition}
\newtheorem{remark}{Remark}
\newtheorem{proposition}{Proposition}
\newtheorem{example}{Example}
\newtheorem{theorem}{Theorem}
\newtheorem{corollary}{Corollary}
\newtheorem{lemma}{Lemma}

\newcommand{\arxiv}[1]{#1}
\newcommand{\noarxiv}[1]{}

\title{Recursion and proof theoretical characterizations of small circuit classes with modulo counting via discrete differential equations} 

\author{Melissa Antonelli \quad Arnaud Durand \quad Rui Li}

\begin{document}
\maketitle

\begin{abstract}
    The paper proposes an implicit (i.e., machine-independent) complexity approach to studying computation by polynomial-size, constant-depth circuits with gates counting modulo a constant through the lens of discrete ordinary differential equations (ODEs). So far, recursion-theoretic characterizations have been provided for functions computed by circuits of constant depth, including gates counting modulo 2 and 6 only (i.e.,~for the classes $\ACCt$ and $\ACCp{6}$, resp.). In this paper, it is shown that considering ODE schemas, rather than bounded recursion, allows for a more fine-grained analysis, leading to (uniform) characterizations for \emph{all} classes $\ACCp{n}$ ($n \in \Nat$), i.e.~functions computed by circuits including counting modulo $n$ gates. 
    Inspired by the syntactic form of the ODE schemas, we go further in this direction and present first-order bounded theories for capturing provably total functions in each of these classes.
\end{abstract}

\section{Introduction}
Circuit complexity has been studied extensively from multiple angles in recent decades.
Just to mention a concrete objective, understanding the computational power of polynomial-size, $\log^k n$ depth circuit classes, $\mathbf{AC}^k$, and their bounded fan-in counterpart, $\mathbf{NC}^k$, has received significant  attention.
Although some of the few known lower bounds have been established precisely in relation to Boolean and arithmetic circuits,
%
%
many \emph{fundamental} questions in this field are still open; for example, the separation between levels of the mentioned hierarchies has currently been established only for levels 0 and 1.
Indeed, from the  celebrated proofs of~\cite{DBLP:conf/focs/FurstSS81,DBLP:conf/stoc/Smolensky87}, it is known that Parity, the set of words having an even number of 1's, cannot be decided by polynomial-size, constant-depth circuits, i.e.,~in $\mathbf{AC^0}$.
This result illustrates the (almost) complete inability of constant-depth circuits to count, motivating the study of their extension via counting modulo $n \in \mathbb{N}$ or majority gates, which gave rise to the classes $\mathbf{AC}^0[n]$ and $\mathbf{TC}^0$, resp.
In this setting, the parity lower bound directly implies that  $\mathbf{AC}^0 \subsetneq \mathbf{AC}^0\mathbf{[2]}$.  
More recently, related questions have gained interest even in less traditional fields of computer science; for instance, driven by the link between circuit resources and neural networks~\cite{Parberry},  as early as the 1990s, the expressive power of feed-forward and graph networks was related to threshold Boolean circuits~\cite{AllenderWagner,MSS} and constant-depth arithmetic circuits over the reals~\cite{BHSVV}, resp.

An alternative approach to studying these foundational questions is the investigation of circuit expressive power from machine-independent perspectives; namely, through the prism of logic (especially finite model theory~\cite{DHV,HKL}), proof complexity, and recursion theory.
The contributions of this paper fall precisely within this line of research, starting from a newly proposed perspective in the latter area~\cite{BournezDurand}.
Generally speaking, in recursion theory capturing complexity classes has been done by considering (closure under) restricted forms of primitive recursion.
This is, for example, the idea behind Cobham's characterization of poly-time computable functions ($\FP$) via the so-called \emph{bounded recursion on notation} (BRN) schemas~\cite{Cobham}.
Following this approach, in a series of works~\cite{Clote88, Clote1990,Clote93}, Clote introduced the algebra $\mathcal{A}_0$ to capture functions in the log-time hierarchy, equivalent to $\AC^0$ (the function analogue of $\mathbf{AC}^0$), 
using as a main tool a restricted form of (bounded) recursion, called \emph{concatenation recursion on notation} (CRN). 
This result was extended to higher levels in the hierarchies $\AC$ and $\NC$ (the function versions of $\mathbf{AC}$ and $\mathbf{NC}$).
Around the same time, an alternative algebra (and logic), based on a schema called \emph{upward tree recursion}, was introduced in~\cite{ComptonLaflamme} to characterize $\NC^1$, while in~\cite{Allen}, an original recursion-theoretic characterization (and arithmetic \emph{\`a la Buss}) was presented to capture $\NC$.
Other small circuit classes, including $\TC^0$, the class of functions computed by polynomial-size, constant-depth unbounded fan-in circuits with threshold gates, were later considered by Clote and Takeuti, who also introduced first-order bounded theories for some of them~\cite{CloteTakeuti}. 
This approach allowed for the characterization of the classes $\ACCt$ and $\ACCp{6}$, but it failed to extend to other modulo classes.

In the broader context of implicit characterizations of complexity, even beyond circuit-based models, a new, machine-independent approach via differential equations has recently emerged and has been shown to be suitable for both discrete~\cite{BournezDurand} and continuous settings~\cite{BlancBournez22,BlancBournez23,BlancBournez24}.
In the discrete case, one can describe a function by an ordinary differential equation (ODE) it has to satisfy.
Informally, given three functions $g,h$ and $t$, a function $f$  is said to be the solution of a (generalized) ODE (a.k.a.,~initial value problem) if it satisfies
$$
\frac{\partial f(x,\tu y)}{\partial t(x)} = h\big(x, \tu y, f(x, \tu y)\big)
$$
together with some initial condition $f(0, \tu y)=g(\tu y)$.
Recall that here $\frac{\partial f(x, \tu y)}{\partial t(x)}$ stands for the derivative of $f(x, \tu y)$ along the values of $t(x)$ (the simpler case of the \emph{discrete} derivative of $f(x, \tu y)$ on the variable $x$ is simply $\frac{\partial f(x, \tu y)}{\partial x}= f(x+1,\tu y) - f(x, \tu y)$).
Intuitively, this new framework  to define functions offers \emph{three different levels of control}:
\begin{itemize}
    \item the choice of $t$, the function to derive along, influences the \emph{growth speed}, i.e.~the number of steps, or (in parallel context)  the \emph{width} of computation;   
    \item the function $h$ (and $g$) controls the \emph{growth} of intermediate values and the \emph{feasibility of iterating} elementary computation steps;
    \item constraints on calls to $f$ in $h$, e.g.~allowing ``unrestricted'' calls to $f$ or variously limiting them, influence \emph{memory} use or \emph{parallelization depth}. 
\end{itemize}
This framework appears to be significantly more flexible and expressive than standard limited recursion, imposing bounds on the value $h$. Additionally, unlike the latter, 
 the ODE approach offers a way to talk about discrete and continuous computations in a relatively similar way, often focusing on similar syntactical restrictions~\cite{BlancBournez23,BlancBournez24}.

Such an approach was initially used to establish a correspondence between functions computable in poly-time and solutions of specific ODEs,
whose defining equations are of a special (essentially) linear form and obtained by deriving along functions of specific growth rate~\cite{BournezDurand23} (Sec.~\ref{sec:ODEs}). 
In this vein, previous investigation on small circuit classes has already demonstrated a, quite striking, correspondence between simple syntactical constraints on these ODE schemas and computation performed by circuits, allowing for a fine-grained definition of the corresponding classes.
However, although different circuit classes have been considered in~\cite{ADK24a,ADK25}, the study of those that include counting gates has been developed only partially and has been so far restricted to $\TC^0$ and $\ACCt$.

\paragraph{Contributions of the paper.}
In this paper, we push this investigation further by characterizing classes allowing various forms of counting, which have never been captured via algebras (or first-order bounded arithmetic) before.
Concretely we introduce \emph{natural} and \emph{uniform} ODE-based function algebras,
for $\ACCp{n}$ (being $n\in \Nat$), the class of functions computable by polynomial-size and constant-depth circuits with counting modulo $n$ gates.
The key ingredient is an ODE-based recursion schema, whose evaluation essentially requires modulo counting.
Despite its apparent simplicity, this change in perspective allows for a significant advance over the existing literature since, to the best of our knowledge, so far implicit (recursion-theoretic) characterizations have been provided only for specific classes, namely $\ACCt$ and $\ACCp{6}$~\cite{CloteTakeuti}.
Additionally, inspired by these schemas, we introduce new bounded arithmetic theories to syntactically capture these same classes.
Again, this goes beyond what is currently known (except for $n=2$ and $n=6$).
More generally, the investigation presented here aims to clarify the connection between features of ODE schemas and resources in circuit computation. 
This mirrors insights obtained in descriptive complexity, which provides a correspondence between resources in circuits or parallel machines and expressive power of logical features (like the number of variables)~\cite{Immerman,Immerman89}, but extends to broader forms of computation, for example including (modulo) counting, which have never been fully addressed in other frameworks.
As an overall goal, this work aims to provide an original perspective for investigating core questions in (circuit) complexity, such as class separations and lower bounds.

\paragraph{Structure of the paper.}
Our presentation will proceed as follows.
In Sec.~\ref{sec:preliminaries}, we introduce notational conventions, basic notions and methodologies needed for the rest of the paper.
In Sec.~\ref{sec:algebras}, we present new, ODE-based function algebras capturing poly-size and constant-depth circuit classes, including modulo 2 (Sec.~\ref{sec:ACCt}), more general modulo (arbitrary) $n$ (Sec.~\ref{sec:ACCn}), and majority gates (Sec.~\ref{sec:TC}).
A formal comparison with the few existing characterizations is also presented. 
Moving to proof theory, in Sec.~\ref{sec:bArit}, we show the technical advantages of relying on these function algebras to deal with a known first-order bounded arithmetic for $\AC^0$ (Sec.~\ref{sec:TAC}) and present alternative original theories inspired by the ODE-based schemas to capture provably total functions in $\ACCp{n}$ (Sec.~\ref{sec:BA} and~\ref{sec:BA characterization with new rule}).
We conclude by pointing to open problems and future directions of research (Sec.~\ref{sec:conclusion}).
Due to space constraints, most proofs are omitted but some are available in the Appendix.

\section{Preliminaries}\label{sec:preliminaries}

\arxiv{
In this Section, we \arxiv{briefly} introduce the preliminary notion needed to present our new results. 
In Sec.~\ref{sec:ODEs}, we recall the basics of the ODE-based approach to complexity.
In Sec.~\ref{sec:circuit}, we recall the definition of the complexity classes we are going to characterize and the state of the art regarding recursion-theoretic characterizations known for them.
Finally, in Sec.~\ref{sec:circODE}, we summarize the main characterizations for small circuit classes presented so far.}

\arxiv{
\begin{notation}
    Throughout this paper, we will adopt the following notational conventions.
    We use $s(x)=x+1$ to denote the successor, while $s_0(x)=2x$ and $s_1(x)=2x+1$ are the binary successor functions.
    As standard, $x\#y=2^{\ell(y)}\times x+y$ is the smash function, $\div n$ represents integer division by $n\in \N^*$, i.e.~for all $x\in \mathbb{Z}$, $x\div n =\big\lfloor \frac{x}{n} \big\rfloor$ and, for $i,k\in\N$, $\pi_i^k(x_1,...,x_k)=x_i$ is the projection function.
    For $x\in \Nat^+$, $\ell(x)$ is the length function which, given an input $x$, returns its length written in binary, i.e.~$\lceil$log$_2(x+1)\rceil$ and such that (for convenience) $\ell(0)=0$; $\fun{BIT}$ is the bit function so defined that BIT$(i,x)=\big\lfloor \frac{x}{2^i}\big\rfloor$ \textsc{mod 2} returns the value of the $i^{th}$ bit in the binary representation of $x$.
    The sign function (over $\mathbb{Z}$), $\fun{sg} : \mathbb{Z} \to \mathbb{Z}$ is so defined that $\fun{sg}(x)=1$ if $x>0$ and is equal to $0$ otherwise, while $\fun{cosg}(x)$ is equal to $1-\fun{sg}(x)$.
    Finally, for $x \in \Nat^{+}$, we will use $\alpha(x)=2^x-1$ to denote the greatest integer whose length is $x$.
\end{notation}
}

\subsection{A gentle introduction to discrete ODEs (in complexity)}\label{sec:ODEs}

Difference calculus, the discrete analogue of differential calculus, is an old field of mathematics~\cite{gelfand1963calcul,jordan1965calculus}.
Recall that the \emph{discrete derivative of $f(x)$} is defined as $\Delta f(x)= f(x+1) - f(x)$ and that ODEs are expressions of the form:
$ 
\frac{\partial f(x,\tu y)}{\partial x} = h\big(x, \tu y, f(x, \tu y)\big)
$ 
where $\frac{\partial f(x, \tu y)}{\partial x}$ stands for the derivative of $f(x, \tu y)$ considered as a function of $x$ for a fixed value of $\tu y$.
When some initial value $f(0, \tu y)=g(\tu y)$ is added, this form an Initial Value Problem (IVP).

As anticipated, this framework is shown to offer natural ways to capture primitive recursion and function classes. 
Specifically, in order to deal with complexity, this picture must be endowed with new notions, starting with that of derivation \emph{along a function}:

\begin{definition}[$\lambda$-ODE]
Let $f, \lambda:\Nat^p \to \mathbb{Z}$ and $h:\Nat^{p+1} \to \mathbb{Z}$ be functions. Then,
\begin{align}
\frac{\partial f(x,\tu y)}{\partial \lambda} = \frac{\partial f(x, \tu y)}{\partial \lambda (x, \tu y)} = h\big(x, \tu y, f(x, \tu y)\big)
\end{align}
%
is a formal synonym of
%
%
\begin{align*}
f(x + 1, \tu y) &= f(x, \tu y) + \big(\lambda (x+1, \tu y) - \lambda (x, \tu y)\big) \times h\big(x, \tu y, f(x, \tu y)\big) 
\\
 &= f(x, \tu y) +  \frac{\partial \lambda(x,\tu y)}{\partial x} \times h\big(x, \tu y, f(x, \tu y)\big).
\end{align*}
When $\lambda (x,\tu y)=\ell(x)$, we call (1) \emph{length-ODE} or $\ell$-ODE.
\end{definition}

\noindent
Intuitively, one of the key properties of the $\lambda$-ODE schema is its dependence on the number of distinct values of $\lambda$, i.e.,~the value of $f\arxiv{(x,y)}$ changes only when the value of $\lambda\arxiv{(x,y)}$ does.
In this way, the growth rate of $\lambda$ determines the number of steps needed to compute the value of $f$.
Computation of solutions of $\lambda$-ODEs has been fully described in~\cite{BournezDurand23}.
Here, we focus on the special case of $\lambda=\ell$, which is particularly interesting for our characterizations since the value of $\ell(x)$ changes, increasing by 1, only when $x$ goes from $2^t-1$ to $2^t$ (for $t\ge 1$).
Setting $h\big(\alpha(-1), \tu y, \cdot\big)= f(0, \tu y)$, it is shown by induction that
$
f(x, \tu y) = \sum^{\ell(x)-1}_{u=-1} h\big(\alpha(u), \tu y, f(\alpha(u), \tu y)\big)
$, see App.~\ref{app:ODE}.

\begin{toappendix}
\subsection{Details of Section~\ref{sec:ODEs}}\label{app:ODE}
If $f$ is the solution of (1) with $\lambda=\ell$ and has initial value $f(0, \tu y)=g(\tu y)$, then, in particular, $f(1, \tu y)=f(0, \tu y)+ \big(\ell(1)-\ell(0)\big) \times h(0, \tu y, f(0, \tu y))=f(0, \tu y)+ h(0, \tu y, f(0, \tu y))$. More generally, 
$f(x,\tu y)=f(\alpha(\ell(x)),\tu y)$ since $x$ and $\alpha(\ell(x))$ have the same length and then $f(\alpha(\ell(x)),\tu y)=
f(\alpha(\ell(x)-1), \tu y) + \big(\ell(x) - \ell(x-1)\big) \times h\big(\alpha(\ell(x)-1), \tu y, f(\alpha(\ell(x)-1), \tu y)\big)
= 
f(\alpha(\ell(x)-1), \tu y) + h\big(\alpha(\ell(x)-1), \tu y, f(\alpha(\ell(x)-1), \tu y)\big)$.
Starting from $t=x\ge 1$ and taking successive decreasing values of $t$, the first non-zero difference $\ell(t)-\ell(t-1) (\neq 0)$ is given by the biggest $t-1$ such that $\ell(t-1)=\ell(t)-1$, i.e.,~$t-1=\alpha(\ell(t)-1)$.
Setting $h\big(\alpha(-1), \tu y, \cdot\big)= f(0, \tu y)$, it follows by induction that:
$$
f(x, \tu y) = \sum^{\ell(x)-1}_{u=-1} h\big(\alpha(u), \tu y, f(\alpha(u), \tu y)\big).
$$
\end{toappendix}

Obviously, the choice of the function $h$ also influences the complexity of the computation.  
Here, we will mostly deal with (limited) $\fun{sg}$-polynomial expressions, where a \emph{(resp. limited) $\fun{sg}$-polynomial expression} is an expression over the signature $\{+,-,\div 2, \times\}$ (resp. $\{+,-,\div 2\}$) on the set of variables/terms $X= \{x_1, \dots, x_h\}$ and integer constants (see~\cite{ADK24a}). 
The degree of a $\fun{sg}$-polynomial expression~$\deg(P)$ is inductively defined so that $\deg(P)=0$ for $P$ being a constant, a variable or a signed expression of the form $\fun{sg}(Q)$, 
$\deg(\tu x, Q\div 2)=\deg(\tu x, Q)$,  $\deg(\tu x, Q*R)$ =
		$\max\{\deg(\tu x, Q),$ $\deg(\tu x, R)\}$, for $*\in \{+,-\}$, and
		 $\deg(\tu x, Q\times R)= \deg(\tu x, Q) + \deg(\tu x, R)$.
%
%
%
%
%

\begin{toappendix}
\begin{definition}[(Limited) polynomial expression, $\fun{sg}$(-limited) polynomial expression]
	A \emph{($\fun{sg}$-)limited polynomial expression} is an expression over the signature $\{+, -, \div 2\}$ (and the $\fun{sg}$ function, resp.) on the set of variables/terms $X= \{x_1, \dots, x_h\}$ and integer constants; a \emph{($\fun{sg}$-)polynomial expression} is defined analogously over the signature $\{+, -, \div 2,\times\}$.
\end{definition}

\noindent
To define the notion of degree of polynomial expressions, we follow the generalization to sets of variables introduced in~\cite{BournezDurand}.

\begin{definition}
	Given a list of variables $\tu x=x_{i_1}, \dots, x_{i_m}$, for $i_1, \dots i_m\in \{1, \dots, h\}$, the \emph{degree of a set $\tu x$ in a $\fun{sg}$-polynomial expression $P$}, $\deg(\tu x, P)$, is inductively defined as follows:
	\begin{itemize}
		\itemsep0em
		\item $\deg(\tu x, x_{i_j})=1$, for $x_{i_j}\in \{x_{i_1}, \dots, x_{i_m}\}$, and $\deg(\tu x, x_{i_j})=0$, for $x_{i_j}\not\in \{x_{i_j}, \dots, x_{i_m}\}$
		\item $\deg(\tu x, \fun{sg}(Q))=0$
		\item $\deg(\tu x, Q\div 2)=\deg(\tu x, Q)$
		\item $\deg(\tu x, Q+R)$ = $\deg(\tu x, Q-R)$ =
		$\max\{\deg(\tu x, Q),$ $\deg(\tu x, R)\}$
		\item $\deg(\tu x, Q\times R)= \deg(\tu x, Q) + \deg(\tu x, R)$.
	\end{itemize}
\end{definition}
\end{toappendix}
\noindent
A $\fun{sg}$-polynomial expression $P$ is said to be \emph{essentially constant in a set of variables $\tu x$} when $\deg(\tu x, P)=0$ and \emph{essentially linear in $\tu x$}, when $\deg(\tu x, P)=1$.
The concept of \emph{linearity} allows us to control the growth of functions defined by ODEs.

\begin{definition}[Linear $\lambda$-ODE]\label{def:ODE}
	Given $g:\Nat^p \to \Nat$, $h, \lambda:\Nat^{p+1}\to \mathbb{Z}$ 
    and a $\fun{sg}$-polynomial expression $P$, the function $f:\Nat^{p+1} \to \mathbb{Z}$ is \emph{linear $\lambda$-ODE} definable from $g,h$ and $P$ if it is the solution of the IVP with initial value $f(0, \tu y)=g(\tu y)$ and s.t.~$ 
		\frac{\partial f(x, \tu y)}{\partial \lambda} = P\big(x, \tu y, f(x, \tu y), h(x, \tu y)\big)
	$ 
	where $P$ is essentially linear in 
    $f(x, \tu y)$.
	For $\lambda=\ell$, such schema is called \emph{linear length ODE}.
\end{definition}

\noindent
Stated otherwise, if $P$ is \emph{essentially linear in $f(x,\tu y)$}, there exist $A$ and $B$, which are essentially constants in $f(x,\tu y)$, s.t.
\begin{align*}
f(0,\tu y) &= g(\tu y) \\
\frac{\partial f(x, \tu y)}{\partial \ell} &= A(x, \tu y, h(x, \tu y), f(x, \tu y)) \times f(x, \tu y)  + B(x, \tu y, h(x, \tu y), f(x, \tu y))
\end{align*}
Then, for all $x$ and $\tu y$, $f(x, \tu y)$ =
$\sum^{\ell(x)-1}_{u=-1} \Big(\prod^{\ell(x)-1}_{t=u+1} \Big(1+A\Big(\alpha(t), \tu y, h(\alpha(t), \tu y), f(\alpha(t), \tu y)\Big)\Big)\Big)\times B\big(\alpha(u), \tu y, h(\alpha(u), \tu y), f(\alpha(u), \tu y)\big)$,
with the convention that $\prod^{x-1}_x\kappa(x)=1$ and $B\big(\alpha(-1), \tu y, \cdot, \cdot \big) = f(0, \tu y)$.

One of the main results of~\cite{BournezDurand} is the implicit characterization of $\FP$ by the algebra made of basic functions $\fun{0}, \fun{1}, \pi^p_i, \ell, +, -, \times, \fun{sg}$ and closed under composition $(\circ)$ and $\ell$-ODE.

\arxiv{
\subsection{Implicit approaches to circuit complexity}\label{sec:circuit}

This Section summarizes the definitions of the relevant classes and the characterizations previously established in the literature.}
%
%
%

\begin{toappendix}
\subsection{Small circuit classes (with modulo counting gates)}\label{app:circuits}
Boolean circuits are vertex-labelled directed acyclic graphs whose nodes are either input nodes, output nodes or are labelled with a Boolean function in $\{\neg,\wedge,\vee\}$.
A Boolean circuit with modulo gates allows in addition gates labelled as \textsc{Mod 2} or, more generally, as \textsc{Mod $p$}, that output the sum of the inputs modulo 2 or $p$, resp.\footnote{In the context of this paper, it is equivalent to allowing gates returning 1 when the sum of the inputs modulo $p$ is 0 and 0 otherwise.}
Similarly, a threshold circuit includes \textsc{Maj} gates, that output 1 when the majority of their inputs are 1's.
A family of circuits is said to be $\cc{Dlogtime}$-uniform if there is a Turing machine (with a random access tape) that decides in deterministic logarithmic time the direct connection language of the circuit, i.e.,~which, given $1^n, a, b$ and $t\in \{\neg, \wedge, \vee\}$, decides if $a$ is of type $t$ and $b$ is a predecessor of $a$ in the circuit (and analogously for input/output nodes).

When dealing with circuits, the resources of interests are \emph{size}, i.e.,~the number of gates, and \emph{depth}, i.e.,~the length of the longest path from one input to the output (see~\cite{Vollmer} for more details and related results).

\begin{definition}[Classes $\cc{FAC}^i$ and $\cc{FNC}^i$]
    For $i \in \Nat$, $\cc{AC}^i$ (resp.,~$\cc{NC}^i$) is the class of languages recognized by a $\cc{Dlogtime}$-uniform family of Boolean circuits of unbounded (resp., bounded) fan-in gates of polynomial size and depth $O(($log $n)^i)$.
    We denote by $\cc{FAC}^i$ (resp., $\cc{FNC}^i)$ the corresponding function class.
\end{definition}

\noindent
It is known that unbounded (poly($n$)) fan-in can be simulated by bounded trees of depth $O$(log $n$), so that $\cc{FNC}^i\subseteq \cc{FAC}^i \subseteq \cc{FNC}^{i+1}$.
The inclusion has been proved to be strict only for $i=0$.
In contrast, when dealing with classes with counting abilities, interesting bounds have been established.

\begin{definition}[Classes $\ACCp{n}$ and $\TC$]
\begin{sloppypar}
    The class $\ACCp{n}$ (resp., $\TC$) is the class of languages recognized by a $\cc{Dlogtime}$-uniform family of Boolean circuits of polynomial size and constant depth including \textsc{Mod $p$} (resp., \textsc{Maj}) gates.
    We denote by $\ACCp{n}$ (resp., $\TC^0$) the corresponding function classes.
    Additionally $\cc{FACC^0}=\bigcup^\infty_{m=2} \ACCp{m}$.
\end{sloppypar}
\end{definition}

\noindent
As mentioned, it is a foundational result that the inclusion $\AC^0\subset \ACCp{2}$ is proper, as the parity function cannot be computed in constant depth by standard unbounded fan-in Boolean gates of polynomial size, but it can be if \textsc{Mod 2} gates are added.
More generally, it is known that for any distinct prime numbers $p,q\in \Nat$, $\ACCp{p} \neq \ACCp{q}$, and if $p'=p\times q$ then $\ACCp{p} \subset \ACCp{p'}$.
Furthermore, since any \textsc{Mod $p$} gate can be simulated by \textsc{Maj} in constant depth, $\ACCp{p} \subset \cc{FACC^0} \subseteq \TC^0$ (see~\cite{DBLP:conf/stoc/Smolensky87}).
\end{toappendix}

\subsection{Function algebras for small circuit classes}\label{sec:circODE}

\arxiv{\subsection{Function algebras for small circuit classes}}


\noindent
\textbf{Circuit complexity through the lens of bounded recursion.}
We assume that the reader is familiar with the standard notions of small circuit classes (see~\cite{Vollmer}).
Here we briefly recall main recursion theoretic characterizations offered in the literature.
To the best of our knowledge, when considering small circuit classes including gates counting modulo $n\in \Nat$, only a few implicit characterizations have been provided so far: function algebras and first-order bounded arithmetics by Clote and Takeuti~\cite{CloteTakeuti}, a second-order family of theories by Cook and Nguyen~\cite{CookNguyen} and Johannsen's theory for 
$\ACCt$. 
Since we will often compare our results with those in~\cite{Clote1990,CloteTakeuti}, we briefly recall them.
%

\begin{definition}[CRN, $k$-BRN] 
Let $i\in\{0,1\}$, $g:\Nat^p\to \Nat$ and $h_i:\Nat^{p+2}\to \Nat$. A function $f:\Nat^{p+1}\to \Nat$ is said to be defined from $g,h_0$ and $h_1$ by
\begin{itemize}
    \item \emph{Concatenation Recursion on Notation} \emph{(CRN)} if for all $x$ and $\tu y$:
    $f(0,\tu y) = g(\tu y)$ and $f(s_i(x), \tu y) = s_{h_i(x,\tu y)}(f(x,\tu y))$, where $h_i$ takes values in $\{0,1\}$
    \item \emph{$k$-Bounded Recursion on Notation} $(k$\emph{-BRN)} if for all $x$ and $\tu y$: $f(0, \tu y) = g(\tu y)$ and $f(s_i(x), \tu y) = h_i(x,\tu y, f(x, \tu y))$, with $f(x,\tu y) \leq k$ for a constant $k$.
\end{itemize}
\end{definition}


In~\cite{Clote1990}, it is proved that  the function algebra $\mathcal{A}_0=[0, \pi^p_i, s_0, s_1, \ell, \text{BIT}, \#; \circ, \text{CRN}]$ captures $\AC^0$. Enriching $\mathcal{A}_0$ with 1-BRN gives a characterization for $\ACCt$, with 2-BRN captures $\ACCp{6}$ and for $k$-BRN, with $k\ge 4$, characterizes $\NC^1$~\cite[Th.~2.2, 2.5]{CloteTakeuti}. 
Stated otherwise, by restricting the range of the schema characterizing $\NC^1$, i.e.~$k$-BRN, two specific counting classes are captured but no generalization to $\ACCp{n}$ for $n\not\in \{2,6\}$ can be expected.
%
%
A function algebra for $\TC^0$ is there defined by simply endowing $\mathcal{A}_0$ with multiplication.
%
Also,
first-order bounded theories are introduced to syntactically characterize all these classes (see Sec.~\ref{sec:bArit}).

\raus{
\begin{definition}[Concatenation Recursion on Notation]
    A function $f:\Nat^{p+1} \to \Nat$ is defined by \emph{concatenation recursion on notation} (\text{CRN}) from $g:\Nat^p \to \Nat$ and $h_0, h_1:\Nat^{p+1} \to \Nat$, if for all $x$ and $\tu y$:
    $f(0,\tu y) = g(\tu y)$ and $f(s_i(x), \tu y) = s_{h_i(x,\tu y)}(f(x,\tu y))$, with $i\in \{0,1\}$ and $h_i\in1$.
    \arxiv{
    $$
        f(0,\tu y) = g(\tu y) \quad \text{and} \quad f(s_i(x), \tu y) = s_{h_i(x,\tu y)}(f(x,\tu y))   
    $$
    with $i\in\{0,1\}$ and 
    $h_i\leq 1$.}
\end{definition}
%
%
\noindent
Relying on this schema, we define the function algebra $\mathcal{A}_0=[0, \pi^p_i, s_0, s_1, \ell, \text{BIT}, \#; \circ, \text{CRN}]$.
In~\cite{CloteTakeuti}, other algebras are introduced for constant-depth circuit classes including 
$\ACCt$, $\ACCp{6}$, and $\NC^1$ using the $k$-Bounded Recursion on Notation schema below.

\begin{definition}[$k$-Bounded Recursion on Notation]
A function $f:\Nat^{p+1} \to \Nat$ is defined by \emph{$k$-bounded recursion on notation} ($k$-BRN) from $g:\Nat^p \to \Nat$ and $h_0,h_1:\Nat^{p+1} \to \{0,\dots, k\}$ if for all $x$ and $\tu y$:
$f(0, \tu y) = g(\tu y)$ and $f(s_i(x), \tu y) = h_i(x,\tu y, f(x, \tu y))$, for $i\in\{0,1\}$.
\arxiv{
$$
f(0, \tu y) = g(\tu y) \quad \text{and} \quad f(s_i(x), \tu y) = h_i(x,\tu y, f(x, \tu y))
$$
for $i\in \{0,1\}$.}
\end{definition}

\noindent
\begin{sloppypar}
\noindent
In~\cite[Th.~2.2,2.5]{CloteTakeuti}, $\ACCt$, $\ACCp{6}$ and $\NC^1$ are characterized by the classes $[0, \pi^p_i, s_0, s_1, \ell, \text{BIT}, \#; \circ, \text{CRN}, \text{$\triangle$-BRN}]$, for $\triangle$ being 1, 2 and $k$, resp.
Clearly, the former two are defined by simply restricting the range of $k$ in $k$-BRN, the schema characterizing $\NC^1$.
No characterization is provided for $\ACCp{n}$ with $n\not\in \{2,6\}$.
A function algebra for $\TC^0$ is there defined by simply endowing $\mathcal{A_0}$ with multiplication: $\TC^0 =[0, \pi^p_i, s_0, s_1, \ell, \text{BIT}, \#, \times; \circ, \text{CRN}]$.
In the same paper, Clote and Takeuti introduced first-order bounded theories to syntactically characterize all these classes (see, Sec.~\ref{sec:bArit}).
\end{sloppypar}
} 

\arxiv{
In~\cite[Th. 2.2, 2.5]{CloteTakeuti} the following characterizations are provided:
\begin{align*}
    \ACCt &= [0, \pi^p_i, s_0, s_1, \ell, \text{BIT}, \#; \circ, \text{CRN}, \text{1-BRN}] \\
    \ACCp{6} &= [0, \pi^p_i, s_0, s_1, \ell, \text{BIT}, \#; \circ, \text{CRN}, \text{2-BRN}] \\
    \NC^1 &= [0, \pi^p_i, s_0, s_1, \ell, \text{BIT}, \#; \circ, \text{CRN}, k\text{-BRN}].
\end{align*}
Notably, $\ACCt$ and $\ACCp{6}$ are defined by simply restricting the range of $k$ in $k$-BRN, the schema characterizing $\NC^1$.}


\bigskip
\noindent
\textbf{Circuit complexity through the lens of ODEs.}
We present an overview of the characterizations of small circuit classes (without modulo counting gates) obtained relying on\arxiv{different, strict and non-strict,} ($\ell$-)ODE schemas.
Although different classes have been considered in~\cite{ADK24a,ADK25}, investigation of those that include counting gates has been developed only partially. 
In particular, for counting modulo 2, an algebra was  defined in terms of non-strict schemas, which, as we shall see, captures 
constant-depth computation somewhat indirectly.

\begin{notation}[Strict and simple schemas]
    An equation or, by extension, an ODE schema is said to be \emph{strict} when it does not include any call to $f$ \arxiv{$f(x,\tu y)$ }in $A$ and $B$ (not even under the scope of the sign function).
    %
    %
    For the sake of clarity, we will use $k(x,\tu y)$ (resp., $K(x, \tu y, h(x,\tu y), f(x,\tu y)))$ for functions (resp. expressions) with restricted range, typically in $\{0,1\}$.
    \arxiv{For the sake of clarity, we will use $k(x,\tu y)$ for functions with restricted range, typically in $\{0,1\}$, and, analogously, $K(x,\tu y, h(x, \tu y), f(x, \tu y))$ for (possibly limited) $\fun{sg}$-polynomial expressions of restricted range.}
\end{notation}

\noindent
If not stated otherwise, 
\arxiv{we assume that} 
the expressions used in schemas characterizing $\AC^0$ and $\ACCp{n}$ are \emph{limited} $\fun{sg}$-polynomial expressions (multiplication is not computable in these classes).

\begin{notation}
    Let $\mathscr{B}_0$ denote the set of basic functions $\fun{0}, \fun{1}, \fun{sg}, \ell, +, -, \div 2, \#, \fun{BIT}, \pi^p_i$.
\end{notation}

\arxiv{
\begin{notation}
    In what follows we will use $\mathscr{B}_0$ to denote the set of basic functions made of $\fun{0}, \fun{1}, \fun{sg}, \ell, +, -, \div 2, \#, \fun{BIT}, \pi^p_i$.
\end{notation}
}

The function algebra capturing $\AC^0$ is obtained by 
the strict 
schema below~\cite{ADK24a,ADK25}.

\begin{definition}[$\ell$-ODE$_1$ schema]
Given $g:\Nat^p \to \Nat$ and $k:\Nat^{p+1} \to \{0,1\}$, the function $f:\Nat^{p+1} \to \Nat$ is obtained by $\ell$-\emph{ODE}$_1$ from $g$ and $k$, if it is the solution of the IVP with initial value $f(0,\tu y)=g(\tu y)$ and s.t.~$ 
\frac{\partial f(x,\tu y)}{\partial \ell} = f(x,\tu y) + k(x,\tu y).
$
\end{definition}

\noindent
Computation through $\ell$-ODE$_1$ intuitively corresponds to iteratively left-shifting (the binary representation) of a  number, each time possibly adding 1 in the last position.
%
Due to it, the algebra $\ACDL = [\mathscr{B}_0; \circ, \ell$-ODE$_1]$
is introduced and proved to correspond \arxiv{precisely}to \arxiv{the class}$\AC^0$.\arxiv{\footnote{Actually, in~\cite{ADK24a}, $\ACDL$ is defined relying on a second ODE schema and without taking $\fun{BIT}$ as a basic function of the algebra; in~\cite{ADK25} this definition and the one above are proved equivalent.}.}

\begin{remark}[Bounded quantification and minimization]\label{remark:minimisation}
Sharply bounded quantification and minimization can be naturally rewritten in $\ACDL$.
%
Let $R\subseteq \Nat^{p+1}$ and $h_R$ be its characteristic function.
    For all $x$ and $\tu y$, it holds that $(\exists z\leq \ell(x)) R(z, \tu y)=\fun{sg}(f_\exists(x,\tu y))$, where $f_\exists$ is defined as the solution of the  IVP with initial value $f_\exists(0, \tu y)=h_{R_0}(\tu y)$ (being $h_{R_0}(\tu y) = 0$ if $R(0,\tu y)=0$ and $h_{R_0}(\tu y) =1$ if $R(0,\tu y)=1$) and such that:
    $$
        \frac{\partial f_\exists(x, \tu y)}{\partial \ell}= f_\exists(x, \tu y) + h_{R}(\ell(x)+1, \tu y).
    $$
    Clearly, this is an instance of $\ell$-ODE$_1$.
    Intuitively, $f_\exists (x,\tu y)\neq 0$ when, for some $z$ smaller than $\ell(x)$, $R(z,\tu y)$ is satisfied (i.e.,~$h_R(z,\tu y)=1$): if such instance exists, our bounded search ends with a positive answer. 
    
%
    In order to express universally sharply bounded quantification in $\ACDL$ we consider $(\forall z\leq \ell(x))R(z,\tu y)=\fun{cosg}(f_\forall(x,\tu y))$ for $f$ such that:
    \begin{align*}
        f(0, \tu y) &= \fun{cosg}(h_{R_0}(\tu y)) \\
        \frac{\partial f_\forall(x,\tu y)}{\partial \ell} &= f_\forall(x,\tu y) + \fun{cosg}(h_R(\ell(x)+1), \tu y)
    \end{align*}

    Finally, the sharply bounded $\mu$-operator, $\mu x.R(x,\tu y)$, returning the smallest $z\leq \ell(x)$ such that $R(z,\tu y)$ holds, can be expressed in $\ACDL$ by considering $\fun{mu}_R(x,\tu y) = \ell(x) - \ell(f_\mu(x,\tu y))$, where $f_\mu$ is the solution of the IVP with initial value $f_\mu(0,\tu y)=0$ and such that:
    $$
    \frac{\partial f_\mu(x,\tu y)}{\partial \ell} = f_\mu(x,\tu y) + \exists i \leq \ell(x+1) \big(h_R(i, \tu y) = 1).
    $$
    Since, as seen, existentially sharply bounded search can be expressed in $\ACDL$, this is an instance of $\ell$-ODE$_1$.
\end{remark}

\begin{toappendix}
The following schema generalizes $\ell$-ODE$_1$ by allowing for a more liberal left-shifting operation which, in turn, enables for encoding of sequences of numbers (of bounded length), while remaining in $\ACDL$.

\begin{definition}[$\ODEup$ schema] \label{df:upODE}
    Let $g:\Nat^p \to \Nat, k:\Nat^{p+1} \to \{0,1\}$ and $h:\Nat^p \to \Nat$, such that $h(\tu y) \ge \max(1,k(x,\tu y))$. The function $f: \Nat^{p+1}\to \Nat$ is defined by $\ODEup$ from them if it is the solution of the IVP s.t.~$f(0, \tu y)=g(\tu y)$ and $\frac{\partial f(x,\tu y)}{\partial \ell} =\big(2^{\ell(h(\tu y))}-1\big) \times f(x, \tu y) + h'(x,\tu y).$
    \arxiv{such that:
    $$ 
    \frac{\partial f(x,\tu y)}{\partial \ell} =\big(2^{\ell(h(\tu y))}-1\big) \times f(x, \tu y) + h'(x,\tu y).
    $$}
\end{definition}

%


\begin{proposition}\label{prop:ODEup}
    If $f$ is defined by $\ODEup$ from functions \arxiv{computable}in $\AC^0$, then $f$ is\arxiv{computable} in $\AC^0$.
\end{proposition}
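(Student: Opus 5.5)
We first compute the solution of $\ODEup$ in closed form and then show that each bit of this closed form is $\AC^0$-computable. Write $L=\ell(h(\tu y))$ and let $k$ be the $\{0,1\}$-valued function $h'$ of the schema. Since $1+A = 2^{L}$ does not depend on $x$ or $f$, the general formula for linear $\ell$-ODEs from Sec.~\ref{sec:ODEs} gives
\[
f(x,\tu y)=2^{L\cdot\ell(x)}\, g(\tu y)+\sum_{u=0}^{\ell(x)-1} 2^{L(\ell(x)-1-u)}\, k(\alpha(u),\tu y).
\]
One can also check this directly by induction on $\ell(x)$: at each jump point $\alpha(t)\to\alpha(t)+1$ the value is multiplied by $2^{L}$ and $k(\alpha(t),\tu y)$ is added.

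Since $k\in\{0,1\}$ and $L\ge 1$ (because $h(\tu y)\ge 1$), the summands occupy disjoint blocks of $L$ bits. So $f(x,\tu y)$ is the concatenation of the binary word of $g(\tu y)$ followed by $\ell(x)$ blocks of length $L$. Block $u$, counted from the left, is $0^{L-1}k(\alpha(u),\tu y)$. In particular, $\ell(f)\le \ell(g)+L\cdot\ell(x)$, which is polynomial in the input length, so the output has polynomially many bits.

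The circuit then computes bit $i$ of $f(x,\tu y)$ as follows. For $i< L\ell(x)$, write $i=qL+r$ with $0\le r<L$. The bit is $0$ unless $r=0$, in which case it equals $k(\alpha(\ell(x)-1-q),\tu y)$. For $i\ge L\ell(x)$, it equals $\fun{BIT}(i-L\ell(x),g(\tu y))$. All the arithmetic involved is on numbers of logarithmic size, namely $i$, $L$, $\ell(x)$ and $q,r$. Products $L\cdot\ell(x)$ and divisions by $L$ of such small numbers are in $\AC^0$: for example, $q$ and $r$ can be obtained by a sharply bounded search over $q\le \ell(x)$ for the unique $q$ with $qL\le i<(q+1)L$, and $qL$ itself is checkable by iterated addition of log-length numbers, which is in $\AC^0$ since the inputs are only $O(\log n)$ bits long.

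Next, $\alpha(u)=2^u-1$ is $\AC^0$-computable for $u\le \ell(x)$, since it is the string $1^u$. Hence all $\ell(x)$ values $k(\alpha(u),\tu y)$ can be computed in parallel by polynomially many copies of the $\AC^0$ circuit for $k$. We likewise use one copy each for $g$ and $h$. The output bits are then assembled by a constant-depth selection: a big OR over $q$ of [$q$ is the right quotient] $\wedge$ $k$-value, and similarly for the bits of $g$. Composing constant-depth polynomial-size circuits keeps us in $\AC^0$, with Dlogtime uniformity inherited from the components.

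The main obstacle is the index bookkeeping: computing $q=i\div L$ and the position shift $L\ell(x)$ inside $\AC^0$, and doing so uniformly. I would handle this by the sharply bounded search of Remark~\ref{remark:minimisation}, exploiting that all indices are logarithmically small. As an alternative, the result could be shown within the algebra $\ACDL$ directly: the concatenation $g\cdot\prod_u(0^{L-1}k_u)$ can be built with $\#$, $\fun{BIT}$ and $\ell$-ODE$_1$-style bit definitions, using that $\ACDL=\AC^0$.
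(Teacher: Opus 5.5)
Your proof is correct and follows the paper's route. Like the paper, you take the closed form of the linear $\ell$-ODE solution, note that the only multiplications are by powers of $2^{\ell(h(\tu y))}$, so the summands sit in disjoint $\ell(h(\tu y))$-bit blocks with no carries, and concatenate them in constant depth; the paper phrases this last step as zero-padding followed by a bitwise OR, while your per-bit index arithmetic makes the same step explicit. One small point: the paper's proof, and its later use of $\ODEup$ in Prop.~\ref{p-LIND*} with $k_1<t$, take $h'$ to be any function with $\ell(h'(x,\tu y))\le\ell(h(\tu y))$ rather than a $\{0,1\}$-valued one, and your argument covers that case unchanged if bit $i=qL+r$ is set to $\fun{BIT}(r,h'(\alpha(\ell(x)-1-q),\tu y))$.
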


\begin{proof}[Proof of Prop.~\ref{prop:ODEu}]
    By definition of $\ell$-ODE (Df.~\ref{def:ODE}), the solution of the corresponding system can be computed by a constant-depth circuit first calculating $g(\tu y)$ and, for any $t\in \{0, \dots, x\}$, $h'(t,\tu y)$'s (in constant depth for hypothesis), then left-shifting these results of respectively $2^{\ell(h(\tu y))}$ bits and, finally, concatenating them (as no carry occurs).
\end{proof}

\begin{proof}
Recall that, by definition of $\ODEup$, $h(\tu y)\ge 1$.
    By Definition of $\ell$-ODE, for every $x$ and $\tu y$, the solution of this system is of the form:
    \begin{align*}
        f(x,\tu y) &= \sum^{\ell(x)-1}_{u=-1} \bigg(\prod^{\ell(x)-1}_{t=u+1} 2^{\ell(h(\tu y))}\bigg) \times h'(\alpha(u),\tu y) \\
        &= \sum^{\ell(x)-1}_{u=-1} 2^{\ell(h(\tu y)) \times (\ell(x)-1+u)} \times h'(\alpha (u), \tu y)
    \end{align*}
    with the convention that $\prod^{x-1}_x\kappa(x)=1$ and $h'(\alpha(-1), \tu y)=f(0, \tu y)$.
    The multiplication involved here is only by a power of 2, which basically corresponds to left-shifting (which is computable in $\AC^0$).
    As desired, the function can be computed by the following unbounded fan-in Boolean circuit in constant depth:
    \begin{itemize}
    \begin{sloppypar}
        \item In parallel, compute the values of $g(\tu y)$, for each $t\in \{0,\dots, \ell(x)-1\}$, $\ell(h(\tu y)\times (\ell(x)-1-t)$ and $h'(\alpha(t), \tu y)$.
        This can be done in constant depth by hypothesis.
        \end{sloppypar}
        \item In one step, left-shift the value of $g(\tu y)$ by padding $\ell(h(\tu y))\times \ell(x)-1$ 0's on the right and, for $t\ge 0$, shift all values $h'(\alpha(t), \tu y)$ by padding $\ell(h(\tu y)) \times (\ell(x)-1-t)$ 0's  on the right  (i.e.,~by multiplying by 2$^{\ell(h(\tu y)\times \ell(x)-1-t}$) and by padding $\ell(g(\tu y))+\ell(k(\tu y))\times (u+1)-1$ 0's on the left.
        Intuitively, the fact that by definition $\ell(h(\tu y)) \ge \ell(h'(x,\tu y))$ ensures that each value $h'(x,\tu y)$ is shifted for at least its entire length.
        \item Compute, bit-by-bit the disjunction of all the values above, which corresponds to concatenate $g(\tu y)$ and $h'(t,\tu y)$ because there is no carry (since \emph{at most} one disjunct is 1).
    \end{itemize}
\end{proof}
\end{toappendix}


To capture computation with counting gates or that goes beyond constant depth, one has to introduce more general schemas.

\begin{definition}
Given $g:\Nat^p \to \Nat$ and $h:\Nat^{p+1} \to \Nat$, the function $f:\Nat^{p+1} \to \Nat$ defined 
as the solution of the IVP with initial value $f(0,\tu y)=g(\tu y)$ and such that
\begin{itemize}
    \item $ 
    \frac{\partial f(x,\tu y)}{\partial \ell} = A(x, \tu y, h(x,\tu y)) \times f(x,\tu y) + B(x,\tu y, h(x, \tu y))
    $ 
    where $A,B\ge 1$ are $\fun{sg}$-polynomial expressions, is said to be defined by \emph{$\ell$-pODE}; 
    \item $ 
    \frac{\partial f(x, \tu y)}{\partial \ell} = - f(x, \tu y) + K\big(x, \tu y, h(x, \tu y), f(x, \tu y)\big)
    $ where $K\in \{0,1\}$ is a limited $\fun{sg}$-polynomial expression and $f$
    occurs in $K$ only under the scope of the sign function, is said to be defined by \emph{$\ell$-b$_0$ODE}; 
    %
    %
    \item $ 
    \frac{\partial f(x,\tu y)}{\partial \ell} = - f(x, \tu y) + B\big(x, \tu y, h(x,\tu y), f(x, \tu y)\big)
    $ 
    , where $B\ge 0$ is a $\fun{sg}$-polynomial expression, and $f$ 
    occurs in $B$ only in expressions of the form $\fun{sg}(f-c)$,
    for a constant $c$, is said to be defined by \emph{$\ell$-bODE}. 
\end{itemize}    
\end{definition}
Notably, $\ell$-pODE is the strict counterpart of the $\ell$-ODE schema introduced in~\cite{BournezDurand23} to capture $\FP$. In contrast, $\ell$-b$_0$ODE and $\ell$-bODE are non-strict schemas, allowing for (restricted) calls to $f$. By simply endowing $\ACDL$ with these schemas, it is proved that~\cite{ADK24a,ADK25}:
\begin{align*}
    \ACCt &= [\mathscr{B}_0; \circ, \ell\text{-ODE}_1, \ell\text{-b}_0\text{ODE}] \\
    \TC^0 &=[\mathscr{B}_0; \circ, \ell\text{-pODE}] \\
    \NC^1 &= [\mathscr{B}_0; \circ, \ell\text{-pODE}, \ell\text{-bODE}].
\end{align*}

\raus{
To capture computation with counting gates or that goes beyond constant depth, one has to introduce schemas that allow more than generalized shifting. 
\arxiv{Let us consider the following schema.}

\begin{definition}[$\ell$-pODE Schema]
    Given $g:\Nat^p \to \Nat$ and $h:\Nat^{p+1} \to \Nat$, the function $f:\Nat^{p+1} \to \Nat$ is defined by $\ell$-pODE from $g$ and $h$, if it is the solution of the IVP with initial value $f(0,\tu y)=g(\tu y)$ and such that
    $ 
    \frac{\partial f(x,\tu y)}{\partial \ell} = A(x, \tu y, h(x,\tu y)) \times f(x,\tu y) + B(x,\tu y, h(x, \tu y))
    $ 
    where $A,B\ge 1$ are $\fun{sg}$-polynomial expressions.
\end{definition}

\noindent
Notably, $\ell$-pODE is the strict counterpart of the $\ell$-ODE schema introduced in~\cite{BournezDurand23} to capture $\FP$. In this context, it holds that (see~\cite{ADK25}): $\TC^0=[\mathscr{B}_0; \circ, \ell$-pODE$]$.

In~\cite{ADK25}, a non-strict schema, called $\ell$-bODE and such that $A=-1$ and $B\ge 0$, is introduced to capture $\NC^1$, and it is shown that, when it is so restricted that $B\in \{0,1\}$, the class $\ACCt$ is instead characterized.

\begin{definition}[$\ell$-b$_0$ODE and $\ell$-bODE schemas]
    Given $g:\Nat^p \to \Nat$ and $h:\Nat^{p+1} \to \Nat$, the function $f:\Nat^{p+1} \to \Nat$ is defined by $\ell$-b$_0$ODE if it is the solution of the IVP with initial value $f(0, \tu y) = g(\tu y)$ and such that:
    $ 
    \frac{\partial f(x, \tu y)}{\partial \ell} = - f(x, \tu y) + K\big(x, \tu y, h(x, \tu y), f(x, \tu y)\big)
    $ 
    where $K\in \{0,1\}$ is a limited $\fun{sg}$-polynomial expression and $f(x,\tu y)$ occurs in $K$ only under the scope of the sign function.
%
%
    The function $f$ is said to be defined by $\ell$-bODE if it is the solution of the IVP with initial value $f(0,\tu y)=g(\tu y)$ and such that:
    $ 
    \frac{\partial f(x,\tu y)}{\partial \ell} = - f(x, \tu y) + B\big(x, \tu y, h(x,\tu y), f(x, \tu y)\big)
    $ 
    where, for any $x$ and $\tu y$, $B(x,\tu y, h(x,\tu y), f(x, \tu y))\ge 0$ is a $\fun{sg}$-polynomial expression, and $f(x, \tu y) $ occurs in it only in expressions of the form $\fun{sg}(f(x,\tu y)-c)$, being $c$ a constant.
\end{definition}

\noindent
The following characterizations are provided by simply endowing $\ACDL$ with the mentioned schema: $\ACCt=[\mathscr{B}_0; \circ, \ell$-ODE$_1, \ell$-b$_0$ODE$]$ and $\NC^1=[\mathscr{B}_0; \circ, \ell$-pODE, $\ell$-bODE$]$.
} 

Characterizations for $\ACCt$ and $\NC^1$ are then provided by simply endowing $\ACDL$ with these schemas:
\begin{align*}
    \ACCt &= [\mathscr{B}_0; \circ, \ell\text{-ODE}_1, \ell\text{-b}_0\text{ODE}] \\
    \NC^1 &= [\mathscr{B}_0; \circ, \ell\text{-pODE}, \ell\text{-bODE}].
\end{align*}

\section{Characterizing small circuit classes with modulo counting gates}\label{sec:algebras}

We extend the investigation started in~\cite{ADK24a,ADK25} by providing ODE-based function algebra characterizations for all $\ACCp{n}$  classes. 
Indeed, although $\ell$-b$_0$ODE, which intuitively corresponds to 1-BRN (i.e.~to $\ACCp{2}$), offers the advantage of being related to the (obviously, non-strict) schema characterizing $\NC^1$, it is not naturally generalizable to capture any $\ACCp{n}$. Instead we introduce simple schemas that naturally mirror modulo counting, allowing us to obtain uniform characterizations for all these classes. 
%
%
%
Consequently, our proofs are entirely self-contained and offer a marked simplification over existing ones (for $\ACCp{2}$ and $\ACCp{6}$)~\cite{CloteTakeuti}.
In Sec.~\ref{sec:ACCt}, we focus on $\ACCt$ as a case study, which, in Sec.~\ref{sec:ACCn}, is generalized by introducing a family of homogeneous schemas offering the very first characterizations for all $\ACCp{n}$. 
As a byproduct, $\TC^0$ is captured in a new way (Sec.~\ref{sec:TC}).
\arxiv{
\arxiv{Specifically, we will focus on computation performed through constant-depth circuits and} 
We show that, even when involving either general or modulo $n$ counting, \emph{strict} schemas are enough to obtain uniform characterizations: by carefully choosing coefficients $A$, $B$ in the defining \emph{linear} equation of the schema we provide ODE-based uniform characterizations for 
\arxiv{The main results provided are ODE-based characterizations for }classes ranging from $\AC^0$ to $\TC^0$. \arxiv{, surprisingly, obtained by carefully choosing coefficients $A$, $B$ in the defining \emph{linear} equation of the schema.}
Additionally, they are especially natural as based on simple schemas, which do not restrict existing schemas (for logarithmic depth computation) but naturally mirror (constant-depth) modulo counting\noarxiv{; while $\ell$-b$_0$ODE, which intuitively corresponds to Clote's 1-BRN, offers the advantage of being related with the (obviously, non-strict) schema for $\NC^1$, dealing with strict schemas somewhat reflects, in the ODE setting, the idea of constant-depth computation}.}
%
%
\arxiv{In Sec.~\ref{sec:ACCn} we introduce a family of uniform schemas naturally generalizing $\ell$-2ODE and provide very first characterizations for $\ACCp{n}$ (with $n$ possibly different from 2 and 6). 
As a byproduct we also capture $\TC^0$ in a new way (Sec.~\ref{sec:TC})
All results are investigated in both uniform and non-uniform settings.}

\subsection{A new characterization for $\ACCt$ by strict schemas}\label{sec:ACCt}

\arxiv{In this Section, we present characterizations for $\ACCt$ in both the uniform (Sec.~\ref{sec:unif}) and non-uniform settings (Sec.~\ref{sec:nonUnif}).
As seen, in~\cite{ADK25}, an ODE-based characterization for $\ACCt$ was already introduced but it was based on a non-strict schema.
While $\ell$-b$_0$ODE, which intuitively corresponds to Clote's 1-BRN, offers the advantage of being related with the (obviously, non-strict) schema for $\NC^1$, dealing with strict schemas somewhat reflects, in the ODE setting, the idea of constant-depth computation.  %
Additionally, the strict schema introduced in this section would, on the one hand, naturally scale to other strict schemas corresponding to computation by circuits with \textsc{Mod $n$} gates (see Sec.~\ref{sec:ACCn}) and, on the other, inspire the design of axioms to obtain \emph{natural} bounded arithmetic syntactically characterizing the corresponding classes (see Sec.~\ref{sec:bArit}).
\subsubsection{Characterizing $\ACCt$ in the uniform setting}\label{sec:unif}}
We start by introducing a strict schema defined as a special, limited case of $\ell$-ODE.
\begin{definition}[Schema $\ell$-2ODE]\label{df:tODE}
    Let $g:\Nat^p \to \{0,1\}$ and $k:\Nat^{p+1} \to \{0,1\}$. Then $f:\Nat^{p+1} \to \Nat$ is defined by $\ell$\emph{-2ODE} from $g$ and $k$ if it is the solution of the IVP s.t.\arxiv{with initial value} $f(0, \tu y) =g(\tu y)$ and \arxiv{such that:}
    $ 
    \frac{\partial f(x,\tu y)}{\partial \ell} = -2 \times k(x,\tu y) \times f(x,\tu y) + k(x,\tu y).
    $ 
\end{definition}

\noindent
%
Clearly, if $k(x,\tu y)=0$, $\frac{\partial f(x,\tu y)}{\partial \ell} = 0$, and if $k(x,y)=1$, $\frac{\partial f(x,\tu y)}{\partial \ell}=-2f(x,\tu y)+1$. Hence,  $f(x,\tu y)$ takes values in $\{0,1\}$,
that, when $k$ is positive, switch from $0$ to $1$, and vice versa. 
%
As desired the computation through $\ell$-2ODE is in $\ACCt$.
%
\begin{proposition}\label{prop:2ODE}
    If $f$\arxiv{$f(x,\tu y)$} is defined by $\ell$-2ODE from functions in $\ACCt$, $f$ is in $\ACCt$. \arxiv{$f(x,\tu y)$ is in $\ACCt$ as well.}
\end{proposition}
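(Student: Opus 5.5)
We will unfold the recursion given by the $\ell$-ODE semantics and show that the resulting closed form is a parity of polynomially many bits, each computable in $\ACCt$. By the definition of $\lambda$-ODE with $\lambda=\ell$, $f$ changes value only at the points $\alpha(u)$ with $u=0,\dots,\ell(x)-1$. At such a step
\[
f(\alpha(u)+1,\tu y)=f(\alpha(u),\tu y)-2k(\alpha(u),\tu y)f(\alpha(u),\tu y)+k(\alpha(u),\tu y).
\]
We will first prove by induction on $\ell(x)$ that $f(x,\tu y)\in\{0,1\}$; the base case uses $g\in\{0,1\}$. In the inductive step, write $a=f(\alpha(u),\tu y)$ and $b=k(\alpha(u),\tu y)$. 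The update is then $a-2ab+b=a\oplus b$, as a check of the four cases shows. Hence
\[
f(x,\tu y)= g(\tu y)\oplus \bigoplus_{u=0}^{\ell(x)-1} k(\alpha(u),\tu y),
\]
equivalently $f(x,\tu y)=\big(g(\tu y)+\sum_{u<\ell(x)}k(\alpha(u),\tu y)\big)\bmod 2$.

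For the circuit, fix input lengths. The number of terms is $\ell(x)\le |x|$, which is at most the input length. The circuit runs the following in parallel.
\begin{itemize}
\item It computes $\ell(x)$ and each $\alpha(u)=2^u-1$, for $u<|x|$. These values are just strings of $u$ ones, so they are $\AC^0$ computable.
\item It computes $k(\alpha(u),\tu y)$ using the $\ACCt$ circuit for $k$, instantiated $|x|$ many times. This keeps the size polynomial and adds no depth beyond $k$'s.
\item It masks the $u$-th value with the bit $[u<\ell(x)]$, an $\AC^0$ comparison of small numbers.
\item It computes $g(\tu y)$ with its circuit.
\end{itemize}
A single \textsc{Mod 2} gate fed by $g(\tu y)$ and the masked bits then outputs $f(x,\tu y)$. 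The depth is constant: $\max(\mathrm{depth}(g),\mathrm{depth}(k))+O(1)$.

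The main point requiring care is uniformity. We must argue that the composed family is still $\cc{Dlogtime}$-uniform. The copies of $k$'s circuit are indexed by $u$ and fed the fixed input $\alpha(u)$ in place of $x$. So the direct connection language reduces to that of $k$ plus simple arithmetic on gate labels $(u,\text{gate})$, which is decidable in logarithmic time. Alternatively, we can avoid circuit-level uniformity arguments by appealing to closure properties of $\ACCt$. This class is closed under composition and contains $\AC^0$ functions such as $\alpha$, $\ell$ and $\fun{BIT}$, and it allows parity over a sharply bounded (length-indexed) range. Both routes are routine, so the proof is essentially the closed-form computation above.
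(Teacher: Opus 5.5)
Your proof is correct and follows essentially the paper's route: unfold the $\ell$-ODE into the parity of $g(\tu y)$ and the bits $k(\alpha(u),\tu y)$ for $u<\ell(x)$, compute these bits in parallel with the circuits for $g$ and $k$, and feed them into a single \textsc{Mod 2} gate. Your identity $a-2ab+b=a\oplus b$ reaches the closed form more directly than the paper's signed sum of products $(1-2k)$. The paper's final case split states the parity inverted, while yours is right, and your masking and uniformity remarks fill in details the paper leaves implicit.
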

\begin{proof}[Proof sketch]
    By Def.~\ref{df:tODE}, $f(x, \tu y)=\sum^{\ell(x)-1}_{u=-1}k(\alpha(u),\tu y)\times (-1)^{\# \{ t \in [u+1, \ell(x)-1]:k(\alpha(t), \tu y)=1\}}$. 
    All terms in which $k(\alpha(u), \tu y)=0$ vanish from the sum.
    \arxiv{from the sum (recall that $k(\alpha(-1), \tu y)=g(\tu y))$.}
    Let $\{u_1, \dots, u_a\}=\{t\in [-1, \ell(x)-1]: k(\alpha(t), \tu y)=1\}$. Then,
    $f(x,\tu y)=\sum^{a}_{i=0}(-1)^{a-i}$, which is 0 if $a$ is odd and 1 otherwise.
\end{proof}

\begin{toappendix}
\subsection{Proofs from Section~\ref{sec:ACCt}}

\begin{proof}[Proof of Prop.~\ref{prop:2ODE}]
    By Def.~\ref{df:tODE}, 
    $$
    f(x, \tu y) = \sum^{\ell(x)-1}_{u=-1} \bigg(\prod^{\ell(x)-1}_{t=u+1} \Big(1-2k(\alpha(t), \tu y)\Big)\bigg) \times k(\alpha(u), \tu y)
    $$
    for $k\leq 1$ and with the convention that $k(\alpha(-1), \tu y)=f(0,\tu y)$.
    This is computable in $\ACCt$.
    Indeed, it holds that:
    $$
    f(x,\tu y) = \sum^{\ell(x)-1}_{u=-1} k(\alpha(u), \tu y) \times (-1)^{\sharp \{t \in [u+1, \ell(x)-1]:k(\alpha(t),\tu y)=1\}}.
    $$
    All terms above in which $k(\alpha(u),\tu y)=0$ vanish from the sum (recall that $k(\alpha(-1), \tu y)=g(\tu y)$).
    Let $\{u_1, \dots, u_a\} = \{t\in [-1, \ell(x)-1] : k(\alpha(t), \tu y)=1\}$.
    Then,
    $$
    f(x,\tu y) = \sum^a_{i=0}(-1)^{a-i} = \sum_{i=0}^a (-1)^i = \begin{cases}
        0 \quad &\text{if } a \text{ is odd} \\
        1 \quad &\text{if } a \text{ is even.}
    \end{cases}
    $$
    In terms of (constant-depth) circuit, to compute $f(x,\tu y)$ we first compute all values $g(\tu y)$ and $k(\alpha(u),\tu y)$ for $u\in \{0,\dots, \ell(x)-1\}$.
    Then, we count the parity of those who are equal to 1 by a \textsc{Mod 2} gate.
    Since $g$ and $k$ are in $\ACCt$ by hypothesis, the whole computation is also in $\ACCt$. 
\end{proof}
\end{toappendix}


\noindent
Remarkably, this schema also offers a natural tool to compute parity, for example, by checking whether the number of 1s in the binary representation of its input is even or odd. This function is especially crucial, as it provides one of the few known separation results between small circuit classes (namely, separating $\AC^0$ and $\ACCt$).

\begin{remark}[Counting modulo 2]\label{remark:tODE}
    Computation performed by $\textsc{Mod 2}$ gates can be simulated via $\ell$-2ODE by considering the special case of $g(x, y) =0$ and $k(x,\tu y) =\fun{BIT}(\ell(x), y)$, i.e. by considering the solution of the IVP below:
    \begin{align*}
        \fun{cmod2}(0, y) &= 0 \\
        \frac{\partial \fun{cmod2}(x, y)}{\partial \ell} &= -2\fun{BIT}(\ell(x), y) \times \fun{cmod2}(x, y) + \fun{BIT}(\ell(x),  y). 
    \end{align*}
    Then, the count of input bits modulo 2 is obtained by computing $\fun{cmod2}(x,x)$.
\end{remark}

\begin{example}
    Let $n=2$, so that $z=z_1z_0$. Then,
    \begin{align*}
        \fun{cmod2}(0, z) &= 0 \\
        \fun{cmod2}(1,z) &= \fun{cmod2}(0,z) - 2\fun{BIT}(0,z) \times \fun{cmod2}(0,z) + \fun{BIT}(0,z)= \fun{BIT}(0,z) = z_0 \\
    %
        \fun{cmod2}(2,z) &= \fun{cmod2}(1,z) - 2\fun{BIT}(1,z) \times \fun{cmod2}(1,z) + \fun{BIT}(1,z) = z_0 - 2z_0z_1 + z_1 \\
        &= \begin{cases}
            0 \quad &\text{if } z_0=z_1 \\
            1 \quad &\text{otherwise.}
        \end{cases}
    \end{align*}
\end{example}

\noindent
Therefore, to obtain full characterization of $\ACCt$ we simply endow $\ACDL$ with the $\ell$-2ODE schema, capturing \textsc{Mod 2} gates.
\begin{theorem}\label{theorem:ACCt}
    $\ACCt = [\mathscr{B}_0; \circ, \ell\emph{-ODE}_1, \ell\emph{-2ODE}]$.
\end{theorem}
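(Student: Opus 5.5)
The plan is to prove the two inclusions separately, relying throughout on the known characterization $\AC^0 = \ACDL = [\mathscr{B}_0; \circ, \ell\text{-ODE}_1]$ from~\cite{ADK24a,ADK25}.

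For the inclusion $[\mathscr{B}_0; \circ, \ell\text{-ODE}_1, \ell\text{-2ODE}] \subseteq \ACCt$, I would induct on the construction of functions in the algebra. The basic functions of $\mathscr{B}_0$ are in $\AC^0 \subseteq \ACCt$. Closure under composition holds because constant-depth, polynomial-size circuits compose: depths add and sizes stay polynomial, since all outputs have polynomially bounded length. Closure under $\ell$-ODE$_1$ follows from the $\AC^0$ case, where the same left-shift-and-concatenate circuit works with $\ACCt$ subcircuits for $g$ and $k$. Closure under $\ell$-2ODE is exactly Prop.~\ref{prop:2ODE}. The one point to check is uniformity: the circuit built in Prop.~\ref{prop:2ODE} is a single \textsc{Mod 2} gate fed by $\ell(x)+1$ copies of the circuits for $k(\alpha(u),\tu y)$ and $g(\tu y)$. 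Its direct connection language is $\cc{Dlogtime}$-decidable whenever those of $g$ and $k$ are, because $\alpha(u)$ is trivially described.

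For the converse inclusion $\ACCt \subseteq [\mathscr{B}_0; \circ, \ell\text{-ODE}_1, \ell\text{-2ODE}]$, I would follow the route of the $\AC^0$ proof. A function in $\ACCt$ is polynomially bounded, and its bits are given by a $\cc{Dlogtime}$-uniform family of constant-depth circuits with $\wedge,\vee,\neg$ and \textsc{Mod 2} gates. First, I would reconstruct $f$ from its bit-graph $\fun{BIT}(i, f(\tu x))$ by CRN-like concatenation. This is an instance of $\ell$-ODE$_1$, using $\#$ to obtain a polynomial length bound, as in $\ACDL$. So it suffices to show that the characteristic function of each \emph{bit language} belongs to the algebra. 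Second, I would pass to a logical description: uniform $\ACCt$ languages are exactly those definable in $\mathrm{FO}[+,\times]$ extended with modular counting quantifiers $\exists^{\equiv 0 \bmod 2}$ (Barrington–Immerman–Straubing). The task then reduces to showing, by induction on formulas, that each formula with variables ranging over positions $\le \ell(x)$ defines a function in the algebra. Atomic formulas and the $\fun{BIT}$ predicate are handled by $\mathscr{B}_0$, Boolean connectives by $\fun{sg}$ and arithmetic, and ordinary first-order quantifiers by the sharply bounded quantification of Remark~\ref{remark:minimisation}. The new case, the parity quantifier $\exists^{\equiv 0\bmod 2} z\, R(z,\tu y)$, is handled by the $\ell$-2ODE instance with $g(\tu y)=h_R(0,\tu y)$ and $k(x,\tu y)=h_R(\ell(x)+1,\tu y)$. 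Exactly as in Remark~\ref{remark:tODE}, this yields the parity of the number of witnesses $z \le \ell(x)$, and composing with $\fun{cosg}$ or $1-\cdot$ gives the desired truth value.

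The main obstacle is this second direction, and specifically the appeal to the logical characterization. The identification of $\cc{Dlogtime}$-uniform $\ACCt$ with $\mathrm{FO}[+,\times]$ plus mod-2 quantifiers must be stated carefully for inputs encoded as integers, with quantified variables ranging over positions $< \ell(x)$ rather than over $x$. I also need $+$ and $\times$ on positions, i.e.\ on numbers of size $O(\log n)$, to be available in the algebra. My first attempt would be to reuse verbatim the encoding lemmas from the $\AC^0=\ACDL$ proof of~\cite{ADK24a,ADK25}, which already cover $\mathrm{FO}[+,\times]$ over positions. The only addition is the parity quantifier case above, which is what makes the proof short. As a fallback, I would simulate the circuit directly: I would evaluate gate by gate, indexing gates via the uniformity machine, with $\ell$-2ODE evaluating \textsc{Mod 2} gates over the predecessor relation. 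This route is heavier, because uniformity must then be encoded by hand.
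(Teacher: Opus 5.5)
Your proposal is correct and follows the paper's route. Membership of the algebra in $\ACCt$ comes from the $\ACDL$ closure results plus Prop.~\ref{prop:2ODE}. For the converse, $\ACDL$ covers the $\AC^0$ part and $\ell$-2ODE simulates \textsc{Mod 2} gates as in Remark~\ref{remark:tODE}, with the uniform gluing left to a citation of Clote--Takeuti. Your induction over $\mathrm{FO}[+,\times]$ formulas with parity quantifiers spells out that gluing step, which the paper does not write down, and your fallback is exactly the gate-by-gate $\fun{Eval}$ construction the paper gives for the non-uniform case (Prop.~\ref{prop:nonUnif}).
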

\begin{proof}[Proof Sketch]
    $(\supseteq)$ As in~\cite{ADK24a} for functions and schemas in $\ACDL$, plus Prop.~\ref{prop:2ODE} for $\ell$-2ODE.
    $(\subseteq)$ By capturing computation by circuits defining $\ACCt$ via $\ell$-2ODE (Remark~\ref{remark:tODE}). 
\end{proof}

\begin{toappendix}
\begin{proof}[Proof of Theorem~\ref{theorem:ACCt}]
    $(\subseteq)$ For all functions and schemas in $\ACDL$ the proof is analogous to the one presented in~\cite{ADK24a}, while the closure of $\ACCt$ under $\ell$-2ODE has been established in Prop.~\ref{prop:2ODE}.
    $(\supseteq)$ Recall that computation through constant-depth, polynomial-size circuits is already captured by $\ACDL$.
    Moreover, by Remark~\ref{remark:tODE}, $\ell$-2ODE allows to capture computation performed by \textsc{Mod 2} gates, the only gates to be added to $\AC^0$ circuits to obtain $\ACCt$.
    This is enough to conclude our proof (see, e.g.,~\cite[Th. 2.2]{CloteTakeuti}).
\end{proof}
\end{toappendix}

\noindent
%
The relationship between strict $\ell$-2ODE and non-strict $\ell$-b$_0$ODE is clarified by the remark below, which, due to closure of $\ell$-b$_0$ODE under $\ACCt$ computation~\cite[Th. 19]{ADK25}, offers an alternative proof of Prop.~\ref{prop:2ODE}.

\begin{remark}[$\ell$-2ODE vs.~$\ell$-b$_0$ODE]\label{remark:2ODEbODE}
    The schema $\ell$-2ODE can be seen as a special case of $\ell$-b$_0$ODE s.t.~$K(x,\tu y, h_k(x,\tu y), f(x,\tu y))=\fun{cosg}(f(x,\tu y))\times h_k(x,\tu y)+\fun{sg}(f(x,\tu y))\times \fun{cosg}(h_k(x,\tu y))$.
\end{remark}

\begin{toappendix}
\begin{proof}[Proof of Remark~\ref{remark:2ODEbODE}]
\begin{sloppypar}
    The schema $\ell$-2ODE can be seen as 
    a special case of $\ell$-b$_0$ODE such that $K(x,\tu y, h_k(x,\tu y), f(x,\tu y))=\fun{cosg}(f(x,\tu y))\times h_k(x,\tu y)+\fun{sg}(f(x,\tu y))\times \fun{cosg}(h_k(x,\tu y))$, i.e.,~a function $f$ defined by $\ell$-2ODE from $g_k$ and $h_k$ can be rewritten as an instance of $\ell$-b$_0$ODE with initial value $f(0,\tu y)=g_k(\tu y)$ and such that:
    \begin{align*}
        \frac{\partial f(x,\tu y)}{\partial \ell} = -f(x,\tu y) + \fun{cosg}(f(x,\tu y)) \times h_k(x,\tu y) 
        + \fun{sg}(f(x,\tu y)) \times \fun{cosg}(h_k(x,\tu y)).
    \end{align*}
    Clearly, since it is easily provable that $f(x,\tu y)$ returns only values in $\{0,1\}$, this can be rephrased as follows:
    \small
    \begin{align*}
    f(x,\tu y) &= f(\ell(x)-1) - f(\ell(x)-1) + \begin{cases}
         \fun{cosg}(h_k(\alpha(x)-1)) \; &\text{if } \fun{sg}(f(\alpha(\ell(x)-1))=1 \\
        h_k(\ell(x)-1) \; &\text{if } \fun{cosg}(f(\ell(x)-1))=1
    \end{cases} \\
    &= \begin{cases}
        \fun{cosg}(h_k(x,\tu y)) \quad \quad \quad \quad \quad \quad \quad \quad \quad \quad \quad \quad \quad \quad   \; \; &\text{if } f(\ell(x)-1)  =1 \\
        h_k(x,\tu y) \quad &\text{if } f(\ell(x)-1) =0
    \end{cases} \\
    &= \begin{cases}
        0 \quad \quad \quad \quad \quad \quad \quad \quad \quad \quad \quad \quad  \quad \quad \quad \quad \quad \quad \quad \quad \;
        &\text{if } h_k(\ell(x)-1) = f(\ell(x)-1)  \\
        1 \quad &\text{if } h_k(\ell(x)-1) 
        \neq f(\ell(x)-1) 
    \end{cases} \\
    &= f(\alpha(\ell(x)-1), \tu y)  - 2 f(\alpha(\ell(x)-1), \tu y) \times h_k(\ell(x)-1) + h_k(\ell(x)-1).
    \end{align*}
    \normalsize
    where $f(\ell(x)-1)$ is a shorthand for $f(\alpha(\ell(x)-1), \tu y)$ and similarly $h_k(\ell(x)-1)$ for $h_k(\alpha(\ell(x)-1),\tu y)$.
    Due to completeness via $\ell$-b$_0$ODE~\cite[Th.~19]{ADK25}, this simple rewriting would offer an alternative proof of Prop.~\ref{prop:2ODE}.
\end{sloppypar}
\end{proof}
\end{toappendix}

\noindent
More general connections can be established between the 1-BRN, $\ell$-b$_0$ODE and $\ell$-2ODE schemas (see Remark~\ref{remark:2ODEvs1BRN} in App.~\ref{app:ACCp} for the former two and giving a third, indirect, proof of Th.~\ref{theorem:ACCt}).
However, using $\ell$-2ODE offers the advantage of a very straightforward proof that does not rely on results in group theory or convoluted combinatorial arguments.  Additionally, its close link with modulo counting resources is what enables us to scale the characterization across all classes; unlike $k$-BRN which jumps to $\NC^1$ already for $k=4$. 

\raus{
While alternative known schemas to capture this class, namely,~1-BRN~\cite{CloteTakeuti} and $\ell$-b$_0$ODE~\cite{ADK25}, can be put in connection to ours, this characterization through $\ell$-2ODE offers the advantage of a very straightforward proof; in contrast, equivalent characterizations proposed to date either rely on results in group theory or present rather convoluted combinatorial arguments.
The fact that our strict recursive schema is more closely related to constant-depth (counting) circuit  resources is what enables us to scale the characterization across all classes; this is a result that previous ``ad hoc'' schemas, restricting recursion originally introduced for log-depth circuits, have failed to deliver:
%
%
%
differently from $\ell$-2ODE, 1-BRN (and 2-BRN) are in some sense unnatural to capture constant-depth circuit computation directly.
    It can be shown that 1-BRN can be rewritten as an instance of $\ell$-2ODE (see Remark~\ref{remark:2ODEvs1BRN} in App.~\ref{app:FACCt}), obtaining a third (indirect) proof of Th.~\ref{theorem:ACCt}.
} 

\begin{toappendix}

The connection between 1-BRN and $\ell$-2ODE is clarified by the following remark, showing that the former can be rewritten as an instance of the latter.

\begin{remark}[1-BRN vs.~$\ell$-2ODE]\label{remark:2ODEvs1BRN}
    Recall that a function $f(x,\tu y)$ is defined by 1-BRN from $g$ and $h$ if it such that
    $
    f(0, \tu y) = g(\tu y)$
    and 
    $f(x,\tu y) = h\big(x,\tu y, f\big(\big\lfloor \frac{x}{2}\big\rfloor, \tu y\big)\big)
    $
    where, for any $x,\tu y$, $f(x, \tu y) \leq 1$ and $i\in \{0,1\}$.
    Since $f(x, \tu y)$ takes only values 0 or 1, $f(x, \tu y) = f\big(\big\lfloor \frac{x}{2}\big\rfloor, \tu y\big) \times \big(h(x, \tu y, 1)- h(x,\tu y,0)\big) +  h(x, \tu y, 0)$.
    This can be rewritten as the solution of an IVP defined by $\ell$-2ODE from the function $g$ above and
    $$
    k(x, \tu y) = \begin{cases}
        0 \quad &\text{if } h_0(x) \neq h_1(x) \text{ and } h_0(x) = 0 \\\
        &\text{or } h_0(x) = h_1(x), \fun{changes}(x)= i 
        \text{ and } h_0(x)=i \\
        1 \quad &\text{otherwise}
    \end{cases}
    $$
    \begin{sloppypar}
    \noindent where $h_i(x)$ is a shorthand for $h(x, \tu y, i)$ and $\fun{changes}(x)=i$ abbreviates $\mu.z \in \{x-1,\dots, 0\} (h_0(z)=h_1(z))=p \wedge \fun{cmod2}\big(\sum^{x-1}_{j=p} h_0(j) \neq h_0(j+1)\big)=i$, intuitively returning the number of times such that $h_0(t)$ (and $h_1(t)$) changes with respect to the preceding value $h_0(t-1)$ between $x$ (for which $h_0(x)=h_1(x)$) and the closest value $x'$ such that $h_0(x')\neq h_1(x')$.
    \arxiv{For further details, see Appendix~\ref{app:FACCt} below.}
    \end{sloppypar}
\end{remark}

\noindent
In~\cite{ADK25}, it is shown that endowing $\mathcal{A}_0$ (i.e.,~$\ACDL$) with $\ell$-b$_0$ODE is enough to capture $\ACCt$.
Since all the functions needed to rewrite 1-BRN using $\ell$-2ODE are in the algebra introduced in Th.~\ref{theorem:ACCt}, Remark~\ref{remark:2ODEvs1BRN} provides a third, alternative (indirect~\cite{CloteTakeuti}) proof of completeness for the desired characterization.

\arxiv{
\subsubsection{Relating $\ell$-2ODE, 1-BRN and $\ell$-b$_0$ODE}\label{app:FACCt}
(1-BRN $\Longrightarrow$ $\ell$-2ODE)
If $h_0(x) \neq h_1(x)$ and $h_0(x)=0$ ($h_1=1$), the value of $f(x,\tu y)$ does not change with respect to the previous one, i.e.~$f(x,\tu y)=f\big(\big\lfloor \frac{x}{2}\big\rfloor, \tu y\big)$.
Accordingly, in this case $k(x,\tu y)=0$.
Similarly, if $h_0(x)\neq h_1(x)$ and $h_0(x)=1$ (so $h_1(x)=0$), then the value of $f(x,\tu y)$ must change with respect to the previous one, i.e.,~$f(x,\tu y)=1-f\big(\big\lfloor \frac{x}{2}\big\rfloor, \tu y\big)$.
In this case, $k(x,\tu y)=1$, so that
$$
\frac{\partial f(x,\tu y)}{\partial \ell} = -2 f(x, \tu y) + 1 = \begin{cases}
0 \quad &\text{if } f(\alpha(\ell(x)-1), \tu y) = 1 \\
1 \quad &\text{if } f(\alpha(\ell(x)-1),\tu y) = 0
\end{cases}
$$
On the other hand, if $h_0(x)=h_1(x)$, we need to look at \emph{the first} $t\in\{0,\dots, x\}$ such that $h_0(t)\neq h_1(x)$.
This first value can be found via sharply bounded minimization, which is in $\ACDL$ (see Remark~\ref{remark:minimisation}).
As seen, for any $z$, when $h_0\big(\big\lfloor \frac{z}{2}\big\rfloor\big)\neq h_1\big(\big\lfloor\frac{z}{2}\big\rfloor\big)$, if $h_0\big(\big\lfloor \frac{z}{2}\big\rfloor\big)=0$ (and $h_1\big(\big\lfloor \frac{z}{2}\big\rfloor\big)=1$), the value of $f\big(\big\lfloor \frac{z}{2}\big\rfloor, \tu y\big)$ is preserved (i.e.,~$f(z,\tu y)=f\big(\big\lfloor \frac{z}{2}\big\rfloor, \tu y\big)$),
while if $h_0\big(\big\lfloor \frac{z}{2}\big\rfloor\big)=1$, the value of $f\big(\big\lfloor \frac{z}{2}\big\rfloor, \tu y\big)$ changes.
Clearly, since $f(x,\tu y)\leq 1$, two changes correspond to going back to the original value.
Then, since $t$ is the first value (with respect to the input $x$) such that $h_0(t)\neq h_1(t)$, we know that, in all other cases $t'\in \{t,\dots, x\}$, $h_0(t')=h_1(t')$.
So, we only need to check if $h_0(t)=0$ \big(i.e.,~$f(t, \tu y)=f\big(\big\lfloor \frac{t}{2}\big\rfloor\big)$\big) or $h_0(t)=1$ (i.e.,~$f(t, \tu y)=1-f\big(\big\lfloor \frac{t}{2}\big\rfloor, \tu y\big)$) and count modulo 2 how many times the value of $f$ changes, i.e.~$h_0(t')\neq h_0\big(\big\lfloor \frac{t'}{2}\big\rfloor\big)$.
This can be done in $\ACDL$ using $\fun{cmod2}$ (see Remark~\ref{remark:2ODE}).
In particular, $f(x,\tu y)= f\big(\big\lfloor\frac{t}{2}\big\rfloor, \tu y\big)$, when either (i) $h_0(x)=0$ and the number of changes modulo 2 is 0 or (ii) $h_0(x)=1$ and the number of changes modulo 2 is 1.

$(\ell$-2ODE $\Longrightarrow$ 1-BRN) Any $f(x,\tu y)$ defined by $\ell$-2ODE from $g$ and $k$ can be defined by 1-BRN from the same $g$ and $h_k$, where $h_k(x,\tu y, 0)= k(x,\tu y)$ and $h_k(x,\tu y, 1)=1-k(x,\tu y)$.

\begin{sloppypar}
(1-BRN $\Longrightarrow$ $\ell$-b$_0$ODE)
If $f$ is defined by 1-BRN from $g$ and $h$, then it can be rewritten (directly) by $\ell$-b$_0$ODE considering $K(x,\tu y, f(x,\tu y), h(x,\tu y))=\fun{cosg}(f(x,\tu y))\times h(x,\tu y, 0) +\fun{sg}(f(x,\tu y))\times h(x,\tu y,1)$.
\end{sloppypar}

($\ell$-2ODE $\Longrightarrow \ell$-b$_0$ODE) See Remark~\ref{remark:2ODE}.}
\end{toappendix}

We conclude our investigation of $\ACCt$ by remarking that similar completeness results holds in the non-uniform setting too. 
In this context, functions and relations that described the circuits for all $n$ (the so-called direct connection language~\cite{Vollmer}) are given as basic functions and are added to $\mathscr{B}_0$. 

\begin{proposition}\label{prop:nonUnif}
    If $f:\Nat \to \Nat$ is computable by a family $\mathcal{C} = (C_n)_{n\ge 0}$ of polynomial size and constant depth circuits including \textsc{Mod 2} gates, then it is in 
    $
    [\mathscr{B}_0, \mathbf{circ}_{\mathcal{C}}; \circ, \ell\emph{-ODE}_1, \ell\emph{-2ODE}],
    $
    where $\mathbf{circ}_{\mathcal{C}}=\{C,L_0^{in}, L_0^{\neg}, L_e\}$, denotes the set of characteristic functions associated to predicate and relations describing the underlying graph of a circuit and its level of gates, resp.
\end{proposition}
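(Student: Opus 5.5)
The plan is a layer-by-layer evaluation of the circuit inside the algebra. The circuit description functions in $\mathbf{circ}_{\mathcal{C}}$ serve as oracles that replace the uniformity machine. I would follow the non-uniform $\AC^0$ argument of~\cite{ADK24a} and add one new case, \textsc{Mod 2} gates, handled by $\ell$-2ODE as in Remark~\ref{remark:tODE}.

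\emph{Encoding.} Fix an input $x$ of length $n=\ell(x)$. Let $C_n$ have depth $d$ (a constant) and size at most $n^c$. Gates are numbered by integers of length $O(\log n)$, so every gate index can be bounded using $\#$ applied to $x$. The oracle functions are read as follows:
\begin{itemize}
\item $C(n,a,t)$ says that gate $a$ of $C_n$ has type $t\in\{\neg,\wedge,\vee,\textsc{Mod 2}\}$;
\item $L_e(n,a,b)$ says that $b$ is a predecessor of $a$;
\item $L_0^{in}$ and $L_0^{\neg}$ identify the level-0 gates: the input literals, possibly negated, with their bit positions.
\end{itemize}
Without loss of generality I would first normalize $C_n$ into a leveled circuit, in which every edge goes from level $j$ to level $j+1$. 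I would also assume each output bit is its own gate at level $d$. Both are standard transformations that keep depth constant and size polynomial, and I would assume the families in $\mathbf{circ}_{\mathcal{C}}$ already describe this normal form.

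\emph{Induction on levels.} For each level $j\le d$ I would define $V_j(x)$, a number whose bit at position $a$ is the value of gate $a$ of $C_{\ell(x)}$, where gates not on level $j$ get the value $0$. Since $d$ is constant, this needs only $d$ explicit definitions by composition, not a recursion on $j$.

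For the base case, $V_0$ is built by CRN-like concatenation, namely an instance of $\ell$-ODE$_1$ (or $\ODEup$), whose bit $a$ equals $\fun{BIT}(i,x)$ or its negation according to $L_0^{in}$ and $L_0^{\neg}$.

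For the step $V_j\mapsto V_{j+1}$, the bit for gate $a$ is computed by cases on its type:
\begin{itemize}
\item for $\vee$ (resp.\ $\wedge$), a sharply bounded $\exists$ (resp.\ $\forall$) over $b$ of the predicate $L_e(n,a,b)\wedge \fun{BIT}(b,V_j)=1$ (resp.\ the same predicate with the conjunct replaced by the implication $L_e(n,a,b)\to \fun{BIT}(b,V_j)=1$), expressed via Remark~\ref{remark:minimisation};
\item for $\neg$, $\fun{cosg}$ of the unique predecessor's bit;
\item for \textsc{Mod 2}, the function defined by $\ell$-2ODE with $g=0$ and $k(z,\cdot)=L_e(n,a,\ell(z))\cdot\fun{BIT}(\ell(z),V_j)$, evaluated at a $z$ long enough to range over all gate indices (e.g.\ $z=x\#x\#\dots$).
\end{itemize}
By Remark~\ref{remark:tODE}, this last function returns the parity of the predecessors with value $1$. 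The per-gate bits are then assembled into $V_{j+1}$ by another $\ell$-ODE$_1$ concatenation. The result $f(x)$ is read off the output gates of $V_d$, again by concatenation.

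\emph{Main obstacle.} The delicate step is the \textsc{Mod 2} case. There, $\ell$-2ODE iterates along $\ell$ of its recursion variable, so the counting range must cover polynomially many predecessor indices while staying inside the algebra. I would handle this by taking the recursion argument to be a smash-power of $x$ whose length exceeds $n^c$, and letting $k$ be zero beyond the relevant range. The same indexing issue, parametrizing the $\ell$-2ODE by the gate $a$ and the level value $V_j$ as extra arguments $\tu y$, must be checked carefully so that the $k$ function is itself in the algebra. Membership of $k$ then follows from closure under composition, because $k$ uses only $\fun{BIT}$, the $\mathbf{circ}_{\mathcal{C}}$ oracles, and products of $\{0,1\}$ values, which are rewritable with $\fun{sg}$.
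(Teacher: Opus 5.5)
Your proposal is correct and follows essentially the paper's route: constantly many layer-wise gate-evaluation functions defined by composition, with $\wedge$/$\vee$ gates handled by sharply bounded quantification (the paper uses the equivalent sharply bounded min/max), \textsc{Mod 2} gates by an instance of $\ell$-2ODE whose $k$ combines the predecessor relation with the previous layer's values (this is what the paper's $\fun{cmod2}$-based definition of $\fun{Eval}_{3e}$ amounts to), and the output word assembled by an $\ell$-ODE$_1$ concatenation. The differences are cosmetic: the paper uses $\fun{Eval}_j(t,x)$ directly as a function of the gate index, so packing each layer into a number $V_j$ is harmless but unnecessary, and in the statement $C$ is the edge relation while the $L$'s encode levels (hence gate types), which is the reverse of your labelling.
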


\raus{
\begin{proof}[Proof Sketch]
    Following the construction of~\cite[Prop.~3.28]{ADK24a} except for the function $\fun{Eval}$ which here includes inductive levels corresponding to layers computing \textsc{Mod 2}: $\fun{Eval}_{3e}(t,x)$ is $\fun{cmod2}\Big(\sum^{\ell(x)^k}_{i=0} \{\fun{Eval}_{3e-3}(i,x) : C(x,i,t)=1\}\Big)$, i.e.~counting modulo 2 a number of bits which is at most polynomial in the length of the input and expressible by $\ell$-2ODE (Remark~\ref{remark:2ODE}).
\end{proof}
}

\begin{toappendix}
\subsubsection{Characterizing $\ACCt$ in the non-uniform setting}\label{sec:nonUnif}

%
%
Let $\mathcal{C}=(C_n)_{n\ge 0}$ be the class of circuits of polynomial size $n^k$, for $k\in \Nat$, and constant depth $d$, including \textsc{Mod 2} gates.
We assume that each circuit $C_n$ is in a special normal form, such that it strictly alternates between layers of $\wedge,\vee$ and \textsc{Mod 2} and edges are only between gates of consecutive layers:
input gates and their negations are all at level 0, $\vee$ gates are at levels 3$e$ + 1, $\wedge$ gates are at levels 3$e$ + 2 and \textsc{Mod 2} gates are at levels $3e$.
As in~\cite{ADK24a}, we keep all the basic functions of $\ACDL$, but we add a set $\mathbf{circ}_\mathcal{C} = \{C, L_0^{in}, L_0^\neg, L_e\}$ of characteristic functions associated to the predicate $C\subseteq \Nat^n$ and relations $L_0^{in}, L_0^\neg, L_e \subseteq \Nat \times \Nat$, for $e\in\{1,\dots, d\}$ .
Intuitively, $C$ describes the underlying graph of the circuit:
for any integers $x,\alpha, \beta, (x,\alpha,\beta)\in C$, when in $C_{\ell(x)}$ the $\alpha^{th} \leq \ell(x)^{k}$ gate of some level is a predecessor of the $\beta^{th} \leq \ell(x)^k$ gate of the next level.
Thus, in this encoding, $\alpha$ and $\beta$ are exponentially smaller than $x$.
Relations $L_0^{in}, L_0^\neg$ and $L_e$ describe the level of gates (and, implicitly, their type):
$L_0^{in}$ refers to input gates, $L_0^{in}$ to negation gates, and $L_e$ to $\wedge,\vee$ or \textsc{Mod 2} gates, depending on $e$.
Since we aim to define functions over integers, we assume that input gates are numbered from $n-1$ to 0, and output gates range from $m-1$ to 0, with $m\leq \ell(x)^k$.
It is by considering the function corresponding to the given relations that we obtain the desired ODE style family of classes (parametrized on $\mathcal{C}$).

\begin{proof}[Proof of Prop.~\ref{prop:nonUnif}]
The proof follows the construction described in~\cite{ADK24a}, except for the function $\fun{Eval}$ which is defined as follows.
Let $\fun{Eval}(t,x)$ be a function that returns the value of the $t^{th}$ output gate of the circuit $C_{\ell(x)}$ of input $x$ when $t\leq m-1$ and 0 otherwise:
$$
\fun{Eval}(t,x) = \begin{cases}
    C_{\ell(x)} \quad &\text{if } t \leq m-1 \\
    0 \quad &\text{otherwise.}
\end{cases}
$$
To define the function $\fun{Eval}(t,x)$, we assume that $C_n$ has depth $d$.
We first define a special sharply bounded minimum operator function that, given $k\in \Nat$ and two functions $g$ and $h$, with $h(t,\tu y)\leq 1$, $t\in \Nat$ and $\tu x=x_1, \dots, x_h$, 
$$
\fun{min}_{i\leq \ell(x)^k}\{g(i,\tu x) : h(i,\tu x) \triangleright j\}
$$
\begin{sloppypar}
\noindent
with $\triangleright \in \{<,\leq, >, \ge,=\}$ and $j\in\{0,1\}$.
Intuitively, given $\{0, \dots, \ell(x_1)\}$, it computes the minimum of the values of $g(i,\tu x)$ for $i$ and $\tu x$ such that $h(i,\tu x)\triangleright j$.
\end{sloppypar}
The inductive definition of $\fun{Eval}$ relies on these of $d+1$ functions $\fun{Eval}_0, \dots, \fun{Eval}_d$ with $\fun{Eval}_d=\fun{Eval}$:
\begin{itemize}
    \itemsep0em
    \item $\fun{Eval}_0(t,x)$ is equal to $\fun{BIT}(t,x)$ if $L_0^{in}(t,x)$ holds and $1-\fun{BIT}(t,x)$ if $L_0^\neg(t,x)$ does.
    For $t$ not corresponding to gate index, $\fun{Eval}_0(t,x)$ is set to an arbitrary value, say 0.
    \begin{sloppypar}
    \item $\fun{Eval}_{3e+1}(t,x)$ is equal to $\fun{min}_{i\leq \ell(x)^k}\{\fun{Eval}_{3e}(i,x) : C(x,i,t) = 1\}$, for $L_{3e}$, i.e.,~if $t$ is the index of a gate at this level.
    The evaluation of the $i^{th}$ gate of the level $2e$ (a $\wedge$-gate) is the minimum of the evaluations of its predecessors gates of level $3e$.
    Again, this function is already in non-uniform $\ACDL_{\mathcal{C}}$, i.e.,~the class considered here but without $\ell$-2ODE (see~\cite{ADK24a}).
    \end{sloppypar}
    \begin{sloppypar}
    \item Analogously, $\fun{Eval}_{3e+2}(t,x)$ is the $1-\fun{min}_{i\leq \ell(x)^k}\{1-\fun{Eval}_{3e+1}(i,x) : C(x,i,t)=1\}$.
    The evaluation of the $t^{th}$ gate of level $3e+2$ (a $\vee$-gate) is the maximum among evaluations of its predecessor gates at level $2e$.
    Again, this is already in non-uniform $\ACDL_{\mathcal{C}}$.
    \item $\fun{Eval}_{3e}(t,x)$ is equal to $\fun{cmod2}\big(\sum^{\ell(x)^k}_{i=0}\{\fun{Eval}_{3e-3}(i,x):C(x,i,t)=1\}\big)$.
    As seen, this corresponds to counting modulo 2 a number of bits which is at most polynomial in the length of the input and can be defined by $\ell$-2ODE (see Remark~\ref{remark:2ODE}).
    \end{sloppypar}
\end{itemize}
Finally, we consider the following expression, which defines a function $f$ such that $f(2^{\ell(x)^k},x)$ outputs the value of the computation of $C_n$ (for $n=\ell(x)$) on input $x$:
$$
\frac{\partial f(y,x)}{\partial \ell(y)} = f(y,x) + \fun{Eval}(\ell(x)-\ell(y)-1, x)
$$
with initial value $f(0,x)=0$.
Intuitively, the given function computes the successive suffixes of the output word, starting from the bit of the bigger weight.
Remarkably, this is an instance of $\ell$-ODE$_1$ (as $\fun{Eval}(y,x) \in \{0,1\}$).
So, the given $f$ can be rewritten in the desired class.
\end{proof}

\end{toappendix}


\subsection{A family of algebras for $\ACCp{n}$}\label{sec:ACCn}

\arxiv{
In this Section, we introduce a family of uniform schemas naturally generalizing $\ell$-2ODE and extend the results of Sec.~\ref{sec:ACCt} to all $\ACCp{n}$.
Relying on ODE-schemas, we not only obtain the very first characterizations for $\ACCp{n}$ (with $n$ possibly different from 2 and 6) and $\cc{FACC^0}$, but we also capture $\TC^0$ in a new way (Sec.~\ref{sec:TC}).
We first consider computation through circuits with \textsc{Mod 3} gates and then show how the approach naturally generalizes to the study of circuits including \textsc{Mod $n$}, for any $n\in \Nat$ (either prime or composite), and to \textsc{Maj} gates.}

\begin{toappendix}
\subsubsection{From $\ACCp{3}$ to $\ACCp{n}$}\label{app:ACCp}
\begin{sloppypar}
The schema characterizing $\ACCp{3}$ is a special case of $\ell$-ODE whose defining equation involves integer division, i.e.~$\big\lfloor \frac{f(x,\tu y)}{2}\big\rfloor$, $A=-3k(x,\tu y), B=k(x,\tu y)$ and $k(x,\tu y)\leq 1$.\footnote{This schema can be seen as a generalization of $\ell$-ODE$_3$, see~\cite{ADK24a}.}
\end{sloppypar}
\begin{definition}[$\ell$-3ODE schema]
    Let $g:\Nat^p \to \{0,1,2\}$ and $k:\Nat^{p+1}\to \{0,1\}$, then $f:\Nat^{p+1}\to \Nat$ is defined by $\ell$\emph{-3ODE} from $g$ and $k$ if it is the solution of the IVP with initial value $f(0,\tu y)=g(\tu y)$ and such that:
    $$
    \frac{\partial f(x,\tu y)}{\partial \ell} = - 3 \times k(x,\tu y) \times \bigg\lfloor \frac{f(x,\tu y)}{2} \bigg\rfloor + k(x,\tu y).
    $$
\end{definition}

\noindent
Clearly, either $k(\alpha(\ell(x)-1),\tu y)=0$ and $f(x+1,\tu y)=f(\alpha(\ell(x)-1),\tu y)$ or $k(\alpha(\ell(x)-1),\tu y)=1$ and then,
\begin{align*}
    f(x, \tu y) &= f(\alpha(\ell(x)-1), \tu y) - 3 \times \bigg\lfloor \frac{f(\alpha(\ell(x)-1),\tu y)}{2} \bigg\rfloor + 1 \\
    &= \begin{cases}
        f(\alpha(\ell(x)-1), \tu y) + 1 \quad &\text{if } f(\alpha(\ell(x)-1), \tu y) < 2 \\
        f(\alpha(\ell(x)-1), \tu y)-2 = 0 \quad &\text{if } f(\alpha(\ell(x)-1), \tu y)=2.
    \end{cases}
\end{align*}

\begin{proposition}\label{prop:3ODE}
    If $f$ is defined by $\ell$\emph{-3ODE} from functions in $\ACCp{3}$, then $f$ is in $\ACCp{3}$ as well.
\end{proposition}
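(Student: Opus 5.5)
The plan mirrors the proof of Prop.~\ref{prop:2ODE}: first obtain a closed form for $f$, then evaluate that closed form with one layer of \textsc{Mod 3} gates placed on top of constant-depth circuits for $g$ and $k$.

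\emph{Closed form.} By the case analysis after the definition of $\ell$-3ODE, the value of $f$ only changes when $\ell(x)$ increases. At that point, if $k(\alpha(\ell(x)-1),\tu y)=0$ the value stays the same. If $k(\alpha(\ell(x)-1),\tu y)=1$, the value goes from $v$ to $v+1$ when $v<2$, and from $2$ to $0$ otherwise. Since $g$ takes values in $\{0,1,2\}$, an immediate induction on $\ell(x)$ shows that $f(x,\tu y)\in\{0,1,2\}$ for all $x$. The update on $\{0,1,2\}$ is exactly $v\mapsto (v+1)\bmod 3$. Unfolding along the values $\alpha(u)$, $u\in\{0,\dots,\ell(x)-1\}$, as in App.~\ref{app:ODE}, we get
\[
f(x,\tu y)=\Big(g(\tu y)+\sharp\{u\in[0,\ell(x)-1] : k(\alpha(u),\tu y)=1\}\Big)\bmod 3 .
\]

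\emph{Circuit.} We build the circuit in three stages.
\begin{enumerate}
\item In parallel, compute $g(\tu y)$ and the bits $k(\alpha(u),\tu y)$ for every $u<\ell(x)$. By hypothesis each of these is in $\ACCp{3}$. There are only $\ell(x)$ values of $u$, which is polynomially many in the input length. Each $\alpha(u)=2^u-1$ is $\AC^0$-computable, and the whole stage has constant depth.
\item Form a multiset of $g(\tu y)+\ell(x)$ bits: take the $k$-bits and add $g(\tu y)\le 2$ constant $1$ inputs, selected by an $\AC^0$ case split on the value of $g$.
\item Determine the residue of the count of ones modulo $3$. A \textsc{Mod 3} gate only outputs whether the count is $\equiv 0$. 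So for each $j\in\{0,1,2\}$ we use one gate to test whether count $+\,j\equiv 0 \pmod 3$, padding $j$ extra ones. These three tests, combined with the two output bits of $f$, encode the residue in binary using constant-depth $\wedge/\vee$ logic.
\end{enumerate}
Composing the stages stays in $\ACCp{3}$, since the class is closed under composition and constant-depth layering.

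\emph{Main obstacle.} The only delicate step is the invariant $f\in\{0,1,2\}$. If $f$ ever reached a value $\ge 3$, then $\lfloor f/2\rfloor$ would stop implementing a wrap-around mod $3$. This is exactly why the schema requires the range restriction on $g$, and I would check the induction carefully on each of the three cases. Uniformity is routine: the connection language only involves indices $\alpha(u)$ and fixed gate types, so $\cc{Dlogtime}$-uniformity is inherited from the circuits for $g$ and $k$. The same argument would also support the later generalization to \textsc{Mod $n$}.
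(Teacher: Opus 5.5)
Your proposal is correct and follows the paper's proof. You reduce $f$ to the count modulo $3$ of $g(\tu y)$ together with the bits $k(\alpha(u),\tu y)$ for $u<\ell(x)$, compute these in parallel with the $\ACCp{3}$ circuits for $g$ and $k$, and finish with \textsc{Mod 3} gates. Your extra details (letting $g(\tu y)\le 2$ contribute as padded constant ones, using three \textsc{Mod 3} gates to recover the residue, and indexing over $u<\ell(x)$ rather than the paper's $t<x$) make explicit what the paper's one-line circuit description leaves implicit.
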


\begin{proof}
    Straightforward consequence of the observation above. 
    Indeed, $f(x,\tu y)$ can be computed by a circuit including \textsc{Mod 3} gates in constant depth as follows:
    \begin{itemize}
        \item In the beginning, compute $g(\tu y)$ and $k(\alpha(t), \tu y)$'s for any $t\in \{0,\dots, x-1\}$, independently and in parallel.
        This can be done in $\ACCp{3}$ by hypothesis.
        \item In one step, compute the sum of these values modulo 3 using one \textsc{Mod 3} gate.
    \end{itemize}
\end{proof}

That this schema is enough to capture $\ACCp{3}$ is established following the techniques of Remark~\ref{remark:tODE}.

\begin{remark}[Counting modulo 3]\label{remark:cmod3}
    Computation performed by \textsc{Mod 3} gates can be simulated by $\ell$-3ODE from $k$, for $k(x,\tu y)=\fun{BIT}(x,\tu y)$; that is,~by computing $\fun{cmod3}(x,x)$, where $\fun{cmod3}(x,\tu y)$ is defined as the solution of the IVP with initial value $\fun{cmod3}(0,\tu y)=0$ and such that
    $
    \frac{\partial \fun{cmod3(x,\tu y)}}{\partial \ell} = - 3 \times \bigg\lfloor \frac{\fun{cmod3}(x,\tu y)}{2}\bigg\rfloor \times \fun{BIT}(\ell(x), \tu y) + \fun{BIT}(\ell(x), \tu y).
    $
    Clearly, for any $t\in\{0,\dots, x\}$ and $\tu y$, $\fun{cmod3}(\alpha(t),\tu y)\leq 2$.
    In particular, whenever $\fun{comd3}(t,\tu y) < 2$, the next value $t' \in \{0,\dots, x\}$ such that $\ell(t')=\ell(t)+1$ will be simply
    $
    \fun{cmod3}(t', \tu y) = \fun{cmod3}(t,\tu y) + \fun{BIT}(t, \tu y),
    $
    while when $\fun{cmod3}(t, \tu y)=2$, 
    $$
    \fun{cmod3}(t', \tu y) = 
    \begin{cases}
       \fun{cmod3}(\alpha(t), \tu y) = 2 &\text{if } \fun{BIT}(\alpha(t), \tu y) = 0 \\
       \fun{cmod3}(\alpha(t),\tu y)-2 = 0 &\text{if } \fun{BIT}(\alpha(t), \tu y)=1
    \end{cases}
    $$
\end{remark}

\noindent
Then, $\ACCp{3}$ can be captured by simply endowing $\ACDL$ with $\ell$-3ODE.
Completeness in the uniform and non-uniform setting is established exactly as before.

\begin{lemma}\label{lemma:3ODE}
    $\ACCp{3}=[\mathscr{B}_0; \circ, \ell$\emph{-ODE}$_1$, $\ell$\emph{-3ODE}$]$.
\end{lemma}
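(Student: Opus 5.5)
We prove the two inclusions separately, following the proof of Theorem~\ref{theorem:ACCt} and replacing Prop.~\ref{prop:2ODE} and Remark~\ref{remark:tODE} with their modulo~3 counterparts.

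For $(\subseteq)$ we argue by induction on the construction of a function in $[\mathscr{B}_0; \circ, \ell\text{-ODE}_1, \ell\text{-3ODE}]$. The basic functions in $\mathscr{B}_0$ are in $\AC^0 \subseteq \ACCp{3}$. Closure under composition and under $\ell$-ODE$_1$ is inherited from the $\AC^0$ argument of~\cite{ADK24a}, which relativizes: a constant-depth circuit whose subcircuits compute $\ACCp{3}$ functions is still an $\ACCp{3}$ circuit. Closure under $\ell$-3ODE is Prop.~\ref{prop:3ODE}.

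That proposition needs one fact, which we would check explicitly: by induction on $\ell(x)$, $f(x,\tu y)\in\{0,1,2\}$, and
\[
f(x,\tu y)\equiv g(\tu y)+\sum_{u=0}^{\ell(x)-1}k(\alpha(u),\tu y) \pmod 3 .
\]
This follows from the case analysis given before Prop.~\ref{prop:3ODE}: if $k=0$ the value is unchanged, and if $k=1$ the step sends $0\mapsto 1$, $1\mapsto 2$, $2\mapsto 0$. Since $g(\tu y)\le 2$, it can be fed to the counting gate as at most two bits. The circuit therefore computes in parallel the bits $k(\alpha(u),\tu y)$ for $u<\ell(x)$ (polynomially many in the input length) together with the bits encoding $g(\tu y)$. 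It then applies \textsc{Mod 3} gates to obtain the residue and selects the output value among $\{0,1,2\}$ with an $\AC^0$ multiplexer. We need the residue itself, not just a test for~$0$. This is obtained by running three \textsc{Mod 3} tests, on the input multiset padded with $0$, $1$ or $2$ extra ones, all in constant depth.

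For $(\supseteq)$, recall that all $\AC^0$ computation is already captured by $\ACDL$. It then suffices to show that the algebra simulates a \textsc{Mod 3} gate on polynomially many bits, and then to reuse the standard normal-form evaluation argument. That argument is the one used for Theorem~\ref{theorem:ACCt} (via~\cite[Th.~2.2]{CloteTakeuti}) and in the non-uniform Prop.~\ref{prop:nonUnif}: layers are evaluated by functions $\fun{Eval}_e$, and the \textsc{Mod 2} layer is replaced by a \textsc{Mod 3} layer. Remark~\ref{remark:cmod3} gives $\fun{cmod3}$, which counts the ones of a word modulo~3. The gate output is then $\fun{cosg}(\fun{cmod3}(\cdot))$, which is $1$ iff the count is divisible by~3; this matches the footnote's convention for \textsc{Mod} gates.

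The main obstacle is counting the predecessors of a gate $t$ that evaluate to~$1$. These inputs are not the bits of a given number. Before applying $\fun{cmod3}$ we must first assemble them into a word whose $i$-th bit is $\fun{Eval}_{e-1}(i,x)\wedge C(x,i,t)$, for $i\le \ell(x)^k$. We plan to build this word with $\ell$-ODE$_1$, exactly as in the final step of the proof of Prop.~\ref{prop:nonUnif}, deriving along a variable $z$ with $\ell(z)$ ranging up to $\ell(x)^k$; such a $z$ is available through $\#$. We then apply $\ell$-3ODE with $k(z,\cdot)$ equal to the corresponding bit, extracted by $\fun{BIT}(\ell(z),\cdot)$. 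Beyond this, the only care needed is with indexing conventions (for example $\alpha(-1)$ and the initial value), and these are routine.
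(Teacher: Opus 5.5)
Your proposal is correct and takes the same approach as the paper. Closure is handled by the relativized $\AC^0$ argument of~\cite{ADK24a} plus Prop.~\ref{prop:3ODE}, and completeness by $\AC^0$-capture plus $\fun{cmod3}$ from Remark~\ref{remark:cmod3}. You also fill in details the paper leaves implicit, namely extracting the actual residue with padded \textsc{Mod 3} tests and counting predecessors. The word-assembly detour is unnecessary, though: $\ell$-3ODE accepts any $\{0,1\}$-valued $k$ from the algebra, so you can take $k(z,x,t)=\fun{sg}\big(\fun{Eval}_{e-1}(\ell(z),x)+C(x,\ell(z),t)-1\big)$ directly.
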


\begin{proof}[Proof of Lemma~\ref{lemma:3ODE}]
$(\supseteq)$ As in~\cite{ADK24a}, for all functions and schemas except $\ell$-3ODE, the closure property of which is proved in Proposition~\ref{prop:3ODE}.
$(\subseteq)$ By~\cite{ADK24a} plus Remark~\ref{remark:cmod3}, capturing computation through \textsc{Mod 3} gates.
\end{proof}

\begin{lemma}\label{lemma:3nonUnif}
\begin{sloppypar}
If $f:\Nat \to \Nat$ is computable by a family of polynomial size and constant depth circuits including \textsc{Mod 3} gates, $\mathcal{C}=(C_n)_{n\ge 0}$, then $f$ is in the class $[\mathscr{B}_0, \mathbf{circ}_{\mathcal{C}}; \circ, \ell\emph{-ODE}_1, \ell\emph{-3ODE}]$.
\end{sloppypar}
\end{lemma}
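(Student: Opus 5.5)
The plan is to mirror the proof of Prop.~\ref{prop:nonUnif}, replacing \textsc{Mod 2} gates by \textsc{Mod 3} gates and $\fun{cmod2}$ by the function $\fun{cmod3}$ of Remark~\ref{remark:cmod3}. We first put each circuit $C_n$ of the family $\mathcal{C}$ into the layered normal form. Inputs and their negations sit at level $0$, $\vee$ gates at levels $3e+1$, $\wedge$ gates at levels $3e+2$, and \textsc{Mod 3} gates at levels $3e$. Edges join only consecutive levels, so the depth becomes $3d'$ for some constant $d'$, and the size stays polynomial, say at most $\ell(x)^k$ gates per level, because padding with fan-in-one gates at most multiplies size and depth by constants. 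We take as \textsc{Mod 3} gate the variant returning $1$ iff the sum is $0 \bmod 3$; this is allowed by the footnote in App.~\ref{app:circuits}, and the other variant is recovered with sharply bounded tests. The circuit description is then given by the basic functions $\mathbf{circ}_{\mathcal{C}}=\{C,L_0^{in},L_0^{\neg},L_e\}$, exactly as before.

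Next we define $\fun{Eval}_0,\dots,\fun{Eval}_{3d'}$ by induction on levels, each a finite composition within the algebra. $\fun{Eval}_0$ and the $\wedge/\vee$ levels are defined verbatim as in the proof of Prop.~\ref{prop:nonUnif}, using sharply bounded minimization and maximization over $i\le \ell(x)^k$. These lie in the non-uniform $\ACDL_{\mathcal{C}}$ by Remark~\ref{remark:minimisation} and~\cite{ADK24a}. For a \textsc{Mod 3} level we set $\fun{Eval}_{3e}(t,x)=\fun{cosg}(F(2^{\ell(x)^k},t,x))$. Here $F$ is obtained by $\ell$-3ODE with initial value $F(0,t,x)=0$ and
\[
\frac{\partial F(z,t,x)}{\partial \ell}=-3\times k(z,t,x)\times\Big\lfloor\frac{F(z,t,x)}{2}\Big\rfloor+k(z,t,x),
\]
where $k(z,t,x)=\fun{Eval}_{3e-1}(\ell(z),x)\times C(x,\ell(z),t)$. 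The product of two $\{0,1\}$-valued functions is expressible without $\times$, e.g.\ as $\fun{sg}(a+b-1)$, so $k$ lies in the algebra and takes values in $\{0,1\}$.

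By the analysis after the definition of $\ell$-3ODE (and Prop.~\ref{prop:3ODE}), $F(z,t,x)$ equals the number of $i<\ell(z)$ with $k(\alpha(i),t,x)=1$, taken modulo $3$. Choosing $z=2^{\ell(x)^k}$ ranges $i$ over all potential predecessor indices. The value $2^{\ell(x)^k}$ is obtained with $\#$ as usual, which is where the smash function is needed. One point to check is the indexing: the recurrence at step $\alpha(i)$ must read predecessor $i$. Since $\ell(\alpha(i))=i$, this holds. The output bits are finally assembled by the same $\ell$-ODE$_1$ equation $\partial f(y,x)/\partial\ell = f(y,x)+\fun{Eval}(\ell(x)-\ell(y)-1,x)$ as in Prop.~\ref{prop:nonUnif}.

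The main obstacle is not the counting step, which is direct once $\fun{cmod3}$ is available, but making sure the normal-form conversion and the uniform treatment of the gate index $t$ are sound. Three things must hold: $\fun{Eval}_{3e}$ must take values in $\{0,1\}$ so that it can feed the next $k$; indices that are not gates of the level must be handled via $L_e$ by returning an arbitrary value such as $0$; and the ODE variable must be the one being iterated while $t,x$ are passive parameters. I would check these closure details first. The rest is a word-for-word adaptation of the $\ACCt$ argument.
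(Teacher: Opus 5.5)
Your proposal is correct and is essentially the paper's proof: you mirror Prop.~\ref{prop:nonUnif} and replace the \textsc{Mod 2} layer by an $\ell$-3ODE counter over the predecessors of gate $t$. Your version is in fact more careful than the paper's one-line sketch: you read $\fun{Eval}_{3e-1}$ where the paper writes $\fun{Eval}_{3e-3}$, you wrap the count in $\fun{cosg}$ so that it stays $\{0,1\}$-valued where the paper passes the $\{0,1,2\}$-valued $\fun{cmod3}$ straight on, and you check $\ell(\alpha(i))=i$. The only slip is minor: under your ``$1$ iff the sum is $\equiv 0 \bmod 3$'' convention a fan-in-one \textsc{Mod 3} gate computes negation rather than identity, so justify the layered normal form either with the ``$\not\equiv 0$'' convention (using $\fun{sg}$ instead of $\fun{cosg}$) or by giving the padding gates two extra constant-$1$ inputs.
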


\begin{proof}[Proof of Lemma~\ref{lemma:3nonUnif}]
Precisely as for Prop.~\ref{prop:nonUnif} but constructing $\fun{Eval}$ so that $\fun{Eval}_{3e}(y,x)$ is equal to $\fun{cmod3}\Big(\sum^{\ell(x)^k}_{i=0} \{\fun{Eval}_{3e-3}(i,x):C(x,i,t)=1\}\Big)$.
\end{proof}
\end{toappendix}

Nicely, the framework of Sec.~\ref{sec:ACCt} uniformly scales to all natural numbers $n\in \Nat$, when considering the family of schemas $\ell$-$n$ODE below.

\begin{definition}[$\ell$-$n$ODE schema]\label{def:nODE}
    Let $n\in\Nat^{>1}$, $g:\Nat^{p}\to \{0,\dots, n-1\}$ and $k:\Nat^{p+1} \to \{0,1\}$, then $f:\Nat^{p+1}\to \Nat$ is defined by $\ell$-$n$ODE from $g$ and $k$ if it is the solution of the IVP with initial value $f(0,\tu y)=g(\tu y)$ and s.t.:
    $$
        \frac{\partial f(x,\tu y)}{\partial \ell} = - n\times k(x,\tu y) \times \bigg\lfloor \frac{f(x,\tu y)}{n-1} \bigg\rfloor + k(x,\tu y).
    $$
\end{definition}
\noindent
Note that the integer division symbol is used here with a slight abuse of notation since, for any $x$ and $\tu y$, $f(x,\tu y)$ takes a value smaller than $n$, so that division is actually only checking whether $f(x,\tu y)=n-1$.
Hence, the kind of division involved here can be computed by circuits of constant depth.
Indeed, if $k(x,\tu y)=0$, then $f(x, \tu y)=f(\alpha(\ell(x)-1),\tu y)$, while if $k(x,\tu y)=1$, then $f(x, \tu y)= f\big(\alpha(\ell(x)-1), \tu y\big)- n\times \Big \lfloor \frac{f(\alpha(\ell(x)-1),\tu y)}{n-1}\Big\rfloor + 1$, i.e.~$f(x,\tu y)= f\big(\alpha(\ell(x)-1), \tu y\big)+1$ if $f\big(\alpha(\ell(x)-1), \tu y\big) < n-1$ and $f(x,\tu y)=f\big(\alpha(\ell(x)-1), \tu y\big) - n = 0$ if $f(\alpha(\ell(x)-1), \tu y)=2$. 
From this the closure property follows (see App.~\ref{app:ACCp}).

\begin{proposition}\label{prop:nODE}
    Let $n\in \Nat$.
    If $f$ is defined by $\ell$-$n$\emph{ODE} from functions in $\ACCp{n}$, then $f$ is in $\ACCp{n}$ as well.
\end{proposition}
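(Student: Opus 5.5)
We follow the proof of Prop.~\ref{prop:2ODE} and of Prop.~\ref{prop:3ODE}, replacing the parity argument by a modulo-$n$ argument. First we show by induction on $\ell(x)$ that $f(x,\tu y)\in\{0,\dots,n-1\}$ for all $x,\tu y$, and that the value of $f$ depends only on $\ell(x)$. The base case holds because $g$ takes values in $\{0,\dots,n-1\}$. For the inductive step we use the case analysis given right after Definition~\ref{def:nODE}. Since $f(\alpha(\ell(x)-1),\tu y)\le n-1$, the quantity $\lfloor f/(n-1)\rfloor$ is $1$ exactly when $f=n-1$, and $0$ otherwise. So each step with $k=1$ sends $v\mapsto v+1$ when $v<n-1$, and sends $n-1\mapsto n-1-n+1=0$. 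In both cases this is $v\mapsto (v+1)\bmod n$. Each step with $k=0$ leaves $v$ unchanged.

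Unfolding the $\ell$-ODE solution from Section~\ref{sec:ODEs} then gives the closed form
\[
f(x,\tu y)=\Big(g(\tu y)+\sum_{u=0}^{\ell(x)-1}k(\alpha(u),\tu y)\Big)\bmod n .
\]
Equivalently, $f(x,\tu y)$ is the number of $1$'s among the $k(\alpha(u),\tu y)$, offset by $g(\tu y)$, reduced modulo $n$. This follows by a straightforward induction on $\ell(x)$ from the step map above.

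For the circuit construction, we compute in parallel, in constant depth and polynomial size, the value $g(\tu y)$ and the bits $k(\alpha(u),\tu y)$ for all $u<\ell(x)$. Each of these lies in $\ACCp{n}$ by hypothesis. Computing $\alpha(u)$ and feeding it to $k$ is an $\AC^0$ operation, since $\ell(x)$ is logarithmic in the input. Because $g(\tu y)<n$ is a constant-size number, we can decode it into $n$ indicator bits and replace it by $g(\tu y)$ extra unit inputs. Concretely, for each $j<n$ we add $j$ copies of the bit $[g(\tu y)=j]$, or we simply guess the residue by a constant-size case split.

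It remains to output the residue $r\in\{0,\dots,n-1\}$ in binary. To do this we use \textsc{Mod $n$} gates applied to shifted inputs. For each $r$, we test whether the count plus $n-r$ padding constant-$1$ inputs is $\equiv 0 \pmod n$, using the variant of \textsc{Mod $n$} gates mentioned in the footnote of App.~\ref{app:circuits}. This identifies $r$ with a constant number of gates, and a constant-size $\AC^0$ layer then writes $r$ in binary.

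The main obstacle is this last step: turning the available \textsc{Mod $n$} gates (returning $1$ iff the sum is $0 \bmod n$) into the exact residue output. The padding trick settles it, since constants and copies of wires are free in uniform circuits. DLOGTIME-uniformity is preserved, because the added gadget has constant size and a regular wiring pattern.
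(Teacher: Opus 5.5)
Your proof is correct and follows the paper's own route: the step map $v\mapsto (v+1)\bmod n$ when $k=1$, the closed form as a count modulo $n$, and a constant-depth circuit that computes $g$ and the $k(\alpha(u),\tu y)$ in parallel, followed by a \textsc{Mod $n$} layer. Your version is actually more careful than the paper's sketch on two points. The paper counts $g(\tu y)$ as a single $0/1$ term, which is wrong when $g(\tu y)\ge 2$, and it does not explain how to read the residue off Boolean \textsc{Mod $n$} gates; your offset $(g(\tu y)+\sum_u k(\alpha(u),\tu y))\bmod n$ and your padding gadget handle both.
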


\noindent
This bound on $f$ values  also enforces that the typical $\TC^0$ function \textsc{BCount}~\cite{Vollmer} (counting the number of 1's of the binary representation of its input) cannot be defined by $\ell$-$n$ODE. 

\begin{toappendix}
\begin{proof}[Proof of Prop.~\ref{prop:nODE}]
By generalizing Prop.~\ref{prop:2ODE} and~\ref{prop:3ODE}.
In particular, 
    if $k(\alpha(\ell(x)-1),\tu y)=0$, $f(x, \tu y)= f(\alpha(\ell(x)-1), \tu y)$.
    If $k(\alpha(\ell(x)-1),\tu y)=1$,
    \begin{align*}
        f(x,\tu y) &= f(\alpha(\ell(x)-1), \tu y)- n \times \bigg\lfloor \frac{f(\alpha(\ell(x)-1),\tu y)}{n}\bigg\rfloor +1 \\
        &= \begin{cases}
            f(\alpha(\ell(x)-1), \tu y) +1 \quad &\text{if } f(\alpha(\ell(x)-1), \tu y) < n - 1 \\
            f(\alpha(\ell(x)-1), \tu y) - n + 1 \quad &\text{if } f(\alpha(\ell(x)-1), \tu y) = n-1
        \end{cases} \\
        &= \begin{cases}
            f(\alpha(\ell(x)-1), \tu y) +1 \quad \quad \; \; \; &\text{if } f(\alpha(\ell(x)-1), \tu y) < n - 1 \\
            0 \quad \quad \; \; &\text{if } f(\alpha(\ell(x)-1), \tu y) = n-1
        \end{cases}
    \end{align*}
    Clearly, for any $x$ and $\tu y$, $f(x,\tu y)\leq n-1$.
    Additionally, when $k(\alpha(\ell(x)-1), \tu y)=1$, $f(x,\tu y)= (f(\alpha(\ell(x)-1), \tu y)+1$) mod $n$.
    Hence,
    $
    f(x,\tu y) = \sharp \{t \in \{-1, \dots, \ell(x)-1\} : k(\alpha(t), \tu y) \neq 0\}$ \textsc{ mod } n, where $k(\alpha(-1),\tu y)=g(\tu y)$.
    The corresponding computation is doable using constant-depth circuits including \textsc{Mod $n$} gates:
    \begin{itemize}
        \itemsep0em
        \item In the beginning compute $g(\tu y)$ and, for any $t\in \{0,\dots, \ell(x)-1\}$, $k(\alpha(t),\tu y)$'s independently and in parallel.
        This can be done in $\ACCp{n}$ by hypothesis.
        \item In one step, compute their sum modulo $n$, by using a single \textsc{Mod $n$} gate.
    \end{itemize}
\end{proof}

The following proof is simply a special case of the one below, showing that a alternative version of the proof of Prop.~\ref{prop:2ODE} can be provided.

\begin{proof}[Alternative proof of Prop.~\ref{prop:2ODE}]
    If $k(\alpha(\ell(x)-1),\tu y)=0$, $f(x, \tu y)= f(\alpha(\ell(x)-1), \tu y)$.
    If $k(\alpha(\ell(x)-1), \tu y)=1$, there are two possible cases:
    \begin{align*}
        f(x,\tu y) &= f(\alpha(\ell(x)-1), \tu y) - 2f(\alpha(\ell(x)-1),\tu y) +1 \\
        &= \begin{cases}
            1 \quad &\text{if } f(\alpha(\ell(x)-1), \tu y)= 0 \\
            0 \quad &\text{if } f(\alpha(\ell(x)-1), \tu y) = 1.
        \end{cases}
    \end{align*}
    Clearly, for any $x$ and $\tu y$, $f(x,\tu y)\leq 1$.
    Since for $k(t,\tu y)=0$, we can step $t$, $f(x,\tu y)=\sharp\{t \in \{-1, \dots, \ell(x)-1\} : k(\alpha(t),\tu y)\neq 0\}$ mod 2, where $k(\alpha(-1), \tu y)=g(\tu y)$.
    The corresponding computation is feasible in constant-depth for polynomial-sized unbounded fan-in circuits with \textsc{Mod 2} gates:
    \begin{itemize}
        \itemsep0em
        \item In the beginning compute $g(\tu y)$ and, for any $t\in \{0,\dots, \ell(x)-1\}$, $k(\alpha(t),\tu y)$'s independently and in parallel.
        This can be done in $\ACCt$ by hypothesis.
        \item In one step, compute their sum modulo 2, by using a single \textsc{Mod 2} gate.
    \end{itemize}
\end{proof}
\end{toappendix}

\begin{theorem}\label{th:FACCn}
    For any $n\in \Nat^{>1}$, $\ACCp{n}=[\mathscr{B}_0; \circ, \ell\emph{-ODE}_1, \ell\text{-}n\emph{ODE}]$.
\end{theorem}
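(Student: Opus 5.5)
The plan is to prove the two inclusions separately, in the same way as Theorem~\ref{theorem:ACCt} and Lemma~\ref{lemma:3ODE}.

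$(\supseteq)$ \emph{Soundness.} I show that every function of $[\mathscr{B}_0; \circ, \ell\text{-ODE}_1, \ell\text{-}n\text{ODE}]$ lies in $\ACCp{n}$, by induction on the construction of the function.
\begin{itemize}
\item The basic functions in $\mathscr{B}_0$ are in $\AC^0 \subseteq \ACCp{n}$, as in~\cite{ADK24a}.
\item Closure under composition holds because constant-depth, polynomial-size circuits can be plugged into one another; uniformity is preserved in the standard way.
\item Closure under $\ell$-ODE$_1$ follows from the $\AC^0$ circuit construction of~\cite{ADK24a}. That construction only needs its inputs $g,k$ to be given by circuits, so it relativizes to $\ACCp{n}$ circuits with \textsc{Mod $n$} gates.
\item Closure under $\ell$-$n$ODE is exactly Prop.~\ref{prop:nODE}. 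The point is that $f(x,\tu y)$ is the number of $t\in\{-1,\dots,\ell(x)-1\}$ with $k(\alpha(t),\tu y)=1$, taken mod $n$ (with $g(\tu y)$ as initial offset). This needs only polynomially many parallel evaluations of $k$ and a constant-depth layer of \textsc{Mod $n$} gates.
\end{itemize}
One small check is needed for the output value. A \textsc{Mod $n$} gate that tests divisibility by $n$ can decide each residue $r\in\{0,\dots,n-1\}$. To do so, feed it the counted bits plus $n-r$ constant ones and take the disjunction over $r$. This recovers $f$ in binary in constant depth.

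$(\subseteq)$ \emph{Completeness.} I first use that $\ACDL$ already captures $\AC^0$, i.e., all functions given by uniform constant-depth circuits with $\wedge,\vee,\neg$ gates. It then remains to simulate \textsc{Mod $n$} gates, as in Remark~\ref{remark:tODE} and Remark~\ref{remark:cmod3}. Define $\fun{cmod}n(x,y)$ by $\ell$-$n$ODE with initial value $0$ and $k(x,y)=\fun{BIT}(\ell(x),y)$. By the case analysis following Def.~\ref{def:nODE}, $\fun{cmod}n(x,x)$ equals the number of ones in $x$, mod $n$. The boolean \textsc{Mod $n$} gate value is then $\fun{cosg}(\fun{cmod}n(x,x))$.

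Next I follow the standard Clote-style argument (as in~\cite[Th.~2.2]{CloteTakeuti} and the proof of Prop.~\ref{prop:nonUnif}). Put the circuit family in layered normal form, and define $\fun{Eval}_0,\dots,\fun{Eval}_d$ by induction on the level.
\begin{itemize}
\item Levels with $\wedge$ or $\vee$ gates are handled by sharply bounded minimum and maximum (Remark~\ref{remark:minimisation}).
\item Levels with \textsc{Mod $n$} gates apply $\fun{cmod}n$ to the word of predecessor values. That word is assembled by an $\ell$-ODE$_1$ concatenation over the polynomially many candidate indices, selecting those $i$ with $C(x,i,t)=1$.
\end{itemize}
Finally, the output word is assembled bit by bit by $\ell$-ODE$_1$.

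The main obstacle is the uniform case. There, the connection relations are not basic functions, so I must show they are definable in $\ACDL$. Since the families are $\cc{Dlogtime}$-uniform, I would appeal to the known equivalence of $\cc{Dlogtime}$-uniform $\AC^0$ with $\ACDL$~\cite{ADK24a,Clote1990} to get the direct connection language as an $\ACDL$ predicate.

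A second technical point is polynomial fan-in. Each \textsc{Mod $n$} gate counts up to $\ell(x)^k$ bits, so the word fed to $\fun{cmod}n$ must have length polynomial in $\ell(x)$. I would obtain such a word by using $\#$ to produce a value of length $\ell(x)^k$ and running the ODE along it. Correctness of $\fun{cmod}n$ on that word then follows directly from the invariant $f\le n-1$ established after Def.~\ref{def:nODE}.
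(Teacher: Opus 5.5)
Your proposal is correct and follows essentially the same route as the paper. Soundness uses the $\ACDL$ closure results of~\cite{ADK24a} plus Prop.~\ref{prop:nODE}. Completeness uses $\ACDL$'s capture of $\AC^0$ together with $\fun{cmodn}(x,x)$, defined by $\ell$-$n$ODE with $g=0$ and $k(x,\tu y)=\fun{BIT}(\ell(x),\tu y)$, to simulate \textsc{Mod $n$} gates. Your level-by-level $\fun{Eval}$ construction, the uniformity argument via the direct connection language, and the residue trick for outputting $f$ in binary only spell out details that the paper leaves to~\cite{ADK24a,CloteTakeuti} and Prop.~\ref{prop:nonUnif}.
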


\begin{proof}[Proof Sketch]
$(\supseteq)$ Following~\cite{ADK24a}, plus Prop.~\ref{prop:nODE} for $\ell$-$n$ODE.
$(\subseteq)$ Generalizing results for constant-depth circuits computation (via $\ACDL$) with \textsc{Mod} $n$ gates, which can be rewritten in our class as functions to count modulo $n$, namely,~$\fun{cmodn}(x,x)$, where $\fun{cmodn}$ is defined as an instance of $\ell$-$n$ODE being the solution of the IVP s.t.~$g(\tu y)=0$ and $k(x,\tu y)=\fun{BIT}(\ell(x),\tu y)$.
\end{proof}

\begin{toappendix}
\begin{proof}[Proof of Theorem~\ref{th:FACCn}]
    $(\supseteq)$ For functions and schemas in $\ACDL$ the proof is as in~\cite{ADK24a}, while for $\ell$-$n$ODE it is due to Prop.~\ref{prop:nODE}.
    $(\subseteq)$ Computation through constant-depth, polynomial-size circuits is captured by $\ACDL$, while counting using \textsc{Mod $n$} gates can be obtained by a function $\fun{cmod n}(x,x)$ defined by $\ell$-$n$ODE from $g$ and $k$, such that $g(\tu y)=0$ and $k(x,\tu y)=\fun{BIT}(\ell(x), \tu y)$, i.e.,~as the solution of the IVP with initial value $\fun{cmodn}(0,\tu y)=0$ and such that $\frac{\partial \fun{cmod n}(x,\tu y)}{\partial \ell}=- n\times \bigg\lfloor \frac{\fun{cmod n}(x,\tu y)}{n-1} \bigg\rfloor \times \fun{BIT}(\ell(x), \tu y) + \fun{BIT}(\ell(x), \tu y)$ (see App.~\ref{app:ACCp}).
\end{proof}
\end{toappendix}

\begin{remark}\label{remark:ntODE}
Clearly, $\ell$-2ODE is nothing but a special case of $\ell$-$n$ODE for $n=2$ so that Prop.~\ref{prop:2ODE} can be alternatively established following the intuition of Th.~\ref{th:FACCn}. 
\end{remark}

\begin{toappendix}
\begin{proof}[Proof of Remark~\ref{remark:ntODE}]
The $\ell$-2ODE is a special case of $\ell$-$n$ODE such that $n=2$,
\begin{align*}
\frac{\partial f(x,\tu y)}{\partial \ell} &= -2 \times k(x,\tu y) \times \lfloor f(x,\tu y)\rfloor + k(x,\tu y) \\
&= -2 \times k(x,\tu y) \times f(x, \tu y) + k(x,\tu y).
\end{align*}
Indeed, for any $x$ and $\tu y$, $f(x,\tu y)$ takes only values 0 or 1.
\end{proof}

\end{toappendix}

\noindent
As expected, characterizations in the non-uniform setting are established without substantial modifications with respect to Prop.~\ref{prop:nonUnif}.

Before concluding Sec.~\ref{sec:ACCn}, we remark that an equivalent characterization for these classes can be obtained not only by considering an alternative family of strict schemas s.t.~$\frac{\partial f(x,\tu y)}{\partial \ell} = -n \times \Big\lfloor \frac{f(x,\tu y)+k(x,\tu y)}{n}\Big\rfloor + k(x,\tu y)$, but also their \emph{non-strict} counterpart, obtained by allowing only special calls to $f$.

\begin{remark}[Non-strict counterpart]\label{remark:nonstrict}
    Let $n\in \Nat^{>2}$, $g:\Nat^p\to \{0,\dots, n-1\}$ and $k: \Nat^{p+1}\to \Nat$. The function $f:\Nat^{p+1}\to \Nat$ defined as the solution of the IVP s.t.~$f(0,\tu y)=g(\tu y)$ and $
    \frac{\partial f(x,\tu y)}{\partial \ell} = -n \times \fun{sg}(f(x,\tu y)-c) \times k(x,\tu y) + k(x,\tu y)
    $
    for $c=n-2$ is exactly the function computed by $\ell$-$n$ODE from the same $g$ and $k$.
\end{remark}

\noindent
The systematic investigation of the relationship between strict and non-strict schemas, which has just been initiated here, deserves further attention but is left for future work.

\begin{toappendix}
\subsubsection{An alternative approach}\label{app:additional}
\paragraph{On $\ell$-$n$ODE*}
\begin{definition}[$\ell$-$n$ODE$^*$schema]
\begin{sloppypar}
    Let $n\in \Nat^+$, $g:\Nat^p \to \{0,\dots, n-1\}$ and $k : \Nat^{p+1}\to \{0,1\}$.
    Then, $f:\Nat^{p+1} \to \Nat$ is defined by $\ell$\emph{-$n$ODE$^*$} from $g$ and $k$ if it is the solution of the IVP with initial value $f(0,\tu y)=g(\tu y)$ and such that:
    $$
    \frac{\partial f(x,\tu y)}{\partial \ell} = - n \times \bigg\lfloor \frac{f(x,\tu y)+k(x,\tu y)}{n}\bigg\rfloor + k(x,\tu y).
    $$
\end{sloppypar}
\end{definition}
\noindent
Notice that by construction, for any $\tu y$, $g(\tu y) \leq n-1$ so that $f(x,\tu y)$ can only take values in $\{0,\dots, n-1\}$.
It is easy to see that, for any $t\in \{0,\dots, x\}$ and $t' \in \{0,\dots, x\}$ such that $\ell(t')=\ell(t)+1$,  if $k(t,\tu y)=0$, then $f(t', \tu y)=f(t,\tu y)$; if $k(t,\tu y)=1$, then:
\small
\begin{align*}
    f(t',\tu y) &= f(t, \tu y) - n \times \bigg\lfloor \frac{f(t,\tu y)+1}{n}\bigg\rfloor + 1 \\
    &= \begin{cases}
        f(t, \tu y) + 1 \; &\text{if } f(t, \tu y) < n-1 \\
        (n-1) - n \times \bigg\lfloor \frac{(n-1)+1}{n}\bigg\rfloor + 1 =0  \; &\text{if } f(t, \tu y) = n-1 
    \end{cases} \\
    %
\end{align*}
\normalsize
\noindent
While the limit case $n=1$ is not very interesting, for 
any $m\in \Nat^{>1}$, computation performed by $\ell$-$m$ODE$^*$ is precisely equivalent to that performed by corresponding $\ell$-$m$ODE.
Accordingly, the characterization established in Th.~\ref{th:FACCn} naturally holds in this setting.
\begin{corollary}\label{cor:nODE}
    For any $n\in \Nat^{>1}$, $\ACCp{n}=[\mathscr{B}_0; \circ, \ell$\emph{-ODE$_1$}, $\ell$\emph{-$n$ODE}*$]$
\end{corollary}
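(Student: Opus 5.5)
The plan is to reduce Corollary~\ref{cor:nODE} to Theorem~\ref{th:FACCn}. I will show that, for each fixed $n\in\Nat^{>1}$, $\ell$-$n$ODE$^*$ and $\ell$-$n$ODE define exactly the same function from the same pair $(g,k)$. Then $[\mathscr{B}_0; \circ, \ell\text{-ODE}_1, \ell\text{-}n\text{ODE}^*]$ and $[\mathscr{B}_0; \circ, \ell\text{-ODE}_1, \ell\text{-}n\text{ODE}]$ coincide syntactically up to renaming of the schema. The corollary then follows immediately from Theorem~\ref{th:FACCn}.

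\textbf{Step 1: invariant.} Fix $g:\Nat^p\to\{0,\dots,n-1\}$ and $k:\Nat^{p+1}\to\{0,1\}$. Let $f^*$ be the solution of $\ell$-$n$ODE$^*$ and $f$ the solution of $\ell$-$n$ODE. Recall that both solutions change only at the points $x=\alpha(u)+1$, where $\ell$ increments. By induction along $u=-1,0,\dots,\ell(x)-1$ (equivalently, on $\ell(x)$), I will show that both take values in $\{0,\dots,n-1\}$. The base case holds since $f(0,\tu y)=f^*(0,\tu y)=g(\tu y)\le n-1$. For the step, the case analysis given right after the definition of $\ell$-$n$ODE$^*$ (and the one in the proof of Prop.~\ref{prop:nODE}) shows the following for $v\in\{0,\dots,n-1\}$. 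If $k=0$, the next value equals $v$. If $k=1$, the next value is $v+1$ when $v<n-1$ and $0$ when $v=n-1$.

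\textbf{Step 2: identical one-step maps.} I will verify that for $v\in\{0,\dots,n-1\}$ and $\kappa\in\{0,1\}$,
$$-n\kappa\Big\lfloor \tfrac{v}{n-1}\Big\rfloor+\kappa \;=\; -n\Big\lfloor \tfrac{v+\kappa}{n}\Big\rfloor+\kappa .$$
This is a direct check. For $\kappa=0$ both sides vanish, because $v<n$. For $\kappa=1$, $\lfloor v/(n-1)\rfloor=\lfloor (v+1)/n\rfloor$, and both equal $1$ iff $v=n-1$ (using $v\le n-1$ and $n\ge2$). Hence the two recurrences coincide on the reachable domain, and by induction on $\ell(x)$ we get $f=f^*$ everywhere. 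Note also that $\lfloor (f+k)/n\rfloor$, restricted to values at most $n$, is just an equality test on constant-size values, so it is expressible in $\mathscr{B}_0$. The definitions therefore stay within the intended framework.

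\textbf{Step 3: conclude.} Every use of one schema can be replaced by the other with the same arguments, so the two algebras are equal. Theorem~\ref{th:FACCn} then gives $\ACCp{n}=[\mathscr{B}_0; \circ, \ell\text{-ODE}_1, \ell\text{-}n\text{ODE}^*]$. The only point requiring care, and the one I expect to be the main obstacle, is the invariant $f^*\le n-1$. Without it, the two floors differ: for instance, at $v=n-1$ with $\kappa=0$ the left-hand side would still differ for larger $v$. So Step~1 must come before Step~2, and the hypothesis that $g$ takes values below $n$ is essential. The paper's mention of $n=2$ versus $\ell$-2ODE is just the instance $n=2$ of the same identity.
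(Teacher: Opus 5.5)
Your proposal is correct and follows essentially the paper's route. Like the paper, you show that solutions stay in $\{0,\dots,n-1\}$ because $g\le n-1$, check that on this range $\ell$-$n$ODE$^*$ and $\ell$-$n$ODE perform the same update (unchanged if $k=0$, increment modulo $n$ if $k=1$), and then transfer Theorem~\ref{th:FACCn}. Your explicit identity between the two right-hand sides is a cleaner packaging of the paper's case analysis, though your closing remark about $v=n-1$, $\kappa=0$ is garbled: the two maps first differ at $v\ge n$ with $\kappa=0$, which is exactly what the invariant rules out.
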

\paragraph{On non-strict schemas for counting modulo $n$}
\begin{proof}[Proof of Remark~\ref{remark:nonstrict}]
    For any $n\in \Nat^{>1}$, it is clear that the initial value is the same.
    Then, considering computation steps, for any $t \in \{0,\dots,x\}$ and $t'\in \{0,\dots, x\}$ such that $\ell(t')=\ell(t)+1$, since $f(t, \tu y)< n-1$: if $k(t,\tu y)=0$, $f(t',\tu y)=f(t,\tu y)$; if $k(t,\tu y)=1$,
    \begin{align*}
        f(t', \tu y) &= - n \times \fun{sg}\big(f(t,\tu y) - c\big) + 1 \\
        &= \begin{cases}
             f(t, \tu y) + 1 \quad \; \; \;   &\text{if } f(t,\tu y) \leq c \\
             f(t, \tu y) - n + 1 &\text{if } f(t, \tu y) = c + 1
        \end{cases} \\
        &= \begin{cases}
             f(t, \tu y) + 1 \quad \quad &\text{if } f(t,\tu y) < n - 1 \\
             0 &\text{if } f(t, \tu y) = n - 1 
        \end{cases}
    \end{align*}
    which is precisely the computation performed by $\ell$-$n$ODE* (recall that $c=n-2$).
\end{proof}
\begin{corollary}
    For any $n\in \Nat^{>1}$, the algebra defined as the closure of $\mathscr{B}_0$ under composition, \emph{$\ell$-ODE$_1$} and the ($n$ instance of the) schema of Remark~\ref{remark:nonstrict} precisely characterizes $\ACCp{n}$. 
\end{corollary}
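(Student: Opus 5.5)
The plan is to reduce the corollary to Th.~\ref{th:FACCn} by showing that, for each fixed $n\in\Nat^{>1}$, the schema of Remark~\ref{remark:nonstrict} and $\ell$-$n$ODE define exactly the same functions from the same $g$ and $k$. Then the two algebras $[\mathscr{B}_0;\circ,\ell\text{-ODE}_1,\ell\text{-}n\text{ODE}]$ and the one built with the non-strict schema coincide. Once that is established, the characterization of $\ACCp{n}$ transfers verbatim.

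\textbf{Step 1: invariant.} First I would prove, by induction on $\ell(x)$, that any solution $f$ of the non-strict schema with $g\in\{0,\dots,n-1\}$ and $k\in\{0,1\}$ satisfies $0\le f(x,\tu y)\le n-1$. This uses $f(x,\tu y)=f(\alpha(\ell(x)-1),\tu y)$ between length changes. At a length step with $k=0$, the value is unchanged. With $k=1$, the new value is $f+1-n\cdot\fun{sg}(f-(n-2))$. Since $f\le n-1$, we have $\fun{sg}(f-(n-2))=1$ exactly when $f=n-1$. So the update is $f+1$ for $f<n-1$ and $0$ for $f=n-1$, i.e.\ $f\mapsto (f+1)\bmod n$. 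The same case analysis is the one already done for $\ell$-$n$ODE before Prop.~\ref{prop:nODE}, where $\lfloor f/(n-1)\rfloor=1$ iff $f=n-1$ under the same invariant.

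\textbf{Step 2: equality of the two solutions.} The two recurrences share the initial value and the identical one-step update on $\{0,\dots,n-1\}$. Hence, by induction over the successive values $\alpha(u)$, $u=-1,\dots,\ell(x)-1$, both solutions equal $\sharp\{t\in[-1,\ell(x)-1]:k(\alpha(t),\tu y)=1\}\bmod n$, with the convention $k(\alpha(-1),\tu y)$ standing for $g$.

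\textbf{Step 3: closure and completeness.} For soundness, every function defined by the non-strict schema is therefore defined by $\ell$-$n$ODE, so Prop.~\ref{prop:nODE} gives closure of $\ACCp{n}$. For completeness, $\fun{cmodn}$ from the proof of Th.~\ref{th:FACCn} (with $g=0$, $k(x,\tu y)=\fun{BIT}(\ell(x),\tu y)$) is obtained equally by the non-strict schema. The rest of the simulation of constant-depth circuits through $\ACDL$ is unchanged. Each direction is then a structural induction on terms of the algebra, replacing one schema by the other.

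The only delicate point is the boundary case $n=2$, where $c=n-2=0$. There $\fun{sg}(f)$ is $1$ iff $f=1$, so the argument still goes through. I would check it explicitly, since the remark is phrased for $n>2$ while the corollary claims $n>1$; for $n=2$ the schema reduces to $\ell$-2ODE, because $\fun{sg}(f)=f$ on $\{0,1\}$. The rest is routine bookkeeping.
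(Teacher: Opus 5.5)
Your proposal is correct and follows essentially the paper's route. In the proof of Remark~\ref{remark:nonstrict}, the paper shows that, from the same initial value and with $k\in\{0,1\}$, the non-strict update is $f\mapsto f+1$ for $f<n-1$ and $f\mapsto 0$ for $f=n-1$, so it coincides step by step with $\ell$-$n$ODE$^*$ and the characterization is inherited via Cor.~\ref{cor:nODE} (hence Th.~\ref{th:FACCn}). Your version differs only in comparing directly with $\ell$-$n$ODE, stating the invariant correctly as $0\le f\le n-1$ (the paper's proof writes $f<n-1$), and checking explicitly the case $n=2$, which the remark's hypothesis $n>2$ omits but the corollary claims.
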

\noindent
In fact, this proof holds even for the stronger version of the schema allowing for which $g:\Nat^{p} \to \Nat$.
\end{toappendix}

\subsection{Reaching $\TC^0$, again}\label{sec:TC}

The family of schemas in Def.~\ref{def:nODE} $\ell$-$n$ODE have been so defined that $n \in \Nat^{>2}$. 
Indeed, a schema analogous to $\ell$-$n$ODE but with $n=0$ is stronger than any instance of $\ell$-$n$ODE as, together with $\ACDL$, it is enough to capture $\TC^0$.

\begin{definition}[$\ell$-0ODE schema]
    Let $g:\Nat^p \to \{0,1\}$ and $k:\Nat^{p+1} \to \{0,1\}$, then $f:\Nat^{p+1}\to \Nat$ is defined by $\ell$\emph{-0ODE} from $g$ and $k$ if it is the solution of the IVP with initial value $f(0,\tu y)=g(\tu y)$ and s.t.~$
    \frac{\partial f(x,\tu y)}{\partial \ell} =  k(x, \tu y).
    $
\end{definition}

\noindent
Computation performed by this schema can be easily obtained by iterated (bit-)sum of the values of $k$, 
which is in $\TC^0$~\cite{Vollmer}, so that the desired closure property follows. 
Conversely, it is easy to see that $\ell$-0ODE can compute \textsc{BCount}, considering $k(x,\tu y)=\fun{BIT}(\ell(x), \tu y)$ and computing $f(x,x)$. The result below follows immediately (see App.~\ref{app:TC}).

\begin{theorem}\label{th:TC}
    $\TC^0 = [\mathscr{B}_0; \circ, \ell\emph{-ODE}_1, \ell\text{-}0\emph{ODE}]$
\end{theorem}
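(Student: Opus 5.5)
We prove the two inclusions separately, following the pattern of Theorem~\ref{theorem:ACCt} and Theorem~\ref{th:FACCn}.

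For $(\supseteq)$, closure of $\TC^0$ under the basic functions of $\mathscr{B}_0$, composition and $\ell$-ODE$_1$ is obtained as in~\cite{ADK24a}, since $\AC^0\subseteq\TC^0$ and these closure arguments only use constant-depth, polynomial-size constructions. The new case is $\ell$-0ODE. Unfolding the $\ell$-ODE solution formula with $A=0$ and $B=k$ gives
\[
f(x,\tu y)=g(\tu y)+\sum_{u=0}^{\ell(x)-1}k(\alpha(u),\tu y),
\]
that is, the number of $1$'s among at most $\ell(x)+1$ bits. The circuit first computes, in parallel, $g(\tu y)$ and each $k(\alpha(u),\tu y)$, which is possible in $\TC^0$ by hypothesis; there are only polynomially many such values in the input length. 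It then applies the iterated bit-sum (\textsc{BCount}) of these bits, which is in $\TC^0$~\cite{Vollmer}. The output has $O(\log\ell(x))$ bits, so writing it in binary is unproblematic. One detail must be checked: $\alpha(u)$ and the index range have to be computable uniformly in constant depth. This is standard, since $\alpha(u)$ is just $u$ ones.

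For $(\subseteq)$, it suffices to show that the algebra simulates \textsc{Maj} gates. Constant-depth, polynomial-size computation with $\wedge,\vee,\neg$ is already captured by $\ACDL$. Following the construction of Prop.~\ref{prop:nonUnif} and Theorem~\ref{th:FACCn}, we then add layers of threshold gates to the evaluation function. Define $\fun{count}(x,y)$ by $\ell$-0ODE with $g=\fun{BIT}(0,y)$ and $k(x,y)=\fun{BIT}(\ell(x),y)$. This returns the number of $1$'s among the bits of $y$ at positions up to $\ell(x)$, which is \textsc{BCount} when applied as $\fun{count}(x,x)$; the offset between $g$ and $k$ needs a small check.

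A majority gate on a vector of $m$ bits then outputs $\fun{sg}(2\cdot \fun{count}-m)$, or equivalently $\fun{sg}(\fun{count}+\fun{count}-m)$, using only $+$, $-$ and $\fun{sg}$ from $\mathscr{B}_0$, so no multiplication is needed. Here $m$ is the fan-in, obtained as a count of the connection predicate, itself an $\ell$-0ODE instance.

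The main obstacle is the packing step. The inputs of a \textsc{Maj} gate are the values $\fun{Eval}_{e-1}(i,x)$ for the predecessors $i$ of the gate. These must be assembled into a single number whose bits are then counted, and $\ell$-0ODE only counts along $\ell(x)$, i.e.\ logarithmically many steps. To fix this, pad the variable by using $x\#x$, or iterated smash, so that its length is at least $\ell(x)^k$. The $k$ inside the ODE can then directly be $\fun{Eval}_{e-1}(\ell(z),x)\cdot C(x,\ell(z),t)$, written with $\fun{sg}$ (as a minimum) rather than a product, and evaluated at $z$ ranging up to a number of length $\ell(x)^k$. This avoids explicit packing altogether, because $k$ may depend on $\ell(z)$ freely.

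Finally, the output is assembled bit by bit by $\ell$-ODE$_1$, exactly as in Prop.~\ref{prop:nonUnif}. The uniform case follows from the $\cc{Dlogtime}$ normal form, using the completeness argument of~\cite[Th.~2.5]{CloteTakeuti} for $\TC^0$, with multiplication replaced by BCount.
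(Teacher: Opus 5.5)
Your argument is correct. Your $(\supseteq)$ direction coincides with the paper's: unfold the solution to $g(\tu y)+\sum_{u=0}^{\ell(x)-1}k(\alpha(u),\tu y)$ and use that iterated (bit-)addition is in $\TC^0$.

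For $(\subseteq)$, however, you take a genuinely different route. The paper simulates no gates at all. It settles this inclusion in one line by invoking the equivalence of the new algebra with the algebra $\mathbb{TCDL}$ of~\cite{ADK24a}, already known to capture $\TC^0$; the main text only motivates this by noting that $\ell$-0ODE computes \textsc{BCount}. You instead give a direct level-by-level evaluation in the style of Prop.~\ref{prop:nonUnif} and Th.~\ref{th:FACCn}, computing a \textsc{Maj} gate as $\fun{sg}(\fun{count}+\fun{count}-m)$. Your padding trick is sound: at $z=\alpha(u)$ the step function sees $\ell(z)=u$, so an $\ell$-0ODE run up to a smash-padded argument ranges over all $\ell(x)^k$ predecessor indices without any packing.

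The paper's route is shorter, but it rests entirely on the cited characterization and leaves implicit how $\ell$-0ODE recovers the primitives of $\mathbb{TCDL}$. Yours is self-contained at the gate level and treats threshold gates exactly as the paper treats \textsc{Mod} $n$ gates. It also gives the non-uniform version for free. For the uniform statement it still relies on an external fact: either definability in $\ACDL$ of $\cc{Dlogtime}$ connection predicates, or Clote--Takeuti's theorem.

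Two small repairs are needed.
\begin{itemize}
\item With $g=\fun{BIT}(0,y)$ and $k(x,y)=\fun{BIT}(\ell(x),y)$, bit $0$ is counted twice, because $k(\alpha(0),y)=\fun{BIT}(0,y)$. Take $g=0$.
\item The phrase \emph{Clote--Takeuti with multiplication replaced by \textsc{BCount}} should be made precise, in one of two ways. You can define $\times$ in your algebra, so that their algebra is literally contained in yours: compute column sums by $\ell$-0ODE, pack them into fixed-width blocks by $\ell$-ODE$_1$, then add the resulting $O(\log n)$ numbers in $\AC^0$. Alternatively, translate the first-order-plus-majority-quantifier characterization of uniform $\TC^0$, with each majority quantifier becoming an $\ell$-0ODE count followed by a comparison.
\end{itemize}
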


This theorem, together with the fact that $\ell$-0ODE can simulate not only bit-counting but also counting modulo $n$ for any $n\in \Nat$, gives us an extremely simple proof that $\ACCp{n}$ is included in $\TC^0$ in a totally machine-independent framework.

\begin{proposition}\label{prop:nODEzODE}
    For any $n\in \Nat^{>1}$, 
    $
    [\mathscr{B}_0; \circ, \ell\emph{-ODE}_1, \ell\emph{-}n\emph{ODE}] \subseteq [\mathscr{B}_0; \circ, \ell\emph{-ODE}_1, \ell\emph{-}0\emph{ODE}].
    $
\end{proposition}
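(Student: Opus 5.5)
The inclusion will be shown directly inside the algebras, by structural induction on the definition of a function in $[\mathscr{B}_0; \circ, \ell\text{-ODE}_1, \ell\text{-}n\text{ODE}]$. Basic functions, composition and $\ell$-ODE$_1$ are shared by both algebras, so those cases are immediate. The only real case is a function $f$ defined by $\ell$-$n$ODE from $g$ and $k$, where by induction $g$ and $k$ already belong to the right-hand algebra. We must show that such an $f$ also belongs to the right-hand algebra.

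First step: closed form of $f$. As in the proof of Prop.~\ref{prop:nODE}, unfolding the $\ell$-$n$ODE equation gives
\[
f(x,\tu y) = \Big(g(\tu y) + \sum_{u=0}^{\ell(x)-1} k(\alpha(u),\tu y)\Big) \bmod n .
\]
This holds because each step with $k=1$ increments the value modulo $n$, and each step with $k=0$ leaves it unchanged.

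Second step: compute the inner sum with $\ell$-0ODE. Let $F$ be defined by $\ell$-0ODE with initial value $F(0,\tu y)=\fun{0}$ and $\frac{\partial F(x,\tu y)}{\partial \ell}=k(x,\tu y)$. This is a legitimate instance, since $k$ takes values in $\{0,1\}$. Then $F(x,\tu y)=\sum_{u=0}^{\ell(x)-1}k(\alpha(u),\tu y)\le \ell(x)$. Setting $S(x,\tu y)=F(x,\tu y)+g(\tu y)$ uses only $+$ and composition. The requirement that the initial value lie in $\{0,1\}$ is why $g$ is added afterwards rather than used as the initial value.

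Third step, the main obstacle: show that $z\mapsto z \bmod n$ is definable in the algebra for $z$ of size $O(\ell(x))$. The algebra has only $\div 2$, not $\div n$, so this is not immediate for odd $n$.

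The first approach to try avoids a general reduction modulo $n$ altogether. I would search, with a sharply bounded $\mu$-operator (Remark~\ref{remark:minimisation}), for the quotient $q\le S$ satisfying $0\le S-nq<n$. The test $S-nq\ge 0$ needs $nq$, but since $n$ is a constant this is just $q+q+\dots+q$ ($n$ summands). The test itself is a $\fun{sg}$-expression, so it is available. The output is then $S-nq$. The search range is $q\le S\le \ell(x)+1$, which is sharply bounded, since $S\le\ell(x)+1$ and the $\mu$-operator ranges over $z\le\ell(\cdot)$. If necessary, I would pass the search a dummy argument such as $x$, so that the bound is of the form $\ell(x)$.

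If the minimization route runs into trouble with the exact form of the bounds, the fallback is to use $\ell$-0ODE directly. Count the positions $u<\ell(x)$ such that $S$ reaches a multiple of $n$: compute a function $T(z)$ by $\ell$-0ODE with increment $\fun{cosg}$ of a test of the form "$\ell(z)=nj$ for some sharply bounded $j$". Either route uses only constant multiples and sharply bounded search, both already available in $\ACDL$, and so completes the induction step.
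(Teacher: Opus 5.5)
Your argument is correct, but it takes a different route from the paper's.

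\textbf{The paper's route.} The paper does not induct on the left-hand algebra. It defines the bit-count $\fun{bc}$ by $\ell$-0ODE and sets $\fun{mod_n}(x)=\fun{bc}(x)-\lfloor \fun{bc}(x)/n\rfloor\times n$. It then justifies that this expression lies in the right-hand algebra by citing two facts: multiplication and integer division are in $\TC^0$, and the right-hand algebra is complete for $\TC^0$ (Th.~\ref{th:TC}). The step from this canonical counter to arbitrary instances of $\ell$-$n$ODE is left implicit; it goes through the circuit characterization of Th.~\ref{th:FACCn}.

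\textbf{Your route.} You simulate each instance of $\ell$-$n$ODE directly inside the algebra. The same $k$ is summed by $\ell$-0ODE, and $g$ is added afterwards, which respects the $\{0,1\}$ constraint on initial values. The key observation is that the resulting sum is logarithmically small. Its quotient by the constant $n$ can therefore be found by a sharply bounded $\mu$-search (Remark~\ref{remark:minimisation}), with $n\cdot q$ written as repeated addition, so no division is needed.

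\textbf{What each buys.} Your approach gives a proof of the algebra inclusion that is internal to the algebras. It uses neither Th.~\ref{th:TC} nor any result about division in $\TC^0$. This is the kind of machine-independent argument the paper advertises around Corollary~\ref{cor:TC} but does not quite supply. The paper's version is shorter only because it leans on those circuit-level theorems.

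\textbf{One small correction.} Since $g$ ranges over $\{0,\dots,n-1\}$, you only get $S\le\ell(x)+n-1$, not $S\le\ell(x)+1$. So searching over ``$q\le S$'' does not fit in the range $q\le\ell(x)$. This is harmless. For $n\ge 2$ the quotient $\lfloor S/n\rfloor$ is still at most $\ell(x)$. Hence the least $q\le\ell(x)$ with $\fun{sg}(n(q+1)-S)=1$ is exactly $\lfloor S/n\rfloor$. This is a one-sided test, so you need no conjunction. You can then output $S-n\lfloor S/n\rfloor$ and drop the fallback route entirely.
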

\begin{proof}[Proof Sketch]
    %
    Let $\fun{bcount}(x,y)$ be defined by $\ell$-0ODE from $g$ and $k$ s.t.~$g(\tu y)=0$ and $k(x,\tu y)=\fun{BIT}(x,y)$.
    Clearly, $\fun{bc}(x) = \fun{bcount}(x,x)$ counts the number of 1's in the binary representation of $x$.
    We conclude by defining $\fun{mod_n}(x)=  \fun{bc}(x) - \Big\lfloor \frac{\fun{bc}(x)}{n} \Big\rfloor \times n$, which clearly returns the sum modulo $n$ of the 1's in the binary representation of the $x$, and whose construction only involves functions and schemas in the algebra of Th.~\ref{th:TC}. 
\end{proof}

\begin{corollary}\label{cor:TC}
    For any $n\in \Nat$, $\ACCp{n}\subseteq \mathbf{FACC^0} \subseteq \TC^0$.
\end{corollary}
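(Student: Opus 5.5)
The inclusion chain follows by combining the equalities already proved with Proposition~\ref{prop:nODEzODE}. I would treat the two inclusions separately.

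For the first inclusion, $\ACCp{n}\subseteq \mathbf{FACC^0}$, I would simply unfold definitions. By definition $\mathbf{FACC^0}=\bigcup_{m\ge 2}\ACCp{m}$, so for $n\in\Nat^{>1}$ the inclusion is immediate. The degenerate values $n\in\{0,1\}$ need a separate remark. \textsc{Mod 1} gates are constant, so $\ACCp{1}$ collapses to $\AC^0\subseteq \ACCp{2}$. The value $n=0$ is excluded by the indexing convention of Definition~\ref{def:nODE}, or I would simply restrict the statement to $n>1$ as in Theorem~\ref{th:FACCn}.

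For the second inclusion, $\mathbf{FACC^0}\subseteq\TC^0$, I would use the machine-independent route the paper advertises. Take any $f\in\mathbf{FACC^0}$; then $f\in\ACCp{m}$ for some $m\ge 2$. By Theorem~\ref{th:FACCn}, $f\in[\mathscr{B}_0;\circ,\ell\text{-ODE}_1,\ell\text{-}m\text{ODE}]$. By Proposition~\ref{prop:nODEzODE}, this algebra is contained in $[\mathscr{B}_0;\circ,\ell\text{-ODE}_1,\ell\text{-}0\text{ODE}]$. By Theorem~\ref{th:TC}, the latter equals $\TC^0$. Since this holds for each $m$, it holds for the union.

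The only real content is hidden in Proposition~\ref{prop:nODEzODE}, and that is where I expect the difficulty. Its sketch only simulates $\fun{cmodn}(x,x)$, i.e.\ counting bits of an input modulo $n$, whereas an arbitrary instance of $\ell$-$n$ODE counts modulo $n$ the values $k(\alpha(t),\tu y)$ for $t\le \ell(x)-1$, plus $g(\tu y)$. To close this gap, given $f$ defined by $\ell$-$n$ODE from $g,k$, I would define $F$ by $\ell$-0ODE with $\partial F/\partial\ell=k(x,\tu y)$ and $F(0,\tu y)=\fun{sg}(g(\tu y))$. The issue is that $g$ may take values up to $n-1$, while $\ell$-0ODE only allows initial values in $\{0,1\}$. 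I would handle this by starting $F$ at $0$ and adding $g(\tu y)$ afterwards, which composition allows. Then $f(x,\tu y)=(g(\tu y)+F(x,\tu y))\bmod n$.

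It remains to note that the reduction modulo a fixed constant $n$ is available in the $\TC^0$ algebra. Computing $z-n\lfloor z/n\rfloor$ on numbers of logarithmic length needs division by a constant, which is in $\TC^0$ and hence, by Theorem~\ref{th:TC}, in the algebra. Alternatively, since $z\le \ell(x)+n$ is small, one can find $\lfloor z/n\rfloor$ by sharply bounded minimization (Remark~\ref{remark:minimisation}), using multiplication by the constant $n$ as iterated addition. With this, the chain of inclusions is complete.
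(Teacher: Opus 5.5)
Your proposal is correct and follows the paper's own route. The corollary is obtained by chaining Theorem~\ref{th:FACCn}, Proposition~\ref{prop:nODEzODE} and Theorem~\ref{th:TC}, with $\mathbf{FACC^0}$ treated as the union of the $\ACCp{m}$. Your additional work correctly fills in details the paper leaves implicit: simulating an arbitrary $\ell$-$n$ODE instance as $f=(g+F)\bmod n$, with $F$ defined by $\ell$-0ODE from $0$ and $k$ (the paper's sketch of Proposition~\ref{prop:nODEzODE} only treats $\fun{cmodn}(x,x)$), and handling the degenerate values $n\in\{0,1\}$ hidden in the statement's ``for any $n\in\Nat$''.
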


\begin{toappendix}
\subsection{Proofs from Section~\ref{sec:TC}}\label{app:TC}



\begin{proposition}\label{prop:0ODE}
If $f$ is defined by $\ell$\emph{-0ODE} from $g$ and $k$ computable in $\TC^0$, then $f$ is also computable in $\TC^0$.
\end{proposition}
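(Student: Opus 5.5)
We first solve the $\ell$-0ODE explicitly, using the general solution of linear length ODEs from Sec.~\ref{sec:ODEs} with $A=0$ and $B=k$. All products $\prod(1+A)$ then equal $1$, and we get
$$
f(x,\tu y)=g(\tu y)+\sum_{u=0}^{\ell(x)-1} k(\alpha(u),\tu y).
$$
So $f(x,\tu y)$ is the number of $1$'s in the bit string $g(\tu y)\,k(\alpha(0),\tu y)\cdots k(\alpha(\ell(x)-1),\tu y)$. This is a \textsc{BCount} of a string of length $\ell(x)+1$, i.e., polynomial in the input length.

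The circuit is built in two layers.

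\emph{First layer.} In parallel, compute $g(\tu y)$ and, for each $u\in\{0,\dots,\ell(x)-1\}$, the value $k(\alpha(u),\tu y)$. The argument $\alpha(u)=2^u-1$ is the string of $u$ ones. For each fixed position $u$ it can be produced from the input in constant depth: $u$ ranges only over indices bounded by the input length, so each $\alpha(u)$ is hardwired by comparing bit positions with $u$. Feeding these strings into polynomially many copies of the $\TC^0$ circuit for $k$, plus one copy of the circuit for $g$, stays within polynomial size and constant depth. $\cc{Dlogtime}$-uniformity is preserved because the copies are indexed by $u$ in a simple way.

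\emph{Second layer.} Feed these $\ell(x)+1$ bits into a $\TC^0$ circuit for iterated bit-sum (\textsc{BCount}), which outputs the binary representation of the count. This is a standard fact, e.g.\ from~\cite{Vollmer}: for each $j$, the $j$-th output bit is obtained from threshold gates testing ``at least $m$ ones'' for every $m$, combined in constant depth. Composing two constant-depth, polynomial-size threshold circuits yields a $\TC^0$ circuit for $f$.

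The main obstacle is the uniformity and the bookkeeping in the first layer: we must make sure the polynomially many instances $k(\alpha(u),\tu y)$ are generated uniformly, and that the output length of $f$ is bounded by $\ell(\ell(x)+1)$, so the output is well defined. If needed, we will instead argue via the converse inclusion, using the algebraic characterization $\TC^0=[\mathscr{B}_0;\circ,\ell\text{-pODE}]$ and closure of $\TC^0$ under composition with \textsc{BCount} and sharply bounded operations. In that version we write $f(x,\tu y)=g(\tu y)+\fun{bc}(z)$, where $z$ is built by $\ell$-ODE$_1$ from $k(\alpha(\ell(\cdot)),\tu y)$, which is in $\ACDL\subseteq\TC^0$.
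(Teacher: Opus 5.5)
Your argument is correct and is essentially the paper's: both solve the $\ell$-0ODE explicitly as $f(x,\tu y)=g(\tu y)+\sum_{u=0}^{\ell(x)-1}k(\alpha(u),\tu y)$ and conclude because this iterated (bit-)addition is in $\TC^0$; your two-layer circuit just spells out what the paper takes as a standard fact. The fallback paragraph is unnecessary and can be dropped; its claim that $z$ lies in $\ACDL$ would not hold when $k$ is merely in $\TC^0$, since $z$ is then only in $\TC^0$, by closure under $\ell$-ODE$_1$.
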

\begin{proof}
    By Df.~\ref{def:ODE}, $f(x,\tu y)=\sum^{\ell(x)-1}_{u=-1} k(\alpha(u),\tu y)$, with the convention that $k(\alpha(-1), \tu y)=g(\tu y)$.
    This is nothing but an iterated addition, which is computable in $\TC^0$~\cite{Vollmer}.
\end{proof}

\begin{remark}\label{remark:ODEzgen}
    Actually closure of $\ell$-0ODE can be generalized so that $k:\Nat^{k+1}\to \Nat$ and the closure property is established in the same way. 
\end{remark}

\begin{proof}
    The proof is precisely as for Prop.~\ref{prop:0ODE}, but here really exploiting the power of iterated addition (and not simply iterated bit-addition).
\end{proof}

\begin{proof}[Proof of Th.~\ref{th:TC}]
   $(\supseteq)$ Basic functions and schemas other than $\ell$-0ODE are already in $\ACDL$, plus the closure property of $\ell$-0ODE (Prop.~\ref{prop:0ODE}).
    \\
    $(\subseteq)$ Due to the equivalence between this ODE-based function algebra and $\mathbb{TCDL}$ (characterizing $\TC^0$)~\cite{ADK24a}.
\end{proof}

\begin{proof}[Proof of Prop.~\ref{prop:nODEzODE}]
Let $\fun{bcount}(x, y)$ be defined as the solution of the IVP below:
\begin{align*}
    f(0, y) &= 0 \\
    \frac{\partial f(x,y)}{\partial \ell} &= \fun{BIT}(x,y).
\end{align*}
Clearly, this is an instance of $\ell$-0ODE and, since $\fun{BIT}$ is in $\mathscr{B}_0$, it is definable in $[\mathscr{B}_0; \circ, \ell$-ODE$_1$, $\ell$-0ODE$]$.
As, by Df.~\ref{def:ODE}, the solution of this system is $f(x,\tu y) = \sum^{\ell(x)}_{u=0}\fun{BIT}(x,y)$, when computing $\fun{bc}(x)=\fun{bcount}(x,x)$ we are precisely computing \textsc{Bcount}.
Then, the function $\fun{mod_n}$ computing the sum modulo $n$ of the bits in the binary representation of its input is obtained by simply considering computing $\fun{bc}(x)-\big\lfloor \frac{\fun{bc}(x)}{n}\big\rfloor \times n$.
All operations involved are in [$\mathscr{B}_0; \circ, \ell$-ODE$_1$, $\ell$-0ODE$]$ as, in particular, multiplication and integer division are known to be in $\TC^0$~\cite{Hesse} and the given algebra has been proved complete with respect to this class (Th.~\ref{th:TC}).
\end{proof}

\noindent
Due to Remark~\ref{remark:ODEzgen}, the same characterization can be obtained considering the closure of $\mathscr{B}_0$ under $\circ$, $\ell$-ODE$_1$ and \emph{generalized} $\ell$-0ODE.
\end{toappendix}


\section{From ODE-based algebras to bounded arithmetic}\label{sec:bArit}

\begin{toappendix}
\section{Proofs from Section~\ref{sec:bArit}}
In~\cite{CookNguyen}, together with a second-order theory to characterize $\TC^0$, Cook and Nguyen defined a family of theories, called $\bth{V^0(p)}$ to capture $\ACCp{p}$.
In~\cite{Johannsen}, Johannsen introduced a first-order theory, called $\overline{\bth{R_0}}$, to characterize $\TC^0$, showing that the functions which are $\Sigma^b_1$-definable in $\overline{\bth{R_0}}$ are precisely those in $\TC^0$.
Later, another minimal theory for this class, called $\Delta_1^b$-$\bth{CR}$, was introduced in~\cite{JP98}.
In the same period, in~\cite{Johannsen}, the equational theories $\bth{A2V}$ and $\bth{TV}$ were defined to capture $\ACCt$ and $\TC^0$, resp. These theories are isomorphic with Cook and Nguyen's $\bth{V^0}(2)$ and $\bth{VTC^0}$.
\end{toappendix}

Inspired by the ODE-design of schemas presented in the previous section, we introduce proof-theoretical characterizations for constant-depth classes, including modulo counting.
In Sec.~\ref{sec:TAC}, we recall the definition of the first-order theory $\bth{TAC^0}$, introduced in~\cite{CloteTakeuti} to capture $\AC^0$, and show that passing through $\ACDL$ simplifies the 
proof. 
In Sec.~\ref{sec:BA}, we present new purely syntactical characterizations for all classes $\ACCp{n}$, by introducing a family of simple ``counting-modulo $n$'' axioms.\arxiv{(parametrized on the corresponding $n$).} Finally, in Sec.~\ref{sec:BA characterization with new rule}, we propose an alternative approach, this time considering a new rule.

\subsection{Bounded theory for $\AC^0$, revised}\label{sec:TAC}

In~\cite{CloteTakeuti}, it is shown that the functions in $\AC^0$ are precisely those which are esb-definable in the first-order theory $\bth{TAC^0}$ (Def.~\ref{def:TAC}).
While the original proof passes through Clote's $\mathcal{A}_0$~\cite{Clote1990}, here we consider $\ACDL$ as our auxiliary algebra, yielding a more direct characterization proof.
The main difference lies in the treatment of cases involving Bit Comprehension, which intuitively corresponds to the computation performed by $\ell$-ODE$_1$, rather than by CRN.


%
A formula is said to be \emph{essentially sharply bounded in a theory $T$}, if it is in the smallest family including a set of atomic formulas, closed under Boolean connectives and sharply bounded quantification and s.t., if $A(\tu a, x)$ and $B(\tu a, x)$ are in the class and  $T \vdash \exists !x \leq s(\tu a).A(\tu a,x)$, then also $\exists x \leq s(\tu a) (A(\tu a, x) \wedge B(\tu a, x))$ and $\forall x\leq s(\tu a) (A(\tu a, x) \to B(\tu a, x))$ are.

\begin{definition}[esb-definability]
    A function is said to be \emph{esb-definable} in a theory $T$ if there is an esb-formula $A$ and a term $t$ s.t. 
        $(i)$ $T\vdash \forall \tu x \exists y \leq t(\tu x).A(\tu x, y)$
        $(ii)$ $T \vdash \forall \tu x \forall y \forall z(A(\tu x, y) \wedge A(\tu x, z) \to y = z)$
        and $(iii)$ $\forall \tu x A(\tu x, f(\tu x))$ is satisfied.
\end{definition}

\noindent
\begin{sloppypar}
\noindent
This is a restricted form of $\Sigma^b_k$-definability~\cite{Buss}, where $\Sigma^b_k$-formulas are replaced by esb ones.
\arxiv{
Notably, this is very close to the standard notion of $\Sigma_k^b$-definability~\cite{Buss}, where $\Sigma^b_k$-formulas are replaced by essentially sharply bounded ones.}
\end{sloppypar}
The language of $\bth{TAC^0}$ is made of $\bt{0}$, $\bt{1}$, $|\cdot|$, $\div 2$, +, $-$, \#, $\leq, =, \bt{pad}, \bt{msp}$, where (informally) $|x|$ is \arxiv{corresponds to }the length symbol, $\bt{pad}(x,y)$ to $2^{\ell(y)}\times x$ and $\bt{msp}(x,y)$ to $\big\lfloor \frac{x}{2^y}\big\rfloor$.

\begin{notation}
    \arxiv{In the following, w}
    We will use the following standard abbreviations for logical and arithmetic symbols: for any formula $A$ and $B$, $A\leftrightarrow B$ is a shorthand for $(A\to B) \wedge (B\to A)$, $a \neq b$ for $\neg(a=b)$, $2\times a = a + a$,
    $\bt{mod2}(a)=a-2\times (a\div 2)$ and $\bt{bit}(i,a)=\bt{mod2}(\bt{msp}(a,i))$.
\end{notation}


\begin{definition}[Theory $\bth{TAC^0}$]\label{def:TAC}
    The theory $\bth{TAC^0}$ is defined by axioms for all basic functions and predicates in its language, plus the following peculiar ones:
    \begin{itemize}
        \itemsep0em
        \begin{sloppypar}
        \item Bit Extensionality axiom \emph{(BE):} $|a|=|b|, \forall i < |a| \big(\bt{bit}(i,a) = \bt{bit}(i,b)\big) \to a=b$
        \end{sloppypar}
        \item Bit Comprehension axiom scheme \emph{(BC):} $\exists y < 2^{|s(\tu a)|}\forall i < |s(\tu a)|\big(\bt{bit}(i, y) = 1 \leftrightarrow A(\tu a, i)\big)$, where $A(\tu a, i)$ is esb
    \end{itemize}
    and the inference rule \emph{esb-LIND}:
    \begin{prooftree}
        \AxiomC{$A(a), \Gamma \Rightarrow \Delta, A(a+1)$}
        \RightLabel{\emph{esb-LIND}}
        \UnaryInfC{$A(0), \Gamma \Rightarrow \Delta, A(|t|)$}
    \end{prooftree}
    where $A(a)$ is esb and $a$ satisfies the eigenvariable condition.
\end{definition}

The characterization proof for $\AC^0$ in terms of esb-definability in $\bth{TAC}^0$ is structurally similar to that in~\cite{CloteTakeuti} but, as anticipated, it relies on $\ACDL$ rather than Clote's $\mathcal{A}_0$~\cite{Clote1990}.

\begin{theorem}\label{th:TAC}
    A function $f$ is in $\AC^0$ iff it is esb-definable in $\bth{TAC}^0$
\end{theorem}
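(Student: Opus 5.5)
We split the proof into the two directions and use $\ACDL=[\mathscr{B}_0;\circ,\ell\text{-ODE}_1]$ as the auxiliary algebra, relying on the known equality $\ACDL=\AC^0$~\cite{ADK24a,ADK25}.

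\emph{($\Rightarrow$) Every $f\in\ACDL$ is esb-definable in $\bth{TAC^0}$.} We argue by induction on the construction of $f$ in $\ACDL$. For the basic functions of $\mathscr{B}_0$, we take the graph formula given by the corresponding symbols of the language. Here $\fun{BIT}$ is $\bt{bit}$, $\ell$ is $|\cdot|$, $\#$ and $\div 2$ are primitive, and $\fun{sg}$ is defined by cases from $\leq$. Existence, uniqueness and the bounding term then follow from the defining axioms. For composition, we conjoin the defining esb formulas. Uniqueness of the intermediate values is provable, so the formula $\exists z\leq t(A_g\wedge A_h)$ is again esb by the closure clause of the definition. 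The real case is $\ell$-ODE$_1$. By the explicit solution formula for linear $\ell$-ODEs, $f(x,\tu y)$ is the number whose binary representation is $g(\tu y)$ followed by the bits $k(\alpha(0),\tu y),\dots,k(\alpha(\ell(x)-1),\tu y)$. That is,
\[
\bt{bit}(i,f(x,\tu y))=
\begin{cases}
k(\alpha(\ell(x)-1-i),\tu y) & \text{if } i<\ell(x),\\
\bt{bit}(i-\ell(x),g(\tu y)) & \text{if } \ell(x)\le i<\ell(x)+\ell(g(\tu y)).
\end{cases}
\]
This condition is esb in $i$, given the esb definitions of $g$ and $k$. Note that $\alpha(u)$ is expressible as a term, $\bt{msp}$ applied to a padding of $|x|$ ones, or can be defined via BC itself. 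We then apply Bit Comprehension with bound $s=\bt{pad}(g(\tu y),x)$, or a similar term, to obtain existence. Bit Extensionality gives uniqueness. The defining formula $\forall i<|s|\,(\bt{bit}(i,y)=1\leftrightarrow\dots)$ is sharply bounded, hence esb. This is exactly where $\ACDL$ simplifies the argument compared with CRN: BC directly mirrors a single $\ell$-ODE$_1$ step, with no iteration in the proof.

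\emph{($\Leftarrow$) Every esb-definable function is in $\AC^0$.} We follow the witnessing strategy of~\cite{CloteTakeuti}. First, by induction on the definition of esb formulas, we show that each esb formula has a characteristic function in $\ACDL$. Atomic formulas are handled by basic functions and $\fun{sg}$. Boolean connectives are handled by limited $\fun{sg}$-expressions. Sharply bounded quantifiers are handled by Remark~\ref{remark:minimisation}. The two special clauses $\exists x\le s(A\wedge B)$ and $\forall x\le s(A\to B)$, with $A$ provably functional, reduce to composition: we substitute the $\ACDL$-witness of $A$, which exists by induction. Second, we prove a witnessing lemma by cut-elimination. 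In a free-cut-free $\bth{TAC^0}$ proof of $\forall\tu x\exists y\le t\,A$, every sequent consists of esb formulas, possibly with an outer bounded existential. We show by induction on the proof that the outer witnesses are computable in $\ACDL$.

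The main obstacle is the esb-LIND rule. An induction of length $|t|$ naively corresponds to an iteration of length $\ell(t)$, which is not obviously constant depth. We plan to avoid this by noting that esb-LIND is only applied to esb formulas without outer existentials. These formulas have $\ACDL$ characteristic functions and need no witness, so the rule is sound in the standard model and contributes no witnessing computation. Where side formulas in $\Delta$ carry existential witnesses, we take, via the sharply bounded $\mu$-operator of Remark~\ref{remark:minimisation}, the least $a<|t|$ at which the induction step fails. The witness from that instance of the premise is then an $\ACDL$ function. BC instances are witnessed directly by an $\ell$-ODE$_1$ with $k$ the characteristic function of $A$, which is the converse of the construction in the first direction. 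Finally, combining the witnessing lemma with $\ACDL=\AC^0$ yields the theorem.
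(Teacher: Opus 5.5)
Your proposal is correct and follows essentially the same route as the paper. In one direction you argue by induction over $\ACDL$, with Bit Comprehension (plus Bit Extensionality for uniqueness) handling $\ell$-ODE$_1$. In the other you use a free-cut-free witnessing argument in which esb formulas get $\ACDL$ characteristic functions and (BC) instances are witnessed by an $\ell$-ODE$_1$ built from the characteristic function of the comprehension formula. The differences are only in detail: you treat a nonzero initial value $g$ directly instead of assuming $g=0$, and you spell out the esb-LIND case via the sharply bounded $\mu$-operator, which the paper simply defers to Clote--Takeuti as standard.
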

\begin{proof}[Proof Sketch]
    $(\subseteq)$ By induction on the structure of functions in $\ACDL$ (via~\cite{ADK24a}). All base and inductive cases except for $\ell$-ODE$_1$ are standard. Let $f$ be defined by $\ell$-ODE$_1$ from $g$ and $k$ in $\ACDL$ with codomain $\{0,1\}$.
    W.l.o.g. let $g(\tu y)=0$.
    By Df.~\ref{def:ODE}, $f(x,\tu y)=\sum^{\ell(x)-1}_{i=0} 2^{\ell(x)-i} \times k(x,\tu y)$, that is~$\fun{BIT}(j, f(x,\tu y))= k(\alpha(\ell(x)- (j+1)), \tu y)$, for any $j\in\{0,\dots, \ell(x)-1\}$.
    By IH, $k$ is esb-definable and, by (BC), $f(m,\tu n)$ is uniquely defined by $y$, $\exists y \leq 2^{|m|}\forall i \leq |m|(\bt{bit}(i,y)=1 \leftrightarrow k(2^{|m|-(i+1)}-1, \tu n)=1)$.
    %
    %
    $(\supseteq)$ By simultaneous induction on the free cut free derivation height.
    %
    The most interesting case is that of derivation of height 0 defined by an initial sequent (BC).
    W.l.o.g.~let $A$ be sharply bounded (by IH).
    It can be shown that, for any sharply bounded formula, there is a corresponding function in $\ACDL$ which is 1 when the formula holds and 0 otherwise. 
    Let $k_A$ be the function corresponding to $A(\tu a, i)$.
    Then, the desired function is constructed by defining $f_a$ as the solution of the IVP s.t.~$f_a(0,\tu y)=0$ and $\frac{\partial f_a(x)}{\partial \ell}=f_a(x)+k_A(\ell(x),\tu a)$. 
    We conclude by considering $f_A(\tu a)=f_a(2^{\ell(
    s(\tu a))}, \tu a)$, in $\ACDL$, s.t.~$f_A(\tu a) \leq 2^{|s(\tu a)|}$ and $\forall i \leq |s(\tu a)|\big(\bt{bit}(i,f_A(\tu a))=1 \leftrightarrow k_A(\ell(x)-(i+1), \tu a)\big)$ are 
    satisfied.
\end{proof}

\begin{toappendix}

\subsection{Proofs from Section~\ref{sec:TAC}}\label{app:TAC}

\begin{definition}\label{def:TACax}
    The theory $\bth{TAC^0}$ is defined by the following basic axioms:
    %

\begin{description}
    \itemsep0em
    \item[\bt{ax1}] $a\leq b \to a \leq b+1$
    \item[\bt{ax2}] $a \neq a+1$
    \item[\bt{ax3}] $a \leq a$
    \item[\bt{ax4}] $a\leq b \wedge a \neq b \leftrightarrow a+1 \leq b$
    \item[\bt{ax5}] $a \neq 0 \to 2\times a \neq 0$
    \item[\bt{ax6}] $a\leq b \vee b \leq a$
    \item[\bt{ax7}] $a\leq b \wedge b\leq a \to a = b$
    \item[\bt{ax8}] $a\leq b \wedge b\leq c \to a\leq c$
    \item[\bt{ax9}] $|\bt{0}| = \bt{0}$
    \item[\bt{ax10}] $|1| = 1$
    \item[\bt{ax11}] $a \neq 0 \to \ell(2\times a) = \ell(a)+1 \wedge \ell(2\times a+1) = \ell(a)+1$ 
    \item[\bt{ax12}] $a\leq b \to \ell(a) \leq \ell(b)$
    \item[\bt{ax13}] $|a\# b|= (|a|\times |b|)+1$
    \item[\bt{ax14}] $0\# a=1$
    \item[\bt{ax15}] $a \neq 0 \to 1\#(2\times a) = 2\times (1\# a) \wedge 1\#(2\times a+1) = 2\times (1\# a)$
    \item[\bt{ax16}] $a\#b=b\#a$,
    \item[\bt{ax17}] $|a|=|b| \to a\#b = b\#c$
    \item[\bt{ax18}] $|a|=|b|+|c| \to a\# d=(b\#d) \times (c\# d)$
    \item[\bt{ax19}] $a\leq a+b$
    \item[\bt{ax20}] $a\leq b \wedge a\neq b\to (2\times a)+1 \leq 2\times b\wedge (2\times a)+1 \neq 2\times b$
    \item[\bt{ax21}] $a+b=b+a$
    \item[\bt{ax22}] $a+0=a$
    \item[\bt{ax23}] $a+(b+1) = (a+b)+1$
    \item[\bt{ax24}] $(a+b)+c = a+(b+c)$
    \item[\bt{ax25}] $a+b \leq a+c \leftrightarrow b\leq c$
    \item[\bt{ax26}] $a\neq 0 \to |a| = |a\div 2|+1$
    \item[\bt{ax27}] $a= b\div 2 \leftrightarrow 2 \times a = b\vee (2\times a)+1 = b$
    \item[\bt{ax28}] $\bt{pad}(a,0)=a$
    \item[\bt{ax29}] $\bt{pad}(a,b)=\bt{pad}(a, b\div 2) + \bt{pad}(a, b\div 2)$ provided $b\neq 0$
    \item[\bt{ax30}] $\bt{msp}(a,0) = a$
    \item[\bt{ax31}] $\bt{msp}(a,i+1)=\bt{msp}(a,i)\div 2$
\end{description}


\normalsize
\noindent
plus the peculiar axioms 
\begin{itemize}
    \itemsep0em
    \item[] Bit Extensionality axiom \emph{(BE)}:
    $$
    |a| = |b|, \forall i<|a| \big(\bt{bit}(i,a) = \bt{bit}(i,b)\big) \to a=b
    $$
    \item[] Bit Comprehension \emph{(BC)} axiom scheme:
    $$
    \exists y < 2^{|s(\tu a)|}\forall i < |s(\tu a)| \big(\bt{bit}(i,y) = 1 \leftrightarrow A(\tu a, i)\big)
    $$
    where $A(\tu a, i)$ is esb
\end{itemize}
and the inference rule \emph{esb-LIND}:

\begin{prooftree}
    \AxiomC{$A(a), \Gamma \Rightarrow \Delta, A(a+1)$}
    \RightLabel{\emph{esb-LIND}}
    \UnaryInfC{$A(0), \Gamma \Rightarrow \Delta, A(|t|)$}
\end{prooftree}
where $A(a)$ is esb and $a$ satisfies the eigenvariable condition.
\end{definition}

Following Buss, we will deal with a classical system $\bth{LK}$ endowed with esb-LIND and axioms from Df.~\ref{def:TACax}, so that the notion of proof is enlarged to allow initial sequents of the form $\Rightarrow A$, where $A$ is any substitution instance of an axiom listed above.
Recall that, for any rule of the sequent calculus $\bth{LK}$ except cut, we call the formula explicitly occurring in its lower sequent as \emph{principal}, formulas explicitly occurring in the upper sequent(s) are called \emph{auxiliary} (except for weakening), and formulas occurring in the cedents are called \emph{side} formulas.
Given the $\bth{LK}$-derivation $\mathscr{P}$ of a sequent $\Gamma \Rightarrow \Delta$, a formula is said to be \emph{anchored} by the sequent if it is a direct descendent of a formula occurring in an initial sequent in $\Gamma \Rightarrow \Delta$.
A cut inference in $\mathscr{P}$ is \emph{anchored} if either (i) the cut formula is not atomic and at least one of the two occurrences of the cut formula in the upper sequents is anchored or (ii) the cut formula is atomic and both of the occurrences of the cut formula in the upper sequents are anchored.
A cut inference which is not anchored is said to be \emph{free}.
A proof is free cut free if there is no free cut occurring in it.
It has been variously shown that for any proof $\mathscr{P}$ of any sequent $\Gamma \Rightarrow \Delta$, there is a free cut free $\mathscr{P}'$ for $\Gamma \Rightarrow \Delta$.
For the detailed presentation of $\bth{LK}$ and related structural properties, see e.g.~\cite{Buss,NegrivonPlato}.

In order to obtain the desired characterization of $\AC^0$ via $\bth{TAC^0}$ some properties are preliminarily established.

\begin{proposition}\label{prop:sbf}
    For any sharply bounded formula $F$, there is a function of $\ACDL$, $f$, such that $F$ holds if and only if $f$ takes value 1.
\end{proposition}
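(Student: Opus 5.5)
The proof goes by structural induction on the sharply bounded formula $F$, whose free variables we write as $\tu a$. We build a characteristic function $f_F(\tu a)\in\{0,1\}$ in $\ACDL=[\mathscr{B}_0;\circ,\ell\text{-ODE}_1]$ such that $f_F(\tu a)=1$ iff $F(\tu a)$ holds. The induction invariant is that $f_F$ takes values in $\{0,1\}$. This invariant is what allows us to use $f_F$ as the restricted-range function $k$ in $\ell$-ODE$_1$.

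First, the terms. Every term of the language of $\bth{TAC^0}$ denotes a function in $\ACDL$. The symbols $\bt{0},\bt{1},|\cdot|,\div 2,+,-,\#$ are in $\mathscr{B}_0$ directly, with $|\cdot|$ interpreted by $\ell$. For the remaining two symbols:
\begin{itemize}
\item $\bt{msp}(x,y)=\lfloor x/2^y\rfloor$ is obtained by bit comprehension through $\ell$-ODE$_1$, taking as $k$ the function $\fun{BIT}(\cdot+y,x)$, which selects the bits of $x$ above position $y$.
\item $\bt{pad}(x,y)=2^{\ell(y)}x$ is obtained from $x\#y$, or equivalently by $\ell$-ODE$_1$ with $k=0$ started from $g=x$, which shifts $x$ left $\ell(y)$ times.
\end{itemize}
Both constructions are standard $\AC^0$ functions already shown to lie in $\ACDL$ in~\cite{ADK24a}, so we may cite that. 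Closure under composition then gives every term.

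Second, the atomic formulas. For $s=t$ we take $f=\fun{cosg}(s-t)\times\fun{cosg}(t-s)$. Multiplication is not available, but a product of two $\{0,1\}$-valued functions can be written as $\fun{sg}(u+v-1)$. For $s\le t$ we take $f=\fun{cosg}(s-t)$. Note that $\fun{sg}$ is over $\mathbb{Z}$, so truncation of subtraction is not an issue.

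Third, the connectives:
\begin{itemize}
\item $f_{\neg F}=1-f_F$;
\item $f_{F\wedge G}=\fun{sg}(f_F+f_G-1)$;
\item $f_{F\vee G}=\fun{sg}(f_F+f_G)$.
\end{itemize}

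The main step is sharply bounded quantification, $(\exists i\le |t|)F(\tu a,i)$ and $(\forall i\le|t|)F(\tu a,i)$. Here we use Remark~\ref{remark:minimisation}. We set $R(i,\tu a):\equiv F(\tu a,i)$, with characteristic function $h_R=f_F$, which exists by the induction hypothesis. We then define $f_\exists$ by $\ell$-ODE$_1$:
\begin{align*}
f_\exists(0,\tu a) &= f_F(\tu a,0), & \frac{\partial f_\exists(x,\tu a)}{\partial\ell} &= f_\exists(x,\tu a)+f_F(\tu a,\ell(x)+1).
\end{align*}
We evaluate it at $x=t(\tu a)$, which is a term and hence in $\ACDL$, and take $f=\fun{sg}(f_\exists(t(\tu a),\tu a))$. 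The universal case is dual, either via $\fun{cosg}$ as in the Remark or simply as $\neg\exists\neg$.

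The point I expect to need the most care is the index bookkeeping. The Remark's schema queries $h_R(\ell(x)+1,\cdot)$ at the steps $x=\alpha(u)$, so I must check that the indices actually covered are exactly $0,\dots,|t|$: index $0$ comes from the initial value, and the remaining indices come from the successive $\ell(\alpha(u))+1$. If there is an off-by-one, I would shift the index in $k$, or add a separate disjunct for the endpoint. A second point to check is that the bound term $t$ may contain the free variables $\tu a$ but not $i$; this is guaranteed by the definition of sharply bounded formulas. Finally, the resulting $f$ is again $\{0,1\}$-valued, which closes the induction.
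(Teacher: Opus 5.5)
Your proof is correct and follows the paper's route. Both argue by induction on $F$, with the quantifier cases given by the $\ell$-ODE$_1$ construction of Remark~\ref{remark:minimisation}; the paper writes out only those cases, and your treatment of terms is safe because it rests on the citation $\ACDL=\AC^0$ (your $\bt{msp}$ sketch, as literally written, would append the bits in reversed order, since $\ell$-ODE$_1$ appends each new bit at the least significant end). Your index bookkeeping also checks out: evaluating at $x=t(\tu a)$, the initial value contributes $f_F(\tu a,0)$ and the steps at $x=\alpha(u)$ for $0\le u\le |t|-1$ contribute $f_F(\tu a,u+1)$, so exactly $i=0,\dots,|t|$ are queried, whereas the paper's evaluation point $2^{\ell(t(\tu a))}$ has length $|t|+1$ and so would also query $i=|t|+1$.
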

\begin{proof}
The proof is by induction on the structure of $F$. We explicitly consider the two inductive cases corresponding to sharply bounded quantification:
\begin{itemize}
    \item $F=\exists z \leq |t(\tu a)| . \; H(z,\tu a)$.
    Let the desired function $f_A$ be defined as $f_A=\fun{sg}\big(f_e(2^{\ell(t(\tu a))}, \tu a)\big)$ and
    \begin{align*}
        f_e(0, \tu a) &= f_G(\tu a) \\
        \frac{\partial f_e(z,\tu a)}{\partial \ell} &= f_e(z,\tu a) + f_H(\ell(z)+1, \tu a)
    \end{align*}
    being $f_G(\tu a)=f_H(0, \tu a)$ and $f_H(i, \tu a)=1$ when $H(i,\tu a)$ holds and $f_H(i,\tu a)=0$ otherwise; $f_H$ exists and is in $\ACDL$ by IH.
    This is an instance of $\ell$-ODE$_1$, so that, by its closure property, $f_F$ is in $\ACDL$.
    \item $F=\forall z\leq |t(\tu a)|. \; H(z,\tu a)$.
    Similarly, let the desired function $f_F$ be defined as $f_F=\fun{cosg}\big(f_u(2^{\ell(t(\tu a))}, \tu a)\big)$ where $f_u$ is the solution of the IVP below:
    \begin{align*}
        f_u(0,\tu a) &= \fun{cosg}\big(f_G(\tu a)\big) \\
        \frac{\partial f_u(z,\tu a)}{\partial \ell} &= f_u(z,\tu a) + f_H(\ell(z)+1, \tu a)
    \end{align*}
    where, again, $f_G(\tu a)=f_H(0, \tu a)$ and $f_H(i,\tu a)=1$ when $H(i,\tu a)$ holds and $f_H(i,\tu a)=0$ otherwise.
\end{itemize}
\end{proof}

\noindent
The following Proposition allows us to freely use (symbols for) functions which are esb-definable in $\fun{TAC}^0$.

\begin{proposition}[\cite{CloteTakeuti}]
    If $f_1,\dots, f_k$ are esb-definable in $\bth{TAC^0}$ and $T$ is obtained by endowing $\fun{TAC^0}$ with the new symbols $f_1,\dots,f_k$, their defining axioms, the \emph{(BC)} axiom for $A(\tu a, x)$ esb in $T$ and \emph{esb-LIND} for $A(a)$ esb in $T$, then $T$ is a conservative extension of $\bth{TAC^0}$.
\end{proposition}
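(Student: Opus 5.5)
The plan is the standard conservativity argument for definitional extensions in bounded arithmetic, following Buss and~\cite{CloteTakeuti}. Since each $f_j$ is esb-definable, there are an esb-formula $A_j(\tu x,y)$ and a term $t_j$ such that $\bth{TAC^0}$ proves both $\forall \tu x\exists y\le t_j(\tu x)A_j(\tu x,y)$ and uniqueness. We treat one symbol at a time, adding $f_1$ first and then iterating, since conservativity composes. Once $f_1$ has been added, $f_2$ remains esb-definable in the extension, because its definition is already provable there.

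\textbf{Translation.} Define a translation $\varphi\mapsto\varphi^*$ that removes $f$ from formulas of $T$. An atomic formula $B(f(\tu s))$, taking an innermost occurrence first, is sent to
$$\exists y\le t(\tu s)\,(A(\tu s,y)\wedge B(y)).$$
Because existence and uniqueness are provable, this is $\bth{TAC^0}$-equivalent to
$$\forall y\le t(\tu s)\,(A(\tu s,y)\to B(y)).$$
By the closure clause in the definition of esb-formulas (the one requiring $T\vdash\exists! x\le s\,A$), both forms are again esb. The translation then commutes with connectives and quantifiers, so it maps esb formulas of $T$ to esb formulas of $\bth{TAC^0}$.

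\textbf{Checking the translated proof.} Argue by induction on a $T$-derivation that if $T\vdash\Gamma\Rightarrow\Delta$ then $\bth{TAC^0}\vdash\Gamma^*\Rightarrow\Delta^*$.
\begin{itemize}
\item Logical rules and equality axioms translate routinely; substitutivity of $f$ follows from uniqueness.
\item The defining axiom $A(\tu x,f(\tu x))$ translates to $\exists y\le t\,(A\wedge A)$, which is existence.
\item An instance of (BC) for an esb formula of $T$ becomes an instance of (BC) for its translation, which is esb.
\item An application of esb-LIND becomes esb-LIND applied to the translated induction formula, again esb. The eigenvariable condition is preserved because the translation only introduces new bound variables.
\item For a formula $\varphi$ not containing $f$, $\varphi^*=\varphi$, which gives conservativity.
\end{itemize}

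\textbf{Main obstacle.} The delicate point is checking that the translation really stays inside the esb class. Two things must be verified. First, iterated and nested occurrences $f(\dots f(\dots)\dots)$ must be handled by repeated application of the closure clause. Second, the uniqueness hypothesis must be available in the \emph{base} theory at each stage, as that clause demands. When further symbols $f_2,\dots$ are added, we must also confirm that their defining formulas, which are esb in the extended language, translate to formulas that are esb in $\bth{TAC^0}$. The proof proceeds by induction on the number of new symbols together with nesting depth, using the fact that the translated formulas provably satisfy the same existence and uniqueness statements.
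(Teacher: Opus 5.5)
The paper gives no proof of its own; it only cites Clote--Takeuti. Your architecture is the standard Buss-style argument for such conservativity results: eliminate an innermost occurrence via $\exists y\le t(\tu s)(A(\tu s,y)\wedge B(y))$, then translate $T$-derivations into $\bth{TAC^0}$-derivations. The gap is in how you discharge the obstacle you yourself single out.

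\textbf{The circularity.} The esb class of $T$ is generated by a clause whose side condition is $T\vdash\exists!x\le s\,A$, and $T$-proofs may use (BC) and esb-LIND on any formula esb in $T$. So, to know that the translation of $\exists x\le s(A\wedge B)$ is esb in $\bth{TAC^0}$, you need $\bth{TAC^0}\vdash(\exists!x\le s\,A)^*$. That is an instance of the very proof-translation claim you are establishing. That claim, in its (BC) and esb-LIND cases, in turn presupposes that the translated formulas are already esb. Induction on the number of new symbols and on the nesting depth of $f$ cannot break this loop. Even an $f$-free formula may be esb in $T$ only because its uniqueness premise has a $T$-proof using esb-LIND on formulas with deeply nested $f$'s. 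Such a formula is esb in $\bth{TAC^0}$ only once that proof has been translated. Your appeal to ``the fact that the translated formulas provably satisfy the same existence and uniqueness statements'' is thus circular as stated.

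\textbf{The repair.} Use a stratified simultaneous induction.
\begin{itemize}
\item Let $E_0$ be the closure of the atomic formulas of the extended language under Boolean connectives and sharply bounded quantification.
\item Let $E_{m+1}$ be the same closure of $E_m$ together with all $\exists x\le s(A\wedge B)$ and $\forall x\le s(A\to B)$, with $A,B\in E_m$, whose uniqueness premise has a $T$-proof using (BC) and esb-LIND only on $E_m$-formulas.
\item Since proofs are finite, the esb class of $T$ is $\bigcup_m E_m$.
\end{itemize}
Then show by induction on $m$ that (i) translations of $E_m$-formulas are esb in $\bth{TAC^0}$, and (ii) $T$-derivations using (BC)/esb-LIND only on $E_m$-formulas translate into $\bth{TAC^0}$-derivations. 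Derive (i) for $m+1$ from (ii) for $m$.

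\textbf{Terms containing $f$.} Separately, the translation does not simply ``commute with quantifiers'', nor does it turn (BC)/esb-LIND instances into instances, when $f$ occurs in a quantifier bound, in the (BC) bound $s(\tu a)$, or in the conclusion term of esb-LIND. In these cases you must first name the offending term by a fresh variable.
\begin{itemize}
\item For example, $\forall x\le|f(\tu s)|\,C$ becomes $\exists y\le t(\tu s)(A(\tu s,y)\wedge\forall x\le|y|\,C^*)$.
\item (BC) and esb-LIND are applied with a variable $z$ in place of the $f$-containing term, and then combined with the provable existence of $z$.
\item In the closure-clause case with an $f$-containing bound $s$, replace $s$ by an $f$-free majorant $\hat s$ and strengthen $A^*$ to $A^*\wedge(x\le s)^*$. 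The uniqueness premise then has the form the clause in $\bth{TAC^0}$ requires.
\end{itemize}
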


Theorem~\ref{th:TAC}.$\subseteq$ shows that every function in $\AC^0$ is esb-definable. 
The structure of the proof is similar to that in~\cite{CloteTakeuti}, but, as anticipated, it refers to $\ACDL$ rather than $\mathcal{A}_0$.

\begin{lemma}\label{lemma:FACTAC}
    Every function in $\AC^0$ is esb-definable in $\bth{TAC^0}$.
\end{lemma}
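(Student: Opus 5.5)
We will use the characterization $\AC^0=\ACDL=[\mathscr{B}_0;\circ,\ell\text{-ODE}_1]$ from~\cite{ADK24a,ADK25}, so it suffices to show that every function of $\ACDL$ is esb-definable in $\bth{TAC^0}$. The proof is by induction on the construction of $f$ in $\ACDL$. For each $f$ we exhibit an esb formula $A_f(\tu x,y)$ and a term $t_f$. We then verify three things: totality ($\exists y\leq t_f\,A_f$), uniqueness (derived from (BE) or from the defining axioms), and that $A_f$ is true of the graph of $f$. Throughout we use the conservativity result of~\cite{CloteTakeuti} recalled above. It lets us add each newly esb-defined function as a symbol and still apply (BC) and esb-LIND to formulas that mention it. This keeps the induction uniform.

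\emph{Base functions.} $\fun{0},\fun{1},\ell,+,-,\div 2,\#$ and the projections are terms of the language, or are expressible by them ($\ell$ is $|\cdot|$). Their defining formula is simply $y=t(\tu x)$, and existence and uniqueness are trivial. For $\fun{sg}$ we take $A(x,y)\equiv(x=0\wedge y=0)\vee(x\neq 0\wedge y=1)$, with bound term $1$. For $\fun{BIT}$ we use $y=\bt{bit}(i,x)=\bt{mod2}(\bt{msp}(x,i))$. Natural-number subtraction is modified subtraction, and its defining formula is an atomic formula together with a case split on $\leq$.

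\emph{Composition.} Suppose $f=h(g_1,\dots,g_m)$ with esb definitions $A_h,A_{g_j}$. We set $A_f(\tu x,y)\equiv\exists z_1\leq t_{g_1}\cdots\exists z_m\leq t_{g_m}\big(\bigwedge_j A_{g_j}(\tu x,z_j)\wedge A_h(\tu z,y)\big)$. This is esb precisely because the theory proves $\exists! z_j\leq t_{g_j}A_{g_j}$, which is exactly the closure clause in the definition of esb formulas. The bound is $t_h(t_{g_1},\dots,t_{g_m})$, obtained by monotonicity of terms. Existence and uniqueness follow from those of the components.

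\emph{$\ell$-ODE$_1$.} This is the main case, and we handle it as in the sketch of Theorem~\ref{th:TAC}. Let $f$ be defined by $\ell$-ODE$_1$ from $g$ and $k\in\{0,1\}$, both already esb-definable. Unfolding Def.~\ref{def:ODE} with $A=1$ shows that $f(x,\tu y)=2^{\ell(x)}g(\tu y)+f'(x,\tu y)$. Here $f'$ has length at most $\ell(x)$, and its $j$-th bit is $k(\alpha(\ell(x)-j-1),\tu y)$. The term $\alpha(m)=2^m-1$ is expressible as $\bt{pad}(1,\cdot)-1$ applied to a number of the right length (for instance $x$ itself), so this bit description is an esb formula in the extended language. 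We apply (BC) with $s=x$ and $A(\tu a,i)\equiv k(\alpha(|x|-(i+1)),\tu y)=1$. This yields some $y'<2^{|x|}$ with exactly these bits, and (BE) gives its uniqueness. We then define $f(x,\tu y)=\bt{pad}(g(\tu y),x)+y'$, which is a term, with bound $g$'s bound $\#$-padded by $x$.

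The delicate point is the third requirement: the esb-defined function must actually satisfy the ODE recursion, equivalently that the intended $f$ satisfies $A_f$. We check this at the level of the standard model by the closed-form solution of Def.~\ref{def:ODE}. If provable recursion equations are also wanted, we would obtain them by esb-LIND on $|x|$ using ax11 and ax29. We expect this verification, together with the index arithmetic $\ell(x)-(i+1)$ with truncated subtraction at the boundary, to be the main obstacle. The remaining cases are routine.
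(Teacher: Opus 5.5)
Your proposal is correct and follows essentially the same route as the paper. Both argue by induction over $\ACDL$, with routine base and composition cases, and handle the $\ell$-ODE$_1$ case by applying (BC) to the bit description $k(\alpha(|x|-(i+1)),\tu y)$ of the solution. Your $\bt{pad}(g(\tu y),x)+y'$ for a nonzero initial value (where the paper assumes $g=0$ without loss of generality) and your appeal to (BE) for uniqueness are just more explicit versions of the same argument.
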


\begin{proof}
    The proof is by induction on the structure of functions in $\ACDL$:
    \begin{itemize}
        \itemsep0em
        \item[] \emph{Base case.} Each basic function of $\ACDL$ is esb-definable in $\bth{TAC^0}$:
        \begin{itemize}
        \itemsep0em
        \item The basic function $\fun{0}$ is esb-definable in $\bth{TAC^0}$ by the esb formula:
        $
        F_0(x,y) := x=x \wedge y=\bt{0}.
        $
        Existence is proved considering $y=0$.
        By rules of conjunction, also $\bth{TAC^0}\vdash x=x \wedge \bt{0}=\bt{0}$.
        As desired, we conclude that $\bth{TAC^0}\vdash \forall x \exists y\leq \bt{0}(x=x\wedge y=\bt{0})$.
        Uniqueness is proved by assuming $\bth{TAC^0}\vdash x=x\wedge y=\bt{0}$ and $\bth{TAC^0}\vdash x=x \wedge z=\bt{0}$.
        By the conjunction rule, in particular $\bth{TAC^0}\vdash z=\bt{0}$ and $\bth{TAC^0}\vdash y=\bt{0}$.
        Then, by the transitivity of identity, we conclude that $\bth{TAC^0}\vdash y=z$, as desired.
        Clearly, in the standard model, if $y=\fun{0}$, for any $n\in \Nat$, $\llbracket n=n \wedge \bt{0}=\overline{y}\rrbracket$ = $\llbracket n=n\rrbracket$ $\&$ $\llbracket 0=0\rrbracket$ = $\top$ $\&$ $\top$.
        \item The basic function 1 is esb-definable in $\bth{TAC^0}$ by the esb formula: $F_1(x,y):= x=x \wedge y=1$.
        The proof is analogous to the one above.
        \item The basic sign function $\fun{sg}$ is esb-definable in $\bth{TAC^0}$ by the esb-formula: $F_{\fun{sg}}(x,y):=(x=0 \wedge y=0) \wedge (x\neq 0 \wedge y=1)$.
        \item The basic length function $\ell$ is esb-definable in $\fun{TAC^0}$ by the esb-formula:
        $F_\ell(x,y) := y = |x|$.
        \item The basic bit function $\fun{BIT}$ is esb-definable in $\fun{TAC^0}$ by the esb-formula: $F_{\fun{BIT}}(x,i,y) := \bt{bit}(i,a)=y$ (or, equivalently, $F_{\fun{BIT}}(x,i,y) := \bt{msp}(x,i) - 2 \times \big\lfloor\frac{\bt{msp}(x,i)}{2} \big\rfloor = y$).
        \item The basic function $\div 2$ is esb-defined in $\bth{TAC^0}$ by the esb-formula:
        $F_{\div 2} (x,y) := x\div 2 = y$
        \item The basic arithmetic functions $\star \in \{+,-,\#\}$ are esb-defined in $\bth{TAC^0}$ by the corresponding esb formula: $F_{\star}(x,x',y):= x\star x'=y$.
        \begin{sloppypar}
        \item The basic function projection $\pi^n_i$ is esb-defined in $\bth{TAC^0}$ by the esb-formula: $F_{\pi^n_i}(x_1,\dots, x_n, y) := \bigwedge_{y\in \{1,\dots, n\}\setminus \{i\}} (x_j=x_j) \wedge (x_i=y)$.
        \end{sloppypar}
        \end{itemize}
    \item \emph{Inductive case.} Functions of $\ACDL$ can be defined either by composition or by $\ell$-ODE$_1$:
    \begin{itemize}
        \itemsep0em
        \item Let $f$ be defined by composition.
        Without loss of generality, we consider the simple case of $f$ being defied by composing the two functions of $\ACDL$, $g$ and $h$.
        For readability, we mostly omit the fixed input $\tu y$.
        Let $h(\tu b)$ be esb-defined by $s(\tu b), H(\tu b,a)$ and $g(a)$ be esb-defined by $t(a), G(a,c)$.
        We conclude that $g\circ h$ is esb-defined by $t(s(\tu b))$ and the formula $\exists x\leq s(\tu b)\big(H(\tu b, x) \wedge G(x, c)\big)$.
        \item Let $f$ be defined by $\ell$-ODE$_1$ from functions of $\ACDL$, $g$ and $k$, taking values in $\{0,1\}$.
        Without loss of generality, let $g(\tu y)=0$ and $f(x,\tu y)$ be the solution of the IVP below:
        \begin{align*}
            f(0,\tu y) &= 0 \\
            \frac{\partial f(x,\tu y)}{\partial \ell} &= f(x,\tu y) + k(x,\tu y).
        \end{align*}
    Therefore, by definition of $\ell$-ODE (Df.~\ref{def:ODE}),
    \begin{align}
     f(x,\tu y)=\sum^{\ell(x)-1}_{i=0}k(\alpha(i), \tu y) \times 2^{\ell(x)-i};   
    \end{align} 
    that is, $\fun{BIT}(0, f(x,\tu y))=k(\alpha(\ell(x)-1),\tu y)$,  $\dots$, $\fun{BIT}(\ell(x)-1, f(x,\tu y))=k(0, \tu y)$.
    In other words, for any $m\in\Nat$ and $i\in \ell(n)$, the following holds:
    $$
        \fun{BIT}(i,f(m,\tu y))=1 \quad \text{iff} \quad k(\alpha(\ell(m)-i), \tu y)=1.
    $$
    Recall that, by IH, $k$ is esb-definable.
    Therefore, by (BC), $f(m, \tu n)$ is uniquely defined by $y$,
    $$
    \exists y \leq 2^{|m|} \forall i \leq |m| \big(\bt{bit}(i, y) = 1 \leftrightarrow k(2^{|m|-(i+1)}-1), \tu n) = 1\big)
    $$
    \end{itemize}
    \end{itemize}
\end{proof}

    %
    %
    %
    %
    %
    %

The other direction Theorem~\ref{th:TAC}.$\supseteq$ is proved relying on Buss' witnessing function method.
In particular, since the rules that define $\bth{TAC^0}$ are all in $\bth{S^1_2}$, the elimination of the free cut is established as in~\cite{Buss}.
The main novelty is the treatment of (BC): passing through $\ACDL$, we have native functions and schemas at our disposal that are especially suitable to deal with sharply bounded formulas (Remark~\ref{remark:minimisation}) and bit encoding of limited (i.e.~logarithmic) length expressions.
This makes the proof much more direct.

\begin{notation}[ep$\Sigma^b_1$-formulas]
    A formula $A$ is said to be \emph{essentially pure $\Sigma^b_1$} (ep$\Sigma^b_1$) if it is of the form $\exists x_1 \leq s_1 \dots \exists x_n \leq s_n A(x_1, \dots, x_n)$, where $A(x_1,\dots, x_n)$ is esb.
    This formula is denoted $A^{sb}$.
    Such definition and notation extend in the predictable ways to sets of formulas, $\Gamma^{sb}, \Delta^{sb},\dots$. 
\end{notation}

\begin{lemma}\label{lemma:TACFAC}
\begin{sloppypar}
    \begin{itemize}
        \itemsep0em
        \item[i.] Every function that is esb-definable in $\fun{TAC^0}$ is in $\AC^0$.
        \item[ii.] If $\Gamma \Rightarrow \Delta$ is provable in $\bth{TAC^0}$, $\Gamma$ and $\Delta$ are made of ep$\Sigma^b_1$-formulas and $\Gamma^{sb}\Rightarrow \Delta^{sb}$ is of the form:
        \begin{align*}
            \exists x_1 \leq s_1(\tu a) A_1(\tu a, x_1), \dots, & \\
            \exists x_m \leq s_m (\tu a) A_m(\tu a, x_m) &\Rightarrow \exists y_1 \leq t_1(\tu a) B_1 (\tu a, y), \\
            & \quad \; \; \dots, \exists y_n \leq t_n(\tu a) B_n(\tu a, y_n)
        \end{align*}
        with $A_i$ and $B_i$ sharply bounded formulas (for any $i\in \{1,\dots, m\})$, then there are functions $f_1, \dots, f_n$ in $\ACDL$ such that:
        \begin{align*}
            b_1 \leq s_1(\tu a), A_1(\tu a, b_1), \dots, & \\
            b_m \leq s_m (\tu a), A_m(\tu a, b_m) &\Rightarrow f_1(\tu a, \tu b) \leq t_1(\tu a) \wedge B_1(\tu a, f_1(\tu a, \tu b)), \\
        &\quad \; \; \dots, f_n(\tu a, \tu b) \leq t_n(\tu a) \wedge B_n(\tu a, f_n(\tu a, \tu b))
        \end{align*}
        is satisfied in the standard model, where esb-definable functions in $\Gamma^{sb}$ and $\Delta^{sb}$ are interpreted as $\AC^0$ functions obtained by (i) and $\tu b=b_1, \dots, b_m$ are all distinct new variables.
    \end{itemize}
\end{sloppypar}
\end{lemma}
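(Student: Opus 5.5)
We prove (ii) by induction on the height of a free cut free $\bth{TAC^0}$-derivation of $\Gamma\Rightarrow\Delta$, following Buss' witnessing method as in~\cite{Buss,CloteTakeuti}. Item (i) then follows at once. If $f$ is esb-defined by $A$ and $t$, then $\Rightarrow \exists y\leq t(\tu x) A(\tu x,y)$ is provable, and (ii) yields a witnessing function in $\ACDL$. By uniqueness and condition (iii) this function coincides with $f$, so $f\in\AC^0$ by the characterization of~\cite{ADK24a}. Since every rule and axiom of $\bth{TAC^0}$ except (BC) and (BE) is as in $\bth{S^1_2}$, free cut elimination applies. In a free cut free proof of a sequent of ep$\Sigma^b_1$ formulas, every formula in the derivation is ep$\Sigma^b_1$ (subformula property, with anchored cuts only on axiom formulas). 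Throughout, we use Prop.~\ref{prop:sbf}: every sharply bounded (hence, inductively, esb) formula has an $\ACDL$ characteristic function. This lets us evaluate case distinctions inside witnesses using $\fun{sg}$, $\fun{cosg}$ and limited $\fun{sg}$-polynomial expressions.

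The base cases are as follows. Logical initial sequents and the basic axioms are sharply bounded, so no witnesses are needed beyond projections, and they hold in the standard model. (BE) is likewise purely universal and sharply bounded, so it is trivially witnessed. The genuinely new base case is an instance of (BC), $\Rightarrow \exists y<2^{|s(\tu a)|}\forall i<|s(\tu a)|(\bt{bit}(i,y)=1\leftrightarrow A(\tu a,i))$. Here we take $k_A$ from Prop.~\ref{prop:sbf}. We then build $f_a$ by $\ell$-ODE$_1$ with $f_a(0,\tu a)=0$ and $\partial f_a(x,\tu a)/\partial\ell=f_a(x,\tu a)+k_A(\ell(s(\tu a))-\ell(x)-1,\tu a)$; the index is reversed so that the bit of weight $i$ is $k_A(i,\tu a)$, and subtraction is available in $\mathscr{B}_0$. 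Finally we set the witness to be $f_a(2^{\ell(s(\tu a))}-1,\tu a)$, obtained via $\#$ and $\div 2$. Unfolding the closed form of the $\ell$-ODE solution from Sec.~\ref{sec:ODEs} shows that this value is below $2^{|s(\tu a)|}$ and has exactly the required bits. If $A$ is esb but not sharply bounded, the esb-defined function symbols occurring in it are interpreted by $\AC^0$ functions by (i) at lower complexity, so $k_A$ is still in $\ACDL$ by closure under composition.

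The inductive cases are routine. Structural and propositional rules combine witnesses by $\fun{sg}$-polynomial case selection, e.g.\ for $\vee$-left, or choose the witness of the active side. $\exists$-right on a bounded quantifier uses the term itself as witness. $\exists$-left turns the eigenvariable into a new $b_j$. Sharply bounded quantifier rules need no witness, since these formulas stay esb. Anchored cuts are handled by composing the witness of the cut formula into the other premise and selecting with the characteristic function of the cut formula.

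The main obstacle is esb-LIND, $A(0),\Gamma\Rightarrow\Delta,A(|t|)$. Since $A$ is esb, it has no outer existential block, so we only witness the side formulas. The induction runs for $|t|$ steps, i.e.\ logarithmically many. For each $j\leq|t|$, the characteristic function of $A(j)$ is in $\ACDL$. By sharply bounded minimization (Remark~\ref{remark:minimisation}) we find the least $j<|t|$ with $A(j)\wedge\neg A(j+1)$, or we detect that no such $j$ exists. We then plug $a:=j$ into the IH witnesses of the premise. Those witnesses are valid there because $A(j)$ holds and $A(j+1)$ fails, so some formula of $\Delta$ must be witnessed. If no such $j$ exists, $A(|t|)$ holds and no $\Delta$-witness is needed. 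Dependence on the parameter $j$ is uniform, so the composite witness is in $\ACDL$. One check still needs care: that the minimization and characteristic functions remain in $\ACDL$ when the esb formulas contain defined function symbols. We expect to handle this by simultaneous induction with (i).
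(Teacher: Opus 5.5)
Your proposal is correct and follows essentially the paper's route. Both use Buss-style witnessing by induction on a free-cut-free derivation, treat every case except (BC) as in Clote--Takeuti, and witness (BC) by an $\ell$-ODE$_1$ instance built from the characteristic function $k_A$ of Prop.~\ref{prop:sbf}. Your version of that instance (index $\ell(s(\tu a))-\ell(x)-1$, evaluated at a point of length $\ell(s(\tu a))$) gets the bit order and length right, which the paper's (increasing index, evaluated at $\ell(s(\tu a))$) does not. One point to tidy: $\forall\leq$-right and $\exists\leq$-left on sharply bounded quantifiers also require instantiating the eigenvariable in the side-formula witnesses via a sharply bounded $\mu$-search, exactly as you already do for esb-LIND.
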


\begin{proof}
    The proof is by induction on the derivation height of the free cut free proof from which the esb-definability of the given function $f$ follows (i.e.~namely, deriving the sequent $\Gamma \Rightarrow \Delta$ of ii. above). 
    All cases except for (BC) are standard (see~\cite{CloteTakeuti}).
    If the derivation has height 0 and the sequent is obtained by an instance of the axiom BC, then $\mathscr{P}$ is simply as follows:
    \begin{prooftree}
        \AxiomC{}
        \RightLabel{BC}
        \UnaryInfC{$\Rightarrow \exists y< 2^{|(s(\tu a)|}\forall i < |s(\tu a)| \big(\bt{bit}(i,y)=1 \leftrightarrow A(i,\tu a)\big)$}
    \end{prooftree}
    where $A(i,\tu a)$ is esb.
    For simplicity, let us assume that $F(i,\tu a)$ is sharply bounded (given the inductive construction, we can reduce to considering sharply bounded sub-formulas).
    By Prop.~\ref{prop:sbf}, there is a function of $\ACDL$, $k_A$, such that, for any $i,\tu a$, it returns 1 when $F(i,\tu a)$ holds and 0 otherwise.
    Then, an auxiliary function $f_a$ as the solution of the IVP below:
    \begin{align*}
        f_a(0, \tu a) &= g_A(\tu a) \\
        \frac{\partial f_a(x,\tu a)}{\partial \ell} &= f_a(x,\tu a) + k_A(\ell(x)+1, \tu a)
    \end{align*}
    \begin{sloppypar}
    \noindent
    where $g_A(\tu a)=k_A(0,\tu a)$.
    Since $k_A$ is a function of $\ACDL$ and its output is in $\{0,1\}$ this is an instance of $\ell$-ODE$_1$.
    Thus, $f_a$ is also in $\ACDL$.
    Recall that, by $\ell$-ODE definition, $f_a(x,\tu y)=\sum^{\ell(x)}_{i=0} k_A(i, \tu a) \times 2^{\ell(x)-i}$; that is, for any $i\leq \ell(x)$, $\fun{BIT}(i, f_a(x,\tu a))=k_A(\ell(x)-(i+1),\tu a)$, i.e.,~$\fun{BIT}(i,f(x,\tu a))=1 \leftrightarrow k_A(\ell(x)-(i+1),\tu a)=1$.
    We conclude noticing that  $f_A(\tu a)= f_a(\ell(s(\tu a)), \tu a)$ is the function we are looking for; that is,~$f_a(\tu a) < 2^{|s(\tu a)|}$ and $\forall i<|s(\tu a)| \big(\bt{bit}(i, f_A(\tu a)) = 1 \leftrightarrow k_A(|s(\tu a)|-(i+1), \tu a)=1\big)$.
    \end{sloppypar}
\end{proof}

\end{toappendix}

\subsection{Bounded theories for $\ACCp{n}$ by extending the language}\label{sec:BA}

In this Section, we present a family of homogeneous first-order bounded theories, $\bth{TACn^0}$, capturing all modulo-counting classes $\ACCp{n}$ and the characteristic axioms of which are crucially inspired by the ODE schemas introduced in Sec.~\ref{sec:algebras}. 
The language of $\bth{TACn^0}$ is obtained by endowing that of $\bth{TAC^0}$ by the unary symbol $\bt{cmod-n}$.

\begin{definition}[Theory $\bth{TACn^0}$]\label{df:TACn}
    The theory $\bth{TACn^0}$ is defined by all axioms and schemas of $\bth{TAC^0}$ (Df.~\ref{def:TAC}), plus the following axioms:
    \begin{description}
        \item[$\bt{N1}$]  $\bt{cmod-n}(x)=b$, for $x=b$ and $b\in\{0,1\}$
        \item[$\bt{N2}$] $
        \bt{cmod-n}(x)=\bt{cmod-n}\big(\big\lfloor \frac{x}{2}\big\rfloor\big) - n \times \big\lfloor \frac{\bt{cmod-n}(\lfloor \frac{x}{2}\rfloor)}{n-1}\big\rfloor \times \big(x - 2 \times \big\lfloor \frac{x}{2}\big\rfloor \big) + \big(x - 2 \times \big\lfloor \frac{x}{2}\big\rfloor \big)$. 
    \end{description}
\end{definition}


\noindent
Intuitively, the symbol $\bt{cmod-n}$ is the syntactical counterpart of $\ell$-$n$ODE (or of the function $\fun{cmodn}$; see App.~\ref{app:ACCp}), which counts modulo $n$ \emph{the number of 1's in the binary representation of the input}.
Notice that in the definition above  the symbol for integer division by $n-1$, $\big\lfloor\frac{z}{n-1}\big\rfloor$ is used with a slight abuse of notation, but it is actually a shorthand for $z \leq n-1$.
Similarly, the multiplication symbol is not actually in the language of $\bth{TACn^0}$, but we are here using it to denote bit-multiplication, i.e.,~multiplication of a number by 0 or 1.
%

%


\begin{theorem}\label{th:TACn}
    For $n\in \Nat^{>1}$, a function $f$ is in $\ACCp{n}$ iff it is esb-definable in $\bth{TACn^0}$.
\end{theorem}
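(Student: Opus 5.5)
The plan is to follow exactly the two-directional structure of Theorem~\ref{th:TAC}, replacing $\ACDL$ with the algebra $[\mathscr{B}_0; \circ, \ell\text{-ODE}_1, \ell\text{-}n\text{ODE}]$. By Theorem~\ref{th:FACCn}, this algebra equals $\ACCp{n}$.

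\emph{Direction ($\subseteq$): every function in $\ACCp{n}$ is esb-definable in $\bth{TACn^0}$.} We argue by induction on the construction of $f$ in that algebra.

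\begin{itemize}
\item Basic functions, composition and $\ell$-ODE$_1$ are handled verbatim as in Lemma~\ref{lemma:FACTAC}. This is sound because $\bth{TACn^0}$ extends $\bth{TAC^0}$ and the conservativity proposition of~\cite{CloteTakeuti} still applies.
\item The new case is $f$ defined by $\ell$-$n$ODE from $g$ and $k$. Its closed form, established in the proof of Prop.~\ref{prop:nODE}, is
\[
f(x,\tu y) = \big( g(\tu y) + \sharp\{0\le t<\ell(x) : k(\alpha(t),\tu y)=1\} \big) \bmod n .
\]
\item Using (BC) with the esb-definable $k$ from the IH, we obtain an esb-defined $z(x,\tu y) < 2^{|x|}$ whose $t$-th bit is $k(\alpha(t),\tu y)$. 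Then $f$ is esb-defined by the formula expressing that $y$ equals the value of $g(\tu y)$ stepped forward $\bt{cmod-n}(z)$ times modulo $n$. Since $g<n$ and $n$ is a fixed constant, this is a finite case distinction, so it is sharply bounded.
\item It remains to prove in $\bth{TACn^0}$ that $\bt{cmod-n}(z)$ really is this count modulo $n$, or at least that it is bounded by $n-1$. Totality and uniqueness are then immediate, since $\bt{cmod-n}$ is a function symbol.
\item For the bound: from N1 and N2, esb-LIND on $|z|$ applied to the prefixes $\bt{msp}(z,|z|-j)$ proves $\bt{cmod-n}(\bt{msp}(z,i))\le n-1$. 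The induction step is the case analysis $\bt{cmod-n}(w)<n-1$ versus $\bt{cmod-n}(w)=n-1$ contained in N2.
\end{itemize}

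\emph{Direction ($\supseteq$): every function esb-definable in $\bth{TACn^0}$ is in $\ACCp{n}$.} We use Buss' witnessing method on free-cut-free proofs, exactly as in Lemma~\ref{lemma:TACFAC}.

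\begin{itemize}
\item Witnesses are now drawn from $[\mathscr{B}_0;\circ,\ell\text{-ODE}_1,\ell\text{-}n\text{ODE}]$. Sharply bounded formulas may mention $\bt{cmod-n}$, so Prop.~\ref{prop:sbf} must be re-proved for the extended language.
\item Re-proving Prop.~\ref{prop:sbf} only requires interpreting the symbol $\bt{cmod-n}$ as an algebra function: take $\fun{cmodn}(x,x)$ from the proof of Theorem~\ref{th:FACCn}, i.e.\ the $\ell$-$n$ODE solution with $g=0$ and $k(x,\tu y)=\fun{BIT}(\ell(x),\tu y)$.
\item We must check that this interpretation satisfies N1 and N2 in the standard model. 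Unfolding N2 gives $\bt{cmod-n}(x)=(\bt{cmod-n}(\lfloor x/2\rfloor)+\fun{BIT}(0,x)) \bmod n$, which by induction on $|x|$ is the number of ones of $x$ modulo $n$. The ODE solution computes the same quantity, since the order in which bits are summed does not matter.
\item With atomic formulas now involving $\bt{cmod-n}$ interpreted by functions in the class, the cases for (BC), esb-LIND and the logical rules go through unchanged. For esb-LIND, the sharply bounded iteration is witnessed by sharply bounded minimization, as in~\cite{CloteTakeuti}.
\end{itemize}

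\emph{Main obstacle.} I expect the hardest step to be the formal verification inside $\bth{TACn^0}$ in the first direction. The difficulty is that N2 recurses on $\lfloor x/2\rfloor$ (least significant bit first), while the bit string $z$ produced via (BC) and its esb-LIND induction are indexed over prefixes. Aligning these two orders is the delicate part.

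My first attempt would be to sidestep any reversal argument: define via (BC) the string $z$ directly so that its bit order matches N2, placing $k(\alpha(t),\tu y)$ at bit $t$. Then only the bound $\bt{cmod-n}(\cdot)\le n-1$ and the stepping identity need proving, and both follow by esb-LIND on $|z|$ using the instances of N2 at $\bt{msp}(z,i)$ together with the $\bth{TAC^0}$ axioms for $\bt{msp}$ and $\div 2$.
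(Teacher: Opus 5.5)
Your proposal is correct and follows essentially the paper's route: for ($\subseteq$), induction over the algebra of Theorem~\ref{th:FACCn}, packing the bits $k(\alpha(t),\tu y)$ into one number and applying $\bt{cmod-n}$ to it (the paper proves the bound $\bt{cmod-n}(\cdot)\le n-1$ by esb-PIND, which is equivalent to your esb-LIND on $\bt{msp}$-prefixes); for ($\supseteq$), Buss-style witnessing with $\bt{cmod-n}$ interpreted by $\fun{cmodn}(x,x)$. Your separate finite case split on the initial value is more careful than the paper, which feeds $g(\tu y)$ into the bit string as if it were a single bit (wrong whenever $g(\tu y)\ge 2$); you should, however, add a default clause (say $y=0$ when $g(\tu y)\ge n$ or $\bt{cmod-n}(z)\ge n$), because the induction hypothesis does not give $\bth{TACn^0}\vdash g(\tu y)\le n-1$, and with that clause existence and uniqueness follow by pure case analysis, so the bound you flag as the main obstacle is not even needed.
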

\begin{proof}[Proof Sketch]
    $(\subseteq)$ By induction on the structure of functions in the ODE-based algebra presented in Th.~\ref{th:FACCn}.
    Basic functions and inductive steps, except for $\ell$-$n$ODE are treated as in Th.~\ref{th:TAC}.
    Let $f$ be defined by $\ell$-$n$ODE from $g$ and $k$ in $\ACCp{n}$. 
    By Df.~\ref{def:ODE}, this function \arxiv{computes the sum of $g(\tu y), k(\alpha(0), \tu y), \dots, k(\alpha(\ell(x)-1),\tu y)$ modulo $n$.
    This corresponds to }counts modulo $n$ the bit-sum $t(x,\tu y)=\sum^{\ell(x)-1}_{u=-1}k(\alpha(u), \tu y)\times 2^{(\ell(x)-u+1)}$, with the convention that $k(\alpha(-1),\tu y)=g(\tu y)$ (definable by $\ell$-ODE$_1$ from $g$ and $k$, in the given algebra by IH). We conclude by considering 
    the constant bound $n-1$ and the formula $\exists y \leq n-1.\big(\bt{cmod-n}(t(m,\tu n))=y\big)$ esb-defining $f(m,\tu n)$.
    $(\supseteq)$ By induction on the corresponding free cut free proof.
    Since, by Df.~\ref{df:TACn}, only axioms $\bt{N1}$ and $\bt{N2}$ have been added to $\bth{TAC^0}$, we can consider (basic) cases involving \bt{cmod-n} only. Other cases are as in Th.~\ref{th:TAC}.
    In particular, for the only non trivial case $\bt{N2}$, existence and uniqueness easily follow from application of characteristic axioms and (the esb-PIND counterpart of) esb-LIND,
    while for (iii) we consider $\fun{cmn}(x)=\fun{cmodn}(x,x)$, where $\fun{cmodn}$ is defined as in Th.~\ref{th:FACCn}.
    As desired this function is s.t., for any $x\in \Nat$, if $x=0$, $\fun{cmn}(x)=0$, if $x=1$, $\fun{cmn}(x)=1$, and if $x=2z+b$ (with $b\in\{0,1\}$), $\fun{cmn}(x)=\fun{cmn}(z)-n\times (\fun{cmn}(z)\ge n-1)\times \fun{BIT}(x,0)+\fun{BIT}(x,0)$.
    We conclude by noticing that, since $\fun{cmn}$ is in the algebra presented in Th.~\ref{th:FACCn}, by the same theorem, is in $\ACCp{n}$.
\end{proof}


\arxiv{
In this Section, we present bounded theories to characterize corresponding modulo-counting classes.
The characteristic axioms are crucially inspired by the ODE schemas introduced in Sec.~\ref{sec:algebras}.
As for their recursion-theoretic counterparts, with the exception of $\ACCt$ and $\ACCp{6}$, no first-order bounded theories capturing these classes have been introduced before.}

\begin{toappendix}

\subsubsection{From $\bth{TAC2^0}$ to $\bth{TACn^0}$}\label{app:BA}

First, we introduce the theory $\bth{TAC^02}$, a very simple first-order theory the defining axioms of which are designed mirroring the corresponding schema $\ell$-2ODE.
Its definition is then generalized to  the uniform family of minimal theories $\bth{TAC^0n}$, characterizing functions in the corresponding class $\ACCp{n}$.
Differently from the two theories presented in~\cite{CloteTakeuti}, our proposals not only provide characterizations for \emph{all} (constant-depth) circuit classes with counters, but -- having being obtained in an extremely natural way, i.e.~by endowing $\bth{TAC^0}$ with one characterizing axiom -- allows the proofs presented in Sec.~\ref{sec:TAC} to scale to counting classes in a straightforward way.

We start by considering the theory characterizing $\ACCt$.
Its language is obtained by endowing that of $\bth{TAC^0}$ by the symbol $\bt{par}$.

\begin{definition}[Theory $\bth{TAC2^0}$]
    The theory $\bth{TAC2^0}$ includes all axioms and schemas of $\bth{TAC^0}$ (Df.~\ref{def:TAC}), plus the following axioms:
    \begin{description}
    \itemsep0em
    \item[$\bt{P1}$] $\bt{par}(b)=b$, for $b\in\{0,1\}$
    \item[$\bt{P2}$] $\bt{par}(x)=\bt{par}\big(\big\lfloor \frac{x}{2}\big\rfloor\big) - 2 \times \bt{par}\big(\big\lfloor \frac{x}{2}\big\rfloor\big) \times \big(x-2 \times \big\lfloor \frac{x}{2}\big\rfloor\big) + (x-2 \times \big(\big\lfloor \frac{x}{2}\big\rfloor\big)$.
   \end{description}
 \end{definition}

    \arxiv{\begin{itemize}
        \itemsep0em
        \item[] $x=b \to \bt{par}(x)=b$
        \item[] $x=\bt{2}\times y + b\wedge y\neq 0 \to \bt{par}(x)=\bt{par}(y)-2\times \bt{par}(y)\times b + b$,
    \end{itemize}
    for $b\in\{0,1\}$.}

\noindent
These axioms are conceptually simpler than the rule presented in~\cite{CloteTakeuti}.
Moreover, since $\fun{TAC^0}$ is extended by axioms only, the desired characterization proof naturally generalizes the one from Sec.~\ref{sec:TAC}.

\begin{theorem}\label{th:TACt}
    A function $f$ is in $\ACCt$ iff it is esb-definable in $\bth{TAC2^0}$.
\end{theorem}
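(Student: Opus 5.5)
The proof splits into the two inclusions, following Theorem~\ref{th:TAC}. Its $n=2$ instance is the sketch of Theorem~\ref{th:TACn}, with $\bt{par}$ playing the role of $\bt{cmod-n}$. We use Theorem~\ref{theorem:ACCt} as the bridge: it lets us work with the algebra $[\mathscr{B}_0; \circ, \ell\text{-ODE}_1, \ell\text{-2ODE}]$ instead of circuits.

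For $(\subseteq)$ we induct on the construction of $f$ in this algebra. The basic functions, composition and $\ell$-ODE$_1$ are handled exactly as in Lemma~\ref{lemma:FACTAC}, since $\bth{TAC2^0}$ extends $\bth{TAC^0}$. For $\ell$-2ODE, the proof of Prop.~\ref{prop:2ODE} shows that $f(x,\tu y)$ is the parity of the number of $u\in[-1,\ell(x)-1]$ with $k(\alpha(u),\tu y)=1$, where $k(\alpha(-1),\tu y)=g(\tu y)$. We therefore first build a term for the number whose bits are these values. Since $k$ is esb-definable by the induction hypothesis, (BC) yields a $z$ with $\bt{bit}(i,z)=1\leftrightarrow k(\alpha(\cdot),\tu y)=1$ for the appropriate indices, and $g(\tu y)$ is placed as an extra bit, for instance by padding. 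This is just the $\ell$-ODE$_1$ case again. We then esb-define $f$ by the formula $\exists y\le 1\,(y=\bt{par}(z))$ with constant bound $1$. Existence and uniqueness are immediate because $\bt{par}(z)$ is a term. It remains to check in the standard model that $\bt{par}$, interpreted via $\bt{P1}$ and $\bt{P2}$, computes the parity of the bit count of its argument. We will also need $\vdash \bt{par}(x)\le 1$, which follows by esb-LIND (in PIND form along $\lfloor x/2\rfloor$) from $\bt{P1}$ and $\bt{P2}$, since $b-2\,b\,c+c\in\{0,1\}$ when $b,c\in\{0,1\}$.

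For $(\supseteq)$ we use Buss's witnessing argument as in Lemma~\ref{lemma:TACFAC}, by induction on free-cut-free derivations. The only new ingredients are the symbol $\bt{par}$ inside esb formulas and the initial sequents $\bt{P1}$ and $\bt{P2}$. These sequents are quantifier-free and need no witnesses. What must be checked is that every esb formula mentioning $\bt{par}$ still has an $\ACCt$ characteristic function, which is the analogue of Prop.~\ref{prop:sbf}. For this we interpret $\bt{par}(x)$ as $\fun{cmod2}(x,x)$ from Remark~\ref{remark:tODE}, which lies in the algebra and hence in $\ACCt$ by Theorem~\ref{theorem:ACCt}. We then verify that $\fun{cmod2}(x,x)$ satisfies $\bt{P1}$ and $\bt{P2}$ in the standard model: the parity of $x$ equals the parity of $\lfloor x/2\rfloor$ flipped when the last bit is $1$.

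The main obstacle is an index mismatch. $\fun{cmod2}(x,y)$ reads bits of $y$ from position $0$ upward, whereas $\bt{P2}$ recurses by stripping the least significant bit. The quantity computed is parity, so the order of the bits is irrelevant. Still, writing this out needs a short argument that $\fun{cmod2}(x,x)=\fun{cmod2}(\lfloor x/2\rfloor,\lfloor x/2\rfloor)\oplus \fun{BIT}(0,x)$, which holds because both sides count the $1$s of $x$ modulo $2$. I would establish this through the closed form in the proof of Prop.~\ref{prop:2ODE}. All other steps transfer verbatim from Theorem~\ref{th:TAC}.
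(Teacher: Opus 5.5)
Your proposal is correct and follows essentially the same route as the paper. The $\ell$-2ODE case encodes $g(\tu y),k(\alpha(0),\tu y),\dots$ as the bits of a term built via $\ell$-ODE$_1$ and defines $f$ by $\exists y\le 1\,(\bt{par}(t(m,\tu n))=y)$, and the converse is the witnessing argument with $\bt{par}$ interpreted as $\fun{cmod2}(x,x)$ (the paper's $f_{par}$), checked against $\bt{P1}$ and $\bt{P2}$ in the standard model; the only difference is that you put the esb-PIND proof of $\bt{par}(x)\le 1$ in $(\subseteq)$, where the bound $1$ is actually used, while the paper presents it as an existence/uniqueness check inside $(\supseteq)$.
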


\begin{proof} 
    $(\subseteq)$ As for Lemma~\ref{lemma:FACTAC}, the proof is by induction on the structure of functions in the ODE-based algebra presented in Th.~\ref{theorem:ACCt}.
    Basic functions and inductive steps, except for $\ell$-2ODE, are proved precisely as before.
    Let us consider the only new case. Let $f(x,\tu y)$ be defined by $\ell$-2ODE from $g$ and $k$. By definition of $\ell$-ODE (Df.~\ref{def:ODE}), it holds that
    $
    f(x,\tu y)= \sum^{\ell(x)-1}_{u=-1} \Big( \prod^{\ell(x)-1}_{t=u+1}\big(1-2k(\alpha(t), \tu y)\big)\Big) \times k(\alpha(u), \tu y),
    $
    with the convention that $k(\alpha(-1),\tu y)=g(\tu y)$.
    Intuitively this function computes the sum of $g(\tu y), k(\alpha(0),\tu y), \dots, k(\alpha(\ell(x)-1), \tu y)$ modulo 2 (Prop.~\ref{prop:2ODE}).
    This is equivalent to computing the bit-parity of $t(x,\tu y) = \sum^{\ell(x)-1}_{u=-1}k(\alpha(u), \tu y) \times 2^{u+1}$, again with the convention that $k(\alpha(-1), \tu y)=g(\tu y)$, which can be defined in the given algebra using $\ell$-ODE$_1$ from $g$ and $k$ (in the algebra by IH and Th.~\ref{theorem:ACCt}).
    We conclude by considering the formula defined considering $\bt{par}(t(x, \tu y))$ and the constant bound 1, i.e.,~$\exists y \leq 1(\bt{par}(t(m,\tu n)) = y)$, esb-defining $f(m,\tu n)$.
\\
    $(\supseteq)$ The proof is again by induction on the corresponding free cut free proof.
    As desired, since only axioms are added to $\bth{TAC^0}$, all cases except for basic ones involving $\bt{par}$ are precisely as in Lemma~\ref{lemma:TACFAC}. The only new (non-trivial)
    case is that of derivations of height 0 of the form:
    \begin{center}
 $
 \Rightarrow \bt{par}(x)=\bt{par}\big(\big\lfloor \frac{x}{2}\big\rfloor\big) - 2 \times \bt{par}\big(\big\lfloor \frac{x}{2}\big\rfloor\big) \times \big(x-2 \times \big\lfloor \frac{x}{2}\big\rfloor\big) + (x-2 \times \big\lfloor \frac{x}{2}\big\rfloor\big)
 $
 \end{center}
We start by proving the existence condition.
First, recall that due to~\cite[Prop. 9.1]{CloteTakeuti}, we can substitute esb-LIND with the equivalent inference rule esb-PIND below:
    \begin{prooftree}
    \AxiomC{$A\big(\big\lfloor\frac{a}{2}\big\rfloor\big),\Gamma \Rightarrow \Delta, A(a)$}
    \RightLabel{esb-PIND}
    \UnaryInfC{$A(0),\Gamma \Rightarrow \Delta, A(t)$}
    \end{prooftree}
 First, notice that, if $x=0$, the desired $y$ is 0, and similarly for $x=1$:
 By $\bt{p1}$, $\bth{TAC2^0} \vdash \bt{par}(0)=0$, so $\bth{TAC2^0} \vdash \exists y\leq 1.\bt{par}(0)=y$ (by rule for $\exists_\leq$), which is clearly an esb-formula. 
 Then, we ``assume'' that $\bth{TAC2^0} \vdash \exists !y \leq 1. \bt{par}\big(\big\lfloor \frac{a}{2}\big\rfloor\big)$, which, due to the defining axioms for $\leq$ ($\bt{0}$ and $\bt{1}$) and rules for bounded existential quantifiers, is equivalent to assuming that 
 $\bth{TAC2^0} \vdash z\leq 1 \wedge \bt{par}\big(\big\lfloor\frac{a}{2}\big\rfloor\big)=z$.
 By the defining axioms for $\leq$ (and $\bt{0}$ and $\bt{1}$), this is equivalent to saying $\bth{TAC2^0} \vdash \bt{par}\big(\big\lfloor \frac{a}{2}\big\rfloor\big) =0 \vee \bt{par}\big(\big\lfloor \frac{a}{2}\big\rfloor \big) = 1$.
 By instantiating \bt{p2}, we have that in particular $\bth{TAC2^0}\vdash \bt{par}(a) = \bt{par}\big(\big\lfloor \frac{a}{2}\big\rfloor\big) - 2 \times \bt{par}\big(\big\lfloor \frac{a}{2}\big\rfloor\big) \times \big(a- 2\times \big\lfloor \frac{a}{2}\big\rfloor\big) + \big(a- 2\times \big(\big\lfloor \frac{a}{2}\big\rfloor\big)\big)$ and, due to equality axioms,
 \begin{itemize}
 \itemsep0em
 \item[a.] either $\bth{TAC2^0} \vdash \bt{par}(a)=0 + \big(a-2 \times \big\lfloor \frac{a}{2}\big\rfloor\big)$
 \item[b.] or $\bth{TAC2^0} \vdash \bt{par}(a)=1-2\times \big(a-2\times \big\lfloor \frac{a}{2}\big\rfloor\big) + \big(a-2\times \big\lfloor \frac{a}{2}\big\rfloor\big)$
 \end{itemize}
 Additionally, by instantiating \bt{ax27}, we know that, in particular, either $\bth{TAC2^0}\vdash a=2\times \big\lfloor \frac{a}{2}\big\rfloor+0$, i.e.~$\bth{TAC2^0}\vdash a=2\times \big\lfloor \frac{a}{2}\big\rfloor$, or $\bth{TAC2^0}\vdash a=2\times \big\lfloor \frac{a}{2}\big\rfloor+1$.
 So,
 \begin{itemize}
 \itemsep0em
 \item[a.] either $\bth{TAC2^0} \vdash \bt{par}(a)=0$ or $\bth{TAC2^0} \vdash \bt{par}(a)=1$
 \item[b.] or, either $\bth{TAC2^0} \vdash \bt{par}(a)= 1$ or $\bth{TAC2^0} \vdash \bt{par}(a)=0$.
 \end{itemize}
 In both cases the desired value exists and is either 0 or 1, so that we can conclude $\bth{TAC2^0} \vdash \exists y \leq 1.\bt{par}(a)=y$.
 Being the formula esb, we derive the desired existence property by esb-PIND.
A bit more formally, existence can be established as follows.
For the sake of readability, applications of exchange rules are not explicitly mentioned.
%
\begin{prooftree}
    \AxiomC{$\mathscr{P}_1$}
    \noLine
    \UnaryInfC{$\bt{par}\big(\big\lfloor\frac{a}{2}\big\rfloor\big) = 0 \Rightarrow \exists x \leq 1. \bt{par}(a)=x$}
    \AxiomC{$\mathscr{P}_2$}
    \noLine
    \UnaryInfC{$\bt{par}\big(\big\lfloor\frac{a}{2}\big\rfloor\big) = 1 \Rightarrow \exists x \leq 1. \bt{par}(a)=x$}
    \RightLabel{$\vee$L}
    \BinaryInfC{$\bt{par}\big(\big\lfloor\frac{a}{2}\big\rfloor\big) = 0 \vee \bt{par}\big(\big\lfloor\frac{a}{2}\big\rfloor\big) = 1 \Rightarrow \exists x \leq 1. \bt{par}(a)=x$}
    \RightLabel{$\exists_\leq$R, $\bt{Ax}_{\leq,0,1}$}
    \UnaryInfC{$\exists x\leq 1. \bt{par}\big(\big\lfloor\frac{a}{2}\big\rfloor\big) = x  \Rightarrow \exists x \leq 1. \bt{par}(a)=x$}
\end{prooftree}

\noindent
where, in particular, $\mathscr{P}_1$ is defined as below and $\mathscr{P}_2$ is defined similarly but assuming $\bt{par}\big(\big\lfloor \frac{a}{2}\big\rfloor\big) = 1$:
\tiny
\begin{prooftree}
\AxiomC{}
\RightLabel{\bt{P2}}
\UnaryInfC{$\Rightarrow \bt{P2}\{a/x\}$}
\RightLabel{\bt{Wk-L}}
\UnaryInfC{$\bt{par}\big(\big\lfloor\frac{a}{2}\big\rfloor\big) = 0, a=2\times \big\lfloor \frac{a}{2}\big\rfloor \Rightarrow \bt{P1}\{a/x\}$}
\RightLabel{\bt{Ax22}, \bt{Eq}}
\UnaryInfC{$\bt{par}\big(\big\lfloor\frac{a}{2}\big\rfloor\big) = 0, a=2\times \big\lfloor \frac{a}{2}\big\rfloor \Rightarrow \bt{par}(a)=0$}
\RightLabel{$\exists_\leq$R}
\UnaryInfC{$\bt{par}\big(\big\lfloor\frac{a}{2}\big\rfloor\big) = 0, a=2\times \big\lfloor \frac{a}{2}\big\rfloor \Rightarrow \exists x \leq 1 .\bt{par}(a)=x$}
\AxiomC{}
\RightLabel{\bt{P2}}
\UnaryInfC{$\Rightarrow \bt{P2}\{a/x\}$}
\RightLabel{\bt{Wk-L}}
\UnaryInfC{$\bt{par}\big(\big\lfloor\frac{a}{2}\big\rfloor\big) = 0, a=2\times \big\lfloor \frac{a}{2}\big\rfloor +1 \Rightarrow \bt{P1}\{a/x\}$}
\RightLabel{\bt{Ax22}, \bt{Eq}}
\UnaryInfC{$\bt{par}\big(\big\lfloor\frac{a}{2}\big\rfloor\big) = 0, a=2\times \big\lfloor \frac{a}{2}\big\rfloor \Rightarrow \bt{par}(a)=1$}
\RightLabel{$\exists_\leq$R}
\UnaryInfC{$\bt{par}\big(\big\lfloor\frac{a}{2}\big\rfloor\big) = 0, a=2\times \big\lfloor \frac{a}{2}\big\rfloor \Rightarrow \exists x \leq 1. \bt{par}(a)=x$}
\RightLabel{$\vee L$}
\BinaryInfC{$\bt{par}\big(\big\lfloor\frac{a}{2}\big\rfloor\big) = 0, a=2\times \big\lfloor \frac{a}{2}\big\rfloor \vee a=2\times \big\lfloor \frac{a}{2}\big\rfloor + 1 \Rightarrow \exists x \leq 1. \bt{par}(a)=x$}
\RightLabel{\bt{Cut}, \bt{Ax27}\{a/x\}}
\UnaryInfC{$\bt{par}\big(\big\lfloor\frac{a}{2}\big\rfloor\big) = 0 \Rightarrow \exists x \leq 1. \bt{par}(a)=x$}
\end{prooftree}
\normalsize
 Uniqueness can be established in a similar way.
    Finally, to ensure the last condition of esb-definability, let us consider $f_{par}(x)=f_p(x,x)$, where $f_p(x,\tu y)$ is defined as the solution of the IVP with initial value $f_p(0,\tu y)=0$ and such that
        $
        \frac{\partial f_p(x,\tu y)}{\partial \ell} = - 2 \times \fun{BIT}(\ell(x), \tu y) \times f_p(x,\tu y) + \fun{BIT}(\ell(x), \tu y).
        $
    As desired, this function is such that, for any $x \in \Nat$, if $x = 0$, $f_{par}(x)=0$, if $x=1$, $f_{par}(x)=1$ and if $x=2z + b$, where $z=  \lfloor \frac{x}{2}\rfloor$ and $b = x- 2\times \lfloor \frac{x}{2}\rfloor \in \{0,1\}$, $f_{par}(x)= f_{par}(z) - 2 \times f_{par}(z) \times \fun{BIT}(z, 0) + \fun{BIT}(z, 0)= f_{par}(z) - 2 f_{par}(z)\times b + b = f_{par}\big(\big\lfloor \frac{x}{2}\big\rfloor\big) - 2 f_{par}\big(\big\lfloor \frac{x}{2}\big\rfloor\big) \times \big(x- 2\times \lfloor \frac{x}{2}\rfloor\big) + \big(x- 2\times \lfloor \frac{x}{2}\rfloor\big)$. 
    We conclude the proof by remarking that, since $f_p$ is defined by $\ell$-2ODE from $\fun{BIT}$, $f_{par}$ is in the function algebra introduced in Sec.~\ref{sec:ACCt}, so by Th.~\ref{theorem:ACCt}, is in $\ACCt$.
\end{proof}

This simple setting can be naturally generalized to obtain homogeneous arithmetic theories for all classes $\ACCp{n}$.
As seen, for any $n\in \Nat^{>2}$,  the language of $\bth{TACn^0}$ is obtained from the language of $\bth{TAC^0}$ by simply adding the corresponding unary symbol $\bt{cmod-n}$.

\begin{proof}[Proof of Theorem~\ref{th:TACn}]
    Existence and uniqueness are proved as in Th.~\ref{th:TACt}. 
    The main difference is that, in this case, $\bt{cmod-n}\big(\big\lfloor\frac{a}{2}\big\rfloor\big)$ can take $n$ different values, namely $\{0,\dots, n-1\}$.
%
    Clearly, by $\bt{N1}$, $\bth{TACn^0} \vdash \bt{cmod-n}(0)=0$.
    %
     In general, based on the value of $a$, there are two main possible cases to be considered, namely $a = 2\times \big\lfloor \frac{a}{2}\big\rfloor +0$ (H1) and $a = 2\times \big\lfloor \frac{a}{2}\big\rfloor+1$ (H2).
    Therefore, for any $m\in \{0, \dots, n-2\}$,
    $$
        \mathscr{P}_m
    $$
    \tiny
\begin{prooftree}
    \AxiomC{}
    \RightLabel{\bt{N2}, \bt{Wk-L}}
    \UnaryInfC{$m\leq n-2, \bt{cmod-n}(\frac{a}{2}) = m, \text{H1} \Rightarrow \bt{N2}\{a/x\}$}
    \RightLabel{\bt{Eq}}
    \UnaryInfC{$m\leq n-2, \bt{cmod-n}(\frac{a}{2}) = m, \text{H1} \Rightarrow \bt{cmod-n}(a) = m$}
    \RightLabel{$\exists_\leq$R}
    \UnaryInfC{$m\leq n-2, \bt{cmod-n}(\frac{a}{2}) = m, \text{H1} \Rightarrow \exists x\leq n-1.\bt{cmod-n}(a) = x$}
    \UnaryInfC{$m\leq n-2, \bt{cmod-n}(\frac{a}{2}) = m, \text{H1} \Rightarrow \exists x\leq n-1.\bt{cmod-n}(a) = x$}
    \AxiomC{}
    \RightLabel{\bt{N2}, \bt{Wk-L}}
    \UnaryInfC{$m\leq n-2, \bt{cmod-n}(\frac{a}{2}) = m, \text{H2} \Rightarrow \bt{N2}\{a/x\}$}
    \RightLabel{\bt{Eq}, \bt{Ax}$_{1,+,\leq}$}
    \UnaryInfC{$m+1\leq n-1, \bt{cmod-n}(\frac{a}{2}) = m, \text{H2} \Rightarrow \bt{cmod-n}(a) = m+1$}
    \RightLabel{$\exists_\leq$R}
    \UnaryInfC{$m+1\leq n-1, \bt{cmod-n}(\frac{a}{2}) = m, \text{H2} \Rightarrow \exists x\leq n-1.\bt{cmod-n}(a) = x$}
    \RightLabel{$\vee$L}
    \BinaryInfC{$m<n-2, \bt{cmod-n}(\frac{a}{2}) = m, \text{H1} \vee \text{H2} \Rightarrow \exists x\leq n-1.\bt{cmod-n}(a) = x$}
    \RightLabel{$\bt{Ax27}$}
    \UnaryInfC{$m<n-2\bt{cmod-n}(\frac{a}{2}) = m \Rightarrow \exists x\leq n-1.\bt{cmod-n}(a) = x$}
\end{prooftree}
\normalsize
and
$$
\mathscr{P}_{n-1}
$$
    \tiny
\begin{prooftree}
    \AxiomC{}
    \RightLabel{\bt{N2}, \bt{Wk-L}}
    \UnaryInfC{$n-1\leq n-1, \bt{cmod-n}(\frac{a}{2}) = n-1, \text{H1} \Rightarrow \bt{N2}\{a/x\}$}
    \RightLabel{\bt{Eq}}
    \UnaryInfC{$n-1\leq n-1, \bt{cmod-n}(\frac{a}{2}) = n-1, \text{H1} \Rightarrow \bt{cmod-n}(a) = n-1$}
    \RightLabel{$\exists_\leq$R}
    \UnaryInfC{$n-1\leq n-1, \bt{cmod-n}(\frac{a}{2}) = n-1, \text{H1} \Rightarrow \exists x\leq n-1.\bt{cmod-n}(a) = x$}
    \RightLabel{\bt{Ax}$_\leq$,\bt{Eq}}
    \UnaryInfC{$\bt{cmod-n}(\frac{a}{2}) = n-1, \text{H1} \Rightarrow \exists x\leq n-1.\bt{cmod-n}(a) = x$}
    \AxiomC{}
    \RightLabel{\bt{N2}, \bt{Wk-L}}
    \UnaryInfC{$0\leq n-1, \bt{cmod-n}(\frac{a}{2}) = n-1, \text{H2} \Rightarrow \bt{N2}\{a/x\}$}
    \RightLabel{\bt{Eq}, \bt{Ax}$_{+,-,1}$}
    \UnaryInfC{$0\leq n-1, \bt{cmod-n}(\frac{a}{2}) = n-1, \text{H2} \Rightarrow \bt{cmod-n}(a) = 0$}
    \RightLabel{$\exists_\leq$R}
    \UnaryInfC{$0\leq n-1, \bt{cmod-n}(\frac{a}{2}) = n-2, \text{H2} \Rightarrow \exists x\leq n-1.\bt{cmod-n}(a) = x$}
    \RightLabel{\bt{Ax}$_\leq$}
    \UnaryInfC{$\bt{cmod-n}(\frac{a}{2}) = n-1, \text{H2} \Rightarrow \exists x\leq n-1.\bt{cmod-n}(a) = x$}
    \RightLabel{$\vee$L}
    \BinaryInfC{$\bt{cmod-n}(\frac{a}{2}) = n-1, \text{H1} \vee \text{H2} \Rightarrow \exists x\leq n-1.\bt{cmod-n}(a) = x$}
\end{prooftree}
\normalsize
where, for space reasons, we used $\frac{a}{2}$ to denote integer division, i.e.~since here it is not ambiguous we used $\frac{a}{2}$ as a shorthand for $\big\lfloor \frac{a}{2}\big\rfloor$.
\normalsize
Similarly to Th.~\ref{th:TACt}, we conclude by deriving and applying esb-LIND: 

\begin{prooftree}
    \AxiomC{$\mathscr{P}_m$'s (for $m\in \{0,\dots, n-2\}$)}
    \AxiomC{$\mathscr{P}_{n-1}$}
    \RightLabel{ps-$\vee$L's}
    \BinaryInfC{$
    \bigvee_{l\in\{0,\dots, n-1\}}
    \bt{cmod-n}\big(\big\lfloor \frac{a}{2}\big\rfloor \big) = l
    \Rightarrow \exists x\leq n-1. \; \bt{cmod-n}(a)=x$}
    \RightLabel{$\exists_\leq$L}
    \UnaryInfC{$\exists x\leq n-1. \; \bt{cmod-n}\big(\big\lfloor \frac{a}{2}\big\rfloor \big) = x \Rightarrow \exists x\leq n-1. \; \bt{cmod-n}(a)=x$}
\end{prooftree}

\end{proof}

\end{toappendix}

\newcommand{\TACzeroBIS}[1]{\bth{TAC}^0[#1]}

\newcommand{\tacp}{\mathsf{TAC^0}[n]}
\newcommand{\tac}{\mathsf{TAC^0}}
\newcommand{\fac}{\mathsf{FAC^0}}
\newcommand{\facp}{\mathsf{FACC}[p]}

\subsection{Bounded theories for  $\ACCp{n}$ by extending the rules}~\label{sec:BA characterization with new rule}~\label{sec:BA characterization with new rule}
In this Section, we propose alternative bounded arithmetic obtained by extending the rule system of the theory without enlarging the language.
The resulting system, $\tacp$, not only captures  $\ACCp{n}$ using the same language as $\bth{TAC}^0$, so to possibly allow a model-theoretic study and comparison between corresponding complexity classes.
For instance, the separation of $\AC^0$ and $\ACCp{n}$  translates to the existence of a model of $\bth{TAC}^0$ that is not a model of $\tacp$. 
In addition, its propositional translation may introduce a new way to formulate propositional proof systems with counting gates. 

Again, the new system is strongly inspired by $\ell$-$n$ODE. However, this approach more faithfully reflects the meaning of ODE in the context of arithmetic. Since ODE schemes state that the totality of difference implies the totality of functions, they are essentially induction schemes in disguise. In particular, the totality of $\fun{cmodn}$ (the function that outputs the number of bits modulo $n$) implies the totality of functions defined by it. 
Informally, it holds
\begin{prooftree}
\AxiomC{$f(x,\tu y)$ is defined from $k$ by $\ell$-$n$ODE}
\AxiomC{$k(x,\tu y)$ is total}
\AxiomC{$f(0,\tu y)$ is defined}
\TrinaryInfC{$f(x,\tu y)$ is total}
\end{prooftree}
which intuitively translates into the following inference rule.

\begin{definition}
Let $n\in \mathbb{N}$ and $B_k$ be some esb-formula. 
Then,
$n$-$\emph{PIND}(B)$ is the following inference rule:
\begin{prooftree}
\AxiomC{$\Gamma,\, b< n \ B(\big\lfloor\frac{a}{2}\big\rfloor, b)\ \Rightarrow\ \Delta,\, \exists z < n \ B (a, z) \wedge A(a,b,z)$}
\AxiomC{$\Gamma\ \Rightarrow\ \Delta, \, \exists y < 2 \  B_k(a, y)$}
\BinaryInfC{$\Gamma, \, b<n \  B (0, b)\ \Rightarrow\ \Delta,\, \exists y\, B (t,y)$}
\end{prooftree}
where $A$ is the formula s.t. 
$A(x,y,z) \iff \exists z'<2\    z = y - n \times z' \times \Big\lfloor  \frac{y}{n-1}\Big\rfloor + z'\wedge B_k(x,z').$
Note that the terms $a, b$ must satisfy the eigenvariable condition in the premise. 
\end{definition}



\begin{definition}
$\tacp$ = $\bth{TAC}^0 + \{ \text{$n$-PIND}(B):B\text{ is esb-formula}\}$. 
\end{definition}

It remains to show that $\tacp$ captures exactly $\ACCp{n}$. 
The soundness proof is relatively standard. The completeness proof is more involved: it requires encoding the history of computation à la Buss. 

\begin{theorem} \label{completeness}
The class of esb-definable formulas in $\tacp$ is exactly $\ACCp{n}$. 
\end{theorem}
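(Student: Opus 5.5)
We prove the two inclusions separately. For soundness (every function esb-definable in $\tacp$ is in $\ACCp{n}$) we use Buss' witnessing method, as in Lemma~\ref{lemma:TACFAC}. By free-cut elimination, a $\tacp$-proof of $\forall \tu x \exists y\le t\, A(\tu x,y)$ can be taken free cut free, so every formula in it is ep$\Sigma^b_1$. We then show, by induction on derivation height, that each sequent has witnessing functions in $[\mathscr{B}_0;\circ,\ell\text{-ODE}_1,\ell\text{-}n\text{ODE}]$, which is $\ACCp{n}$ by Th.~\ref{th:FACCn}. All rules of $\bth{TAC}^0$ are handled exactly as in Th.~\ref{th:TAC}.

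The only new case is $n$-PIND$(B)$. By induction hypothesis, the right premise yields a function $k(a,\dots)\in\{0,1\}$ witnessing $B_k(a,k)$. The left premise yields a function mapping $(a,b)$, given $B(\lfloor a/2\rfloor,b)$, to some $z<n$ satisfying $B(a,z)$. The formula $A$ forces this witness to equal $b - n\,k\,\lfloor b/(n-1)\rfloor + k$. The witness for the conclusion is therefore obtained by iterating this update along the bits of $t$, starting from the given $b_0$ with $B(0,b_0)$. Reindexing so that recursion on $\lfloor a/2\rfloor$ runs along $\ell$, this is exactly an instance of $\ell$-$n$ODE with $g=b_0$ and $k$ evaluated at the prefixes of $t$; the prefixes $\fun{msp}(t,\ell(t)-\ell(x))$ are $\ACDL$-computable. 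Correctness of the witness, namely that $B(t,f(t))$ holds in the standard model, follows by a metalevel induction on $\ell(t)$. The remaining inference cases (cuts on anchored formulas, $\vee$L, $\exists_\le$) combine witnesses by case distinction, which is available via $\fun{sg}$ and composition.

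For completeness (every function in $\ACCp{n}$ is esb-definable in $\tacp$) we proceed by induction on the algebra of Th.~\ref{th:FACCn}. Base functions, composition and $\ell$-ODE$_1$ are handled as in Lemma~\ref{lemma:FACTAC}, via (BC). Now let $f$ be defined by $\ell$-$n$ODE from $g,k$, with esb-definitions $G$ and $K$. Define the esb-formula $B(a,z)$ to express ``$z$ is the value of $f$ at the prefix $a$''. Since $z<n$ is bounded by a constant, we cannot quantify over the whole computation inside $B$ without a witness. This is where the history of the computation must be encoded, as in Buss. Using (BC) we form a single number whose bits record $k$ on all prefixes. The statement $B(a,z)$ then becomes: $z$ equals the count, modulo $n$, of $g$ plus the $1$-bits of that history up to position $\ell(a)$.

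The plan is to define $B(a,z)$ as the unique formula satisfying the recursion $B(a,z)\leftrightarrow \exists b<n\,(B(\lfloor a/2\rfloor,b)\wedge A(a,b,z))$, and to obtain it directly as the conclusion predicate of $n$-PIND. The premise $\exists y<2\,B_k(a,y)$ is exactly the totality of $k$, given by the induction hypothesis. The left premise is proved by case analysis on $b\in\{0,\dots,n-1\}$ and on $k\in\{0,1\}$, mirroring the derivations $\mathscr{P}_m$ used for $\bt{N2}$ in Th.~\ref{th:TACn}. Existence of the conclusion $\exists y\,B(t,y)$ then follows from the rule. Uniqueness is shown by esb-LIND, or its PIND form, on the formula $\forall z,z'<n\,(B(a,z)\wedge B(a,z')\to z=z')$, using functionality of the update.

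The main obstacle is making $B$ genuinely esb while expressing ``value after processing the prefix $a$''. We cannot inductively define $B$ inside the language, so the natural choice is the closed form: the number of $u<\ell(a)$ with $k(\alpha(u))=1$, plus $g$, taken modulo $n$. Counting modulo $n$, however, is not esb in $\bth{TAC}^0$. We therefore let the rule itself supply this predicate: we pick $B$ as a sharply bounded formula relating consecutive steps over the (BC)-encoded history. The delicate point is to verify that the premises are derivable for this $B$ without already assuming modular counting.
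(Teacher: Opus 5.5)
Your soundness direction is fine and is essentially the paper's witnessing argument: the $n$-PIND case is witnessed by an $\ell$-$n$ODE instance driven by the witness of the right premise and started at $b_0$.

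The gap is in the completeness direction, in the $\ell$-$n$ODE case, exactly at the step you leave open. Neither of the two ways you suggest for closing it works.

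\emph{The rule cannot define $B$ for you.} $n$-PIND does not supply a predicate defined by a recursion. It can only be applied to an esb formula $B$ that you have already written down. So ``the unique formula satisfying $B(a,z)\leftrightarrow\exists b<n\,(B(\lfloor a/2\rfloor,b)\wedge A(a,b,z))$'' is not available.

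\emph{A two-place $B(a,z)$ with $z<n$ cannot carry the information.} When you treat $\fun{cmodn}$, all function symbols at hand are $\mathbf{AC}^0$. An esb formula built from them defines an $\mathbf{AC}^0$ relation, unless it uses the $\exists!$-clause on an object whose existence has already been proved. If $B(t,\cdot)$ singled out $\fun{cmodn}(t)$ among the $n$ candidates, $\fun{cmodn}$ would be in $\mathbf{AC}^0$.

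\emph{(BC) does not help either.} It can encode the bits of $x$ and the values of $k$ on prefixes, but not the running counts, whose bits are not esb-expressible. A formula ``relating consecutive steps'' therefore needs the running counts as an extra variable. As a free parameter this gives nothing, since the conclusion would not assert that a consistent history exists. Quantified inside $B$, it is esb only once the existence of that history has been proved, which is the very point at issue.

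The paper's move is to make the history of values of $f$ an explicit \emph{output} of the induction formula. Its $B_f(x,y,u)$ says three things: $u$ is the sequence of values of $f$ on the successive prefixes of $x$; $u(i+1)=u(i)-n\times k(X(w[i]))\times\lfloor u(i)/(n-1)\rfloor+k(X(w[i]))$ for all $i<\bt{lh}(u)-1$; and $y$ is the final update. The bit-history $w$ of $x$ is provably unique already in $\bth{TAC^0}$, and the consistency check is sharply bounded, so $B_f$ is esb. Since $u$ is not bounded by a constant, plain $n$-PIND does not apply. The paper therefore uses a derived two-component rule $n$-PIND$^*$: the counter is updated by the mod-$n$ step, and the history is extended by $b_1\mapsto\bt{pad}(b_1,t)+k_1(a,b_0)$. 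It then gets uniqueness by esb-LIND and reads off $f$ from $B_f$.

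Be aware, though, that this only relocates your difficulty. The paper derives $n$-PIND$^*$ by applying $n$-PIND to $\exists z\,B(x,y,z)$, which it declares esb on the strength of a \emph{uniqueness} proof alone. The $\exists!$-clause also needs provable existence of the history, and that existence is precisely the conclusion being derived. Moreover, $n$-PIND only ever yields witnesses below $n$, and (BC) only assembles objects whose bits are already esb. So it is unclear that $n$-PIND$^*$ is derivable at all.

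To complete this direction you need a genuinely new ingredient. Candidates are a rule or axiom asserting the existence of the whole counting history, or the symbol $\bt{cmod-n}$ of $\bth{TACn^0}$, which makes $\bt{cmod-n}(x)=y$ atomic.
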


\begin{toappendix}

\subsection{Proof of Section~\ref{sec:BA characterization with new rule}}\label{app:ruleBA}

We first recall the new theory $\tacp$ for self-containment. Its language is the same as $\tac$.
However, $\tacp$ extends the set of axioms and rules of $\tac$ by $n$-PIND, which is defined as follows. 
Let $n\in \mathbb{N}$ and $B_k$ be some esb-formula. 
$n$-PIND$(B)$ is the following inference rule: 
\begin{prooftree}
\AxiomC{$\Gamma,\, b< n,\, B(\big\lfloor\frac{a}{2}\big\rfloor, b)\ \Rightarrow\ \Delta,\, \exists z<n\ B (a, z) \wedge A(a,b,z)$}
\AxiomC{$\Gamma\ \Rightarrow\ \Delta, \, \exists y < 2\,  B_k(a, y)$}
\BinaryInfC{$\Gamma, \, b<n, \, B (0, b)\ \Rightarrow\ \Delta,\, \exists y\, B (t,y)$}
\end{prooftree}
where $A$ is the formula s.t. 
$$A(x,y,z) \iff \exists z'<2\   z = y - n \times z' \times \big\lfloor  \frac{y}{n-1}\big\rfloor + z'\wedge B_k(x,z')$$ 
and $a, b$ must satisfy the eigenvariable condition in the premise. 

Now, we intend to simplify the left-hand side of the premises, so that it becomes more readable.

\begin{remark} \label{abuse}
With a slight abuse of notation, we use the following derivation form to denote $n$-PIND($B$): 
\begin{prooftree}
\AxiomC{(\ref{7})}
\AxiomC{(\ref{8})}
\BinaryInfC{(\ref{9})}
\end{prooftree}
where 
\begin{align}
    \Gamma,\, b< n,\, B(\big\lfloor\frac{a}{2}\big\rfloor, b)\ &\Rightarrow\ \Delta,\, B (a, b - n \times k(a) \times \big\lfloor  \frac{b}{n-1}\big\rfloor + k(a))  \label{7} \\
   \Gamma\ &\Rightarrow\ \Delta, \, \exists y < 2\,  B_k(a, y) \label{8} \\
    \Gamma, \, b<n, \, B (0, b)\ &\Rightarrow\ \Delta,\, \exists y\, B (t,y) \label{9}
\end{align}
$B_k$ is the formula defining the function $k$; 
and the use of $k$ directly in the language can be justified by the forthcoming Proposition \ref{conservext}.
\end{remark}

The esb-formulas of $\tacp$ are defined in the same way as $\tac$, and 
$$\tacp = \tac + \{ \text{$n$-PIND}(B):B\text{ is esb-formula}\}$$
As in the case of $\tac$, adding a symbol $f$ for an esb-definable function of $\tacp$ constitutes a conservative extension $\tacp(f)$ of $\tacp$. 

\begin{proposition} \label{conservext}
Let $f$ be an esb-definable function in $\tacp$. Then, for every esb-formula $B$ in $\tacp(f)$, there exists a formula $B'$ without $f$ s.t. $\tacp(f) \vdash B \leftrightarrow B'$.  
\end{proposition}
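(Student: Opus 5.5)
We argue exactly as for conservativity of definitional extensions in $\tac$ (the Clote--Takeuti result recalled for $\bth{TAC^0}$), checking that the extra rule $n$-PIND is harmless. Since $f$ is esb-definable in $\tacp$, there are an esb-formula $A_f(\tu x,y)$ and a term $t_f(\tu x)$ such that $\tacp$ proves $\forall \tu x\,\exists y\leq t_f(\tu x)\,A_f(\tu x,y)$ and the uniqueness of such $y$. The formula $\exists y\leq t_f(\tu x)\,A_f(\tu x,y)$ is therefore one of those admitted by the closure clause defining essentially sharply bounded formulas, since $\exists!y\leq t_f(\tu x)A_f$ is provable.

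The construction of $B'$ proceeds by induction on the structure of the esb-formula $B$ of $\tacp(f)$, with an inner induction on the number of occurrences of $f$ in atomic subformulas.

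For an atomic formula $P(\dots f(\tu s)\dots)$ with $f(\tu s)$ innermost, we replace it by
\[
\exists y\leq t_f(\tu s)\,\bigl(A_f(\tu s,y)\wedge P(\dots y\dots)\bigr).
\]
This formula is equivalent in $\tacp(f)$ to the original, using the defining axiom $A_f(\tu x,f(\tu x))$, existence and uniqueness. Because of provable uniqueness it is again esb, and we may equivalently use the dual $\forall y\leq t_f(\tu s)(A_f\to P)$ form. Repeating this removes every occurrence of $f$.

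Boolean connectives and sharply bounded quantifiers commute with the translation, since the terms bounding sharply bounded quantifiers may themselves contain $f$, which we remove by the same replacement. The esb closure clauses, i.e.\ $\exists x\leq s(A\wedge B)$ with $\exists!x\leq s\,A$ provable, need a little care. The provability hypothesis is stated in $\tacp(f)$, and we must transfer it to $\tacp$. For this we use the translation inductively together with the fact that the translation commutes with provable equivalence.

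The main obstacle is this transfer of provability from $\tacp(f)$ to $\tacp$. It requires showing that $\tacp(f)$ is conservative over $\tacp$, which we do by translating proofs. Each instance of esb-LIND, (BC) or $n$-PIND$(B)$ in $\tacp(f)$ is mapped to the corresponding instance for the translated formula $B'$. This is legitimate because $B'$ is esb in $\tacp$ and the rule schemas range over all esb-formulas. The side formula $B_k$ and the term $t$ in $n$-PIND are handled the same way. The eigenvariable conditions are preserved because the translation introduces only fresh bound variables.

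The two proofs, that $B\leftrightarrow B'$ holds and that translated proofs are valid, are carried out simultaneously. The first is by induction on $B$, the second by induction on the derivation, as in the argument for $\tac$.
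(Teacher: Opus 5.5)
Your proposal is correct and follows the route the paper relies on. The paper gives no separate proof of this proposition and simply transfers the Clote--Takeuti conservativity argument for $\mathsf{TAC^0}$ to $\mathsf{TAC^0}[n]$, which is exactly your replacement of each $f$-atom by $\exists y\leq t_f(\tu s)\,(A_f(\tu s,y)\wedge P(\dots y\dots))$, combined with a simultaneous translation of formulas and proofs that also covers the extra rule $n$-PIND.

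Two minor remarks. First, the statement as literally posed needs only the replacement step, using existence and uniqueness of $A_f$; the simultaneous induction matters only for getting $B'$ esb in $\mathsf{TAC^0}[n]$. Second, that induction is best run along the stage-wise generation of esb-formulas and theorems, not separately on formula structure and derivation height. The reason is that a uniqueness proof invoked in the closure clause may apply induction to formulas larger than $B$.
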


Clearly, the rule $n$-PIND is designed to capture $\ell$-$n$ODE. This  makes the soundness rather direct to prove. 

\begin{lemma}[Witnessing theorem]~\label{witness}
Let $f$ be an esb-definable function in $\tacp$. Then, $f$ is in $\ACCp{n}$.
\end{lemma}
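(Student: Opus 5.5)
We will prove Lemma~\ref{witness} by Buss' witnessing method, exactly as in the proof of Lemma~\ref{lemma:TACFAC}~(ii) for $\bth{TAC^0}$. Concretely, we prove by induction on the height of a free cut free $\tacp$-derivation of a sequent $\Gamma\Rightarrow\Delta$ of ep$\Sigma^b_1$-formulas the analogue of the witnessing statement of that lemma: there are functions $f_1,\dots,f_n$ in $[\mathscr{B}_0;\circ,\ell\text{-ODE}_1,\ell\text{-}n\text{ODE}]$ witnessing the existential quantifiers of $\Delta^{sb}$ from witnesses $\tu b$ of $\Gamma^{sb}$, valid in the standard model. By Th.~\ref{th:FACCn} these functions lie in $\ACCp{n}$. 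Applying the statement to the sequent $\Rightarrow \exists y\le t(\tu x)A(\tu x,y)$ given by condition~(i) of esb-definability then yields $f\in\ACCp{n}$; uniqueness~(ii) guarantees that the witness is $f$ itself.

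First, free cut elimination must still hold for $\tacp$. This is Buss' argument, since $n$-PIND, like esb-LIND, has esb principal formulas and satisfies the eigenvariable condition. We therefore only need to treat the rules. Initial sequents (including BC, BE and the basic axioms), structural rules, propositional rules, quantifier rules, anchored cuts and esb-LIND are handled exactly as in Lemma~\ref{lemma:TACFAC}, because $\ACDL$ is contained in the new algebra. For esb-formulas we use Prop.~\ref{prop:sbf}, extended to the new algebra, to obtain characteristic functions; Prop.~\ref{conservext} lets us treat the symbol $k$ as the function defined by $B_k$.

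The only new case is a last inference of $n$-PIND$(B)$, in the simplified form of Remark~\ref{abuse}. By the induction hypothesis applied to the right premise (\ref{8}), we get a function $\kappa(a,\tu c,\tu b)\in\{0,1\}$ in the algebra, which witnesses $B_k(a,\cdot)$ whenever no formula of $\Delta$ is witnessed. From the left premise (\ref{7}) we get, for each $b<n$ with $B(\lfloor a/2\rfloor,b)$, that $B(a,\,b-n\kappa(a)\lfloor b/(n-1)\rfloor+\kappa(a))$ holds, unless some $\Delta$-witness function succeeds. We then define the witness for the conclusion by $\ell$-$n$ODE. Writing the induction along the bits of $t$, we set $F(0)=b$ and let $F$ on the prefix $\lfloor t/2^{j}\rfloor$ be obtained from its value on $\lfloor t/2^{j+1}\rfloor$ via the recurrence. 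Explicitly, $F(x,\tu c)$ is the solution of the IVP with $F(0,\tu c)=b$ and
$$\frac{\partial F(x,\tu c)}{\partial \ell}=-n\,\kappa\big(\mathrm{pref}(t,x),\tu c\big)\Big\lfloor\frac{F(x,\tu c)}{n-1}\Big\rfloor+\kappa\big(\mathrm{pref}(t,x),\tu c\big),$$
where $\mathrm{pref}(t,x)=\lfloor t/2^{\ell(t)-\ell(x)-1}\rfloor$ is in $\ACDL$ via $\bt{msp}$. This is a legitimate instance of $\ell$-$n$ODE, since $g=b<n$ and $\kappa\le1$. The witness of $\exists y\,B(t,y)$ is then $F(2^{\ell(t)}-1,\tu c)$.

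The main obstacle is the side formulas $\Delta$, which may succeed at some intermediate prefix. At the first prefix at which a $\Delta$-witness produced by either premise becomes true, we must output that witness instead. We will handle this as in Buss' PIND case, using the sharply bounded minimization of Remark~\ref{remark:minimisation}. Let $j_0=\mu j\le\ell(t)$ be the least $j$ such that some side formula (an esb test, hence $\ACDL$-computable) holds at $\mathrm{pref}(t,j)$ with the value $F(\mathrm{pref}(t,j))$. Each $\Delta$-witness is then the corresponding premise witness evaluated at $j_0$, or a default if no such $j_0$ exists. This is a composition of $\ACDL$ operations with $F$, so it stays in the algebra. Correctness then follows by a meta-level induction on $j$: either $B(\mathrm{pref}(t,j),F)$ holds, or a $\Delta$-formula was witnessed at some earlier prefix. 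Finally, the slight abuse that $B(0,b)$ is indexed by the variable $b$ is resolved by treating $b$ as a parameter of $F$.
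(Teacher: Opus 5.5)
Your proposal is correct and takes the paper's route: Buss-style witnessing, with $n$-PIND as the only new case. There the witness is an $\ell$-$n$ODE instance driven by the $\{0,1\}$-valued witness of the right premise, identified with the $k$ of the left premise via functionality of $B_k$, exactly as the paper does with its $f_2$. Your version is also more careful than the paper's sketch, which ignores both the side formulas $\Delta$ and the most-significant-bit-first order of the induction. Your $\mathrm{pref}(t,x)$ indexing and the sharply bounded minimization over the $\ell(t)+1$ prefixes handle both correctly, once $b$ and $\kappa$ are clipped into range so the $\ell$-$n$ODE instance is unconditionally legitimate.
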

\begin{proof}
It suffices to extend the proof of Lemma \ref{lemma:TACFAC} to include the rule $n$-PIND. 
We work with the notation of Remark \ref{abuse}. 
By IH, there is a function $f$ in $\ACCp{n}$ s.t. $b< n,\, B(\big\lfloor\frac{a}{2}\big\rfloor, b)\ \Rightarrow\ B (a, f_0(a,b)) \wedge f_0(a,b) = b - n \times f_1(a,b) \times \big\lfloor  \frac{b}{n-1}\big\rfloor + f_1(a,b)\wedge B_k(a,f_1(a,b)) \wedge f_1(a,b) < 2$, where $f_0(a,b)$ and $f_1(a,b)$ are the first and second components of $f(a,b)$. 
Let $f_2$ be the witnessing function for (\ref{8}), i.e., $\ \Rightarrow\ f_2(a) < 2\wedge  B_k(a, f_2(a))$. Since this witness is unique, it is easy to see that $f_2(a) = f_1(a,x)$, for every $x$. 

Let $f_w$ be the function defined from $f_2$ by $\ell$-$n$ODE: 
\begin{align*}
    \frac{\partial f_w(x)}{\partial \ell} 
    &=\ \ - n \times f_2(a) \times \Big\lfloor  \frac{g(x)}{n-1}\Big\rfloor + f_2(a)
\end{align*}
with IVP as $f_w(0) =t_0$. 
Clearly, $b<p, \, B (0, t_0)\ \Rightarrow\  B (t,f_w(t))$. 
\end{proof}

The converse direction, i.e. the completeness, is more convoluted. 
We first show that an additional rule is admissible in $\tacp$. 
We will make the same abuse of notation as in Remark \ref{abuse}. 

\begin{definition}
$n$-PIND$^*(B)$ is the following inference rule: 
\begin{prooftree}
\AxiomC{(\ref{1})}
\AxiomC{(\ref{21})}
\AxiomC{(\ref{22})}
\TrinaryInfC{(\ref{3})}
\end{prooftree}
where
\begin{align} 
\Gamma,\, B(\big\lfloor\frac{a}{2}\big\rfloor, b_0,  b_1)\ &\Rightarrow\ \Delta,\, B (a, h_0(a,b_0), h_1(a,b_0,b_1))\label{1} \\
\Gamma\ &\Rightarrow\ \Delta, \, \exists d<2 \,  B_{k_0}(a, d) \label{21}\\
    \Gamma\ &\Rightarrow\ \Delta, \, \exists d<t \,  B_{k_1}(a, d) \label{22}\\
    \Gamma, \, b_0<n,\, B(0,b_0,b_1)\ &\Rightarrow\ \Delta,\, \exists y_0y_1\ B (t,y_0, y_1) \label{3}
\end{align}
and $h_0, h_1$ are defined as follows: 
\begin{itemize}
    \item $h_0(a, b_0) = b_0 - n \times k_0(a) \times \big\lfloor \frac{b_0}{n-1}\big\rfloor + k_0(a)$ where $k_0$ is esb-definable in $\tacp$. 
    \item $h_1(a,b_0, b_1) = pad(b_1, t) + k_1(a, b_0)$, where $t$ is a term, and $k_1 \leq t$ is an esb-definable function in $\mathsf{TAC^0}$[$p$].
\end{itemize}

Note that the terms $a, b_0,  b_1$ must satisfy the eigenvariable condition in (\ref{1}), (\ref{21}) and (\ref{22}). 
\end{definition}

\begin{proposition} \label{p-LIND*}
Let $B(x,y,z)$ be an esb-formula. Then, $n$-PIND$^*(B)$ is derivable in $\tacp$. 
\end{proposition}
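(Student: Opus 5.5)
We derive $n$-PIND$^*(B)$ from $n$-PIND together with (BC) and esb-LIND of $\bth{TAC}^0$. The idea is to split the pair $(b_0,b_1)$. The component $b_0$ evolves exactly as an $\ell$-$n$ODE counter, so it will be handled by one application of $n$-PIND. The component $b_1$ evolves by the shift-and-add recurrence $h_1(a,b_0,b_1)=\bt{pad}(b_1,t)+k_1(a,b_0)$, which is a generalized $\ell$-ODE$_1$ (the $\ODEup$ schema). It can therefore be expressed in $\bth{TAC}^0$ as a concatenation of blocks, using Bit Comprehension. The history of $b_0$ is thus the only genuinely non-$\AC^0$ information, and it is supplied by the counting rule.

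\emph{Step 1.} Apply $n$-PIND to the auxiliary esb-formula
\[
C(a,b_0) :\equiv b_0<n \wedge \exists b_1\, B(a,b_0,b_1).
\]
The bound on $b_1$ is given by a term of size $|t|\cdot|a|$, so $C$ stays esb, or we bound it using the uniqueness clause. Premise (\ref{21}) is exactly the second premise of $n$-PIND. The first premise of $n$-PIND follows from (\ref{1}) by $\exists$-introduction, since $h_0$ is literally the successor term required by the formula $A$. This yields, for every $a\le t$, a value $b_0(a)<n$. However, each application of $n$-PIND only asserts the value at the endpoint $t$, while $b_1$ depends on all intermediate values $b_0(\lfloor a/2^j\rfloor)$.

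\emph{Step 2.} To fix this, we apply $n$-PIND uniformly along prefixes. The value at prefix $\bt{msp}(a,j)$ is the count modulo $n$ of the bits of the $\{0,1\}$-string $w_j$ whose $i$-th bit is $k_0(\bt{msp}(a,i))$ for $i\ge j$. This string exists by (BC) from the esb-formula defining $k_0$. We then obtain each intermediate value as $\bt{cmod}$ of a suffix, via $n$-PIND with the parameter $j$ kept free.

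\emph{Step 3.} With the sequence $\langle b_0(\bt{msp}(a,j))\rangle_{j\le |a|}$ esb-definable, use (BC) once more to define the candidate $y_1$ bitwise. Block $j$ of $y_1$, of width $|t|$, equals $k_1(\bt{msp}(a,j),b_0(\bt{msp}(a,j)))$, with the initial block $b_1$. This is precisely the closed form of the $\ODEup$ solution computed in Prop.~\ref{prop:ODEup}.

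\emph{Step 4.} Prove by esb-LIND on $j$ (equivalently esb-PIND) that $B(\bt{msp}(a,j),\, b_0(\cdot),\, \text{prefix of } y_1)$ holds. The induction step is exactly premise (\ref{1}), combined with Bit Extensionality (BE) to identify $\bt{pad}(\text{prefix},t)+k_1$ with the next prefix. This step uses that no carry occurs, because $k_1\le t$ fits in a block. Instantiating at $j=0$ gives (\ref{3}).

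The main obstacle is Step 2. $n$-PIND gives totality only at the endpoint, so we must argue, inside the theory, that the intermediate counts are coherent with one another. This means proving $b_0(\bt{msp}(a,j))=h_0(\ldots,b_0(\bt{msp}(a,j+1)))$ for all $j$ simultaneously. My first attempt would be to run $n$-PIND on the formula $C'(a,\bar b):\equiv$ ``$\bar b$ codes, via bits of width $\ell(n)$, a sequence of counter values consistent with (\ref{1})''. Its totality would then be packaged as a single $\ell$-ODE$_1$-style shift step, in the same way as $h_1$. If that fails, the fallback is to invoke uniqueness of the counting witness to glue the separate endpoint applications.
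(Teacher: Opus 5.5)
Your decomposition is the paper's. You project $B$ onto the counter, obtain the counter by $n$-PIND, and recover $b_1$ as a shift-and-add ($\ODEup$) instance relative to it. Your Steps 3--4 unwind, through (BC), (BE) and esb-LIND, the paper's appeal to Prop.~\ref{prop:ODEup} and Lemma~\ref{lemma:FACTAC}. There is one index slip: block $j$ should be $k_1(\bt{msp}(a,j),b_0(\bt{msp}(a,j+1)))$, because $h_1$ reads the counter at $\lfloor a/2\rfloor$. You locate the crux correctly, but Steps 1--2 do not get past it.

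First, $C(a,b_0)\equiv b_0<n\wedge\exists b_1\,B(a,b_0,b_1)$ is not esb. A quantifier bounded by a term of length about $|t|\cdot|a|$ is not sharply bounded. The guarded clause in the definition of esb would need a provably unique witness for $b_1$, i.e.\ provable existence of the history, which is exactly what (\ref{3}) is supposed to deliver. The paper's proof makes the same move with $B'$, and its appeal to esb-LIND yields only uniqueness of $z$.

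Second, Step~2 is left open, and none of your options closes it:
\begin{itemize}
\item $\bt{cmod}$ is not in the language of $\mathsf{TAC^0}[n]$; it belongs to $\bth{TACn^0}$.
\item $n$-PIND cannot be run on a sequence-valued $C'$, since it only transports a witness $z<n$ through the fixed mod-$n$ update.
\item Gluing by uniqueness needs $B$ to be functional, and premises (\ref{1})--(\ref{22}) do not provide that.
\end{itemize}
Your endpoint worry, by contrast, is harmless, since $t$ may be a free variable.

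In fact no auxiliary formula can work with the rules as stated. From the first premise of $n$-PIND$(B)$, drop the conjunct $A$ and apply $\exists$-left to the eigenvariable $b$. This gives $\Gamma,E(\lfloor a/2\rfloor)\Rightarrow\Delta,E(a)$ for the esb formula $E(x):\equiv\exists z<n\,B(x,z)$. Esb-PIND, available in $\mathsf{TAC^0}$, then yields $\Gamma,E(0)\Rightarrow\Delta,E(t)$ and hence the conclusion; the second premise is never used. So $n$-PIND is a derived rule of $\mathsf{TAC^0}$, and $\mathsf{TAC^0}[n]$ proves exactly what $\mathsf{TAC^0}$ proves.

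Now take for $B(x,y_0,y_1)$ the sharply bounded formula stating that the low $|x|+1$ blocks of $y_1$ form a consistent history of mod-$n$ counter values along $0,1,\dots,\lfloor x/2\rfloor,x$, with $k_0(a)=a-2\lfloor a/2\rfloor$, ending in $y_0$. Put $k_1(a,b_0)=h_0(a,b_0)$. Premises (\ref{1}), (\ref{21}) and (\ref{22}) should then follow in $\mathsf{TAC^0}$ by routine block arithmetic with $\bt{pad}$ and $\bt{msp}$. If $n$-PIND$^*(B)$ were derivable, $\mathsf{TAC^0}$ would prove that every $x$ has such a history. The witnessing argument of Lemma~\ref{lemma:TACFAC}, after bounding the quantifiers, would then compute the number of $1$'s of $x$ modulo $n$ in $\AC^0$, contradicting $\mathrm{MOD}_n\notin\AC^0$.

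So Step~2 is not a technical hurdle; it is where the statement itself fails. A coherent counter history cannot be extracted from $n$-PIND. What is missing is a rule whose induction formula carries the history, for instance $n$-PIND$^*$ taken as primitive, rather than a derivation of one from $n$-PIND.
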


\begin{proof}
Let $f_0$ and $f_1$ be the two functions defined by $B$, i.e., 
$$B(x, y, z) \iff y = f_0(x)\wedge z = f_1(x,y)$$ 
Let $B'(x,y) :\equiv \exists z\, B(x,y,z)$ and $B''(x,y) :\equiv \exists y\, B(x,y,z)$. 

Assume (\ref{1}), (\ref{21}), (\ref{22}) and $\tacp \vdash B(0,b_0,b_1)$ where $b_0< n$, we want to show that $\tacp \vdash \forall x \, \exists  yz\, B(x,y,z)$. Note that $\tacp$ proves $\exists !z\, B(a,b,z)$ by esb-LIND because the formula stating the uniqueness is esb: 
$$B(a,b,z_0) \wedge B(a,b,z_1) \to z_0 = z_1$$
So, the formula $B'(x,y)$ is an esb-formula. 
Clearly, $f_0$ is the function defined by $B'$. Given the condition (\ref{1}) and (\ref{21}), we obtain the premise of $n$-PIND$(B')$. Let $b$ be $b_0$ in (\ref{9}), we obtain that $f_0$ is esb-definable in $\tacp$. 

Note that (\ref{1}) also states the following: 
$$\frac{\partial f_1(x)}{\partial \ell} = \big(2^{\ell(t)}-1\big) \times f_1(x) + k_1\big(x,f_0\big(\big\lfloor\frac{x}{2}\big\rfloor\big)\big)$$
By (\ref{22}), we obtain that $k_1<t$. 
Thus, $f_1$ is definable from $b_1, k_1$ and $t$ by $\ODEup$. By Proposition \ref{prop:ODEup}, $\ODEup$ is in $\mathbb{ACDL}$; by Lemma \ref{lemma:TACFAC}, $f_1$ is esb-definable in $\tac(f_0)$. So $\tac(f_0, f_1)$ is a conservative extension of $\tac(f_0)$; therefore, $\tacp(f_1)$ is a conservative extension of $\tacp$. It is easy to show that $\tacp(f_1)\vdash  B''(a, f_1(a))$. Hence, $\tacp \vdash  \exists y\,  B''(a, y)$ by Proposition \ref{conservext}, as desired. 
\end{proof}

In order to prove the completeness, we first define how we encode the history of the function $f$ computing modulo $p$. 
We first present the notions of sequences that are available in $\tac$. For more formal detail, see Appendix 1 of \cite{CloteTakeuti}. To allege notation, we will sometimes use also $|x|$ as a synonym of $\ell(x)$.

\begin{definition}
A term $t$ is said to be a sequence definable in $\tac$ if $t$ is a pair $(a,b)$ s.t. 
\begin{itemize}
    \item $a = \bt{pad}(t_0, \bt{pad}(t_1, \ldots, \bt{pad}(t_{m-1}, t_m) \ldots))$
    \item for every $i$, $|t_i| = |b|+1$
\end{itemize}
where the length $\bt{lh}(t)=m+1$ of $t$ is computable from $a$ as $\mu x< |a|, \ |a| = |x| \times (|b|+1)$. 
\end{definition}

\begin{remark}
With abuse of notation, we use $\langle t_0, \ldots, t_m\rangle$ to denote the $t$ above. 
\end{remark}


\begin{proposition} \label{seq}
Let $w$ and $u$ be sequences definable in $\tac$, then the following predicates can be esb-defined: \begin{itemize}
    \item $\bt{seq}(w)$: $w$ is a sequence. 
    \item $w(i)$: the $i$-th element of $w$. 
    \item $w[i]$: $\langle w(0), \ldots, w(i) \rangle$
    \item $w*u$: concatenation of $w$ and $u$. 
\end{itemize}
\end{proposition}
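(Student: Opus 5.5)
The statement to prove is Proposition~\ref{seq}: for sequences definable in $\tac$, the predicates $\bt{seq}(w)$, $w(i)$, $w[i]$ and $w*u$ are esb-definable. I would handle each item separately, writing its graph as an esb-formula built from $\bt{pad}$, $\bt{msp}$, $\bt{bit}$, $|\cdot|$ and sharply bounded quantification. Each existence claim would be justified by Bit Comprehension (BC) together with Bit Extensionality (BE), following the pattern of Lemma~\ref{lemma:FACTAC}. The guiding observation is that a sequence $w=(a,b)$ is a block string. Each entry $t_i$ occupies a fixed-width block of $|b|+1$ bits, so every structural operation reduces to arithmetic on bit positions that are sharply bounded by $|a|$.

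\textbf{The predicate $\bt{seq}(w)$.} I would take $w$ as a pair, decoded by the usual pairing available in $\tac$. Then $\bt{seq}(w)$ asserts two things: that $|b|+1$ divides $|a|$, and that each block has its leading bit set, so that $|t_i|=|b|+1$. Divisibility is expressed as $\exists x\le |a|\,(|a|=x\cdot(|b|+1))$. The product here is of two sharply bounded quantities and is definable via $\#$ and length, since $|x\#y|=|x||y|+1$ by \bt{ax13}. The leading-bit condition is $\forall i<\bt{lh}(w)\ \bt{bit}((i+1)(|b|+1)-1,a)=1$, with $\bt{lh}$ given by the bounded $\mu$ as in the definition, which Remark~\ref{remark:minimisation} shows is available. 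Both conditions are sharply bounded, so $\bt{seq}(w)$ is esb.

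\textbf{The element $w(i)$.} This is the unique $y<2^{|b|+1}$ with $\forall j\le|b|\ (\bt{bit}(j,y)=\bt{bit}(p_i+j,a))$, where $p_i=(\bt{lh}(w)-1-i)(|b|+1)$ is the offset of block $i$. The offset counts from the high end because $t_0$ is the most significant block. Existence comes from BC and uniqueness from BE. Equivalently, $w(i)$ can be obtained as a term: apply $\bt{msp}(a,p_i)$ and then truncate to $|b|+1$ bits, where the truncation is itself a BC instance.

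\textbf{The prefix $w[i]$ and the concatenation $w*u$.} For $w[i]$ I would take the pair $(\bt{msp}(a,p_i),b)$, which is a term, so its definability is immediate. For $w*u$ there are two cases. If the two sequences have equal block width, the concatenation is $(\bt{pad}(a_w,a_u)+a_u,\,b)$. Otherwise I would first re-pad both sequences to the common width $\max(|b_w|,|b_u|)+1$, and here BC is needed again. The resulting $y$ is specified bitwise by a case distinction on whether the bit position falls in the $w$-part or the $u$-part, and within that part on whether it is a padding bit or a copied bit. This is the only step that needs a genuinely new BC instance with a non-trivial esb matrix.

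\textbf{Main obstacle.} I expect the main obstacle to be bookkeeping rather than conceptual. The index arithmetic $(\bt{lh}-1-i)(|b|+1)+j$ must stay inside the language of $\tac$, where there is no general multiplication. I would handle it by noting that all factors are lengths. Products of lengths are definable through $\#$ together with \bt{ax13}, or alternatively by a sharply bounded iterated sum, which is an instance of $\ell$-ODE$_1$ in $\ACDL$ and hence esb-definable by Lemma~\ref{lemma:FACTAC}. Finally, uniqueness in every item follows from BE, because each object is specified bit by bit.
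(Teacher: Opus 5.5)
\textbf{Overall.} Your bit-level strategy is the standard one. The paper gives no proof of this proposition; it only points to Appendix~1 of \cite{CloteTakeuti}, which handles sequence operations by this kind of bitwise specification: existence from (BC), uniqueness from (BE), and products of lengths obtained through $\#$. You read the (garbled) definition as a fixed-width block string, which is the intended reading. Your treatment of $\textsc{seq}(w)$, $w(i)$, $w[i]$, and of $w*u$ when both sequences have the same block width, is fine at the level of detail the statement calls for.

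\textbf{The failing step: $w*u$ for unequal widths.} Suppose you re-pad the narrower sequence with $0$'s to the common width $W$.
\begin{itemize}
\item Its blocks no longer have their top bit set, so the result violates your own $\textsc{seq}$.
\item If the narrower sequence is $w$, with width $W_w<W$, the code $a'$ of $w*u$ has $|a'|=(\textsc{lh}(w)+\textsc{lh}(u))W-(W-W_w)$. Since $0<W-W_w<W$, this is not a multiple of $W$, so $\textsc{lh}$ and all block offsets $p_i$ of the result can no longer be recovered.
\item Padding with a leading $1$ instead would change the stored values.
\end{itemize}
More fundamentally, under the literal condition $|t_i|=|b|+1$ all entries of a sequence have the same bit-length. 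So an unequal-width concatenation cannot be a sequence that keeps the original entries at all.

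\textbf{How to fix it.} You have two options.
\begin{itemize}
\item Restrict $*$ to arguments of the same width. This is the only case used later, for instance when $u_1$ is obtained from $u_0$ by appending one element.
\item Switch to the usual marker-bit coding: store $2^{|b|}+t_i$ in a block of width $|b|+1$, and strip the marker in $w(i)$. This makes re-padding harmless.
\end{itemize}
The marker-bit coding is needed anyway. The paper later uses sequences with $0$ entries, such as $\hat w_x$ and the histories $u$ with values in $\{0,\dots,n-1\}$. These are excluded by the literal condition $|t_i|=|b|+1$ that your $\textsc{seq}$ faithfully encodes.
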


For each $x$, we can construct a history of dividing $x$ by $2$, as follows: 
$$w_x\ \ :=\ \ \langle\ 0, 1, \ldots, \big\lfloor\frac{x}{2}\big\rfloor, x \ \rangle$$
Then, the sequence below represents the history of computing $f(x)$: 
$$f[w_x]\ \ :=\ \ \langle\ f(0), f(1) \ldots, f(\big\lfloor\frac{x}{2}\big\rfloor), f(x)\ \rangle$$



However, $w_x$ cannot be directly encoded in $\tacp$, because every element of $w_x$ is of increasing size and $\tac$ can only handle sequences where every element is of fixed size. Hence, we introduce an encoding of the history of computation. It is possible to reconstruct $w_x$ from the following sequence in $\tac$: 
\begin{align*}
\hat{w}_x\ \  & :=\ \  \langle\  x - 2\times \big\lfloor\frac{x}{2}\big\rfloor, \big\lfloor\frac{x}{2}\rfloor - 2\times \big\lfloor\frac{x}{4}\rfloor, \ldots\ldots, \big\lfloor\frac{x}{2^{|x|-1}}\rfloor - 2\times \big\lfloor\frac{x}{2^{|x|}}\rfloor\ \rangle\\
   & =\ \  \langle \fun{BIT}(x,1),,\ldots, \fun{BIT}(x,\ell(x)-1),  \fun{BIT}(x,\ell(x)) \ \rangle
\end{align*}
whose elements are in $\{0,1\}$. 
Recall that $\big\lfloor\frac{x}{2^{|y|}}\big\rfloor=\mathsf{MSP}(x,|y|)$ and that $\fun{BIT}$ is axiomatized from $\bt{MSP}$ in $\tac$. 

\begin{lemma} \label{wx}
The following maps can be computed in $\fac$: 
$$W\  :\  x \  \mapsto \  \hat{w}_x, \quad 
    X\  :\  \hat{w}_x \ \mapsto \  x$$
\end{lemma}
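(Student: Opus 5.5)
We need to show that $W: x\mapsto \hat w_x$ and $X:\hat w_x\mapsto x$ are in $\fac$. The plan is to write both maps directly in the algebra $\ACDL=[\mathscr{B}_0;\circ,\ell\text{-ODE}_1]$, which characterizes $\fac$ by \cite{ADK24a,ADK25}, so membership follows at once. Both maps essentially re-index the bits of $x$. The only work is fixing the sequence encoding of the previous definition: each entry $t_i$ must have length $|b|+1$ for a fixed second component $b$, and the entries are glued by iterated $\bt{pad}$. That means shifting by a fixed block width, which $\ACDL$ handles by concatenation along $\ell$.

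\emph{The map $W$.} I fix $b=1$, so each entry occupies a block of width $2$. The $i$-th entry $\fun{BIT}(i,x)\in\{0,1\}$ is then stored as the two-bit word $2+\fun{BIT}(i,x)$. The leading $1$ is a marker that makes the length exactly $|b|+1=2$; this is needed since $\fun{BIT}$ values of length $0$ or $1$ would violate the definition.

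Set $\kappa(j,x)=1$ if $j$ is odd, and $\kappa(j,x)=\fun{BIT}(\lfloor j/2\rfloor,x)$ if $j$ is even. This is composed from $\fun{BIT}$, $\div 2$, $-$ and $\fun{sg}$, so it lies in $\ACDL$. I then define $f$ by $\ell$-ODE$_1$ with $f(0,x)=0$ and $\partial f(z,x)/\partial\ell=f(z,x)+\kappa(c(z,x),x)$. Here $c(z,x)=2\ell(x)-1-\ell(z)$, so the most significant block is emitted first, mirroring the \fun{Eval}-output construction in the proof of Prop.~\ref{prop:nonUnif}. The closed form of linear length ODEs then gives, bit by bit, that $W(x)=f(2^{2\ell(x)}-1,x)$ is exactly the $\bt{pad}$-concatenation $\langle\fun{BIT}(0,x),\dots,\fun{BIT}(\ell(x)-1,x)\rangle$. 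The bound $2^{2\ell(x)}$ is obtained as $(x\#1)$-style terms, or via $\#$ composed with $\ell$. The pair with $b=1$ is formed by the standard $\ACDL$ pairing.

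\emph{The map $X$.} Given $w=\hat w_x$, I first recover $\ell(x)=\bt{lh}(w)$. This is computable in $\ACDL$ by the sharply bounded $\mu$-operator of Remark~\ref{remark:minimisation}, since the length is a $\mu$ over $|a|$. The $i$-th bit of $x$ is the low bit of the $i$-th block, namely $\fun{BIT}(2i,w)$ up to the chosen ordering. I then rebuild $x$ by a second $\ell$-ODE$_1$ with $k(z,w)=\fun{BIT}(2(\bt{lh}(w)-1-\ell(z)),w)$, evaluated at $z=2^{\bt{lh}(w)}-1$. This is again bit-concatenation, so again an instance of $\ell$-ODE$_1$.

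I expect the main obstacle to be bookkeeping rather than anything conceptual. The indices in the displayed definition of $\hat w_x$ are off by one, and the ordering produced by nested $\bt{pad}$ has to be matched against the direction in which $\ell$-ODE$_1$ shifts. I would settle this first by one explicit computation on a 2-bit example, in the style of the \fun{cmod2} example, and then state the general bit identity $\fun{BIT}(2i,W(x))=\fun{BIT}(i,x)$. The identity $X(W(x))=x$ then follows by bit extensionality.
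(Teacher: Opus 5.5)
Your proposal is correct and follows essentially the paper's route. The paper gets $W$ by CRN, and your $\ell$-ODE$_1$ bit-concatenation in $\ACDL$ is the ODE counterpart of CRN; the only real difference is that for $X$ you rebuild $x$ by a second concatenation instead of appealing to the sequence operations of Prop.~\ref{seq}, and your marker-bit convention is a sensible repair of the fixed-width encoding (which as written cannot store an entry $0$), provided entries of $\hat w_x$ are then read with the marker stripped.
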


\begin{proof}
$W$ can be defined by CRN. $X$ can be constructed by operators manipulating sequences in Proposition \ref{seq}. 
\end{proof}

In the following, let $f$ be defined by $\ell$-$n$ODE from $b$ and $k$ and $k \in \ACCp{n}$. Since $f$ is also a function in $\ACCp{n}$, we will show that $f$ is esb-definable in $\tacp$. 

To this end, we must first obtain a correct way of representing $B_f$. 
We first define two auxiliary formulas, based on the notion of sequence that is available in $\tac$.

\begin{definition}
$B_1(w,x) :\equiv\ \bt{Seq}(w) \, \wedge\, \bt{lh}(w) = |x| \,
     \wedge \, \forall i < |x|, \ w(i) = \fun{BIT}(x, i+1)$. 
     
    $B_2(w, u) :\equiv \ \bt{seq}(w) \wedge \bt{Seq}(u) \wedge \bt{lh}(w) = \bt{lh}(u) \wedge u(0) = b\ \wedge 
     \forall i < \bt{lh}(u)-1, \ u(i+1) = u(i)- n \times k(X(w[i])) \times \lfloor \frac{u(i)}{n-1}\rfloor + k(X(w[i]))$. 
\end{definition}

Intuitively, in $B_1$, $w$ does the job of $\hat{w}_x$; in $B_2$, $u$ constructs $f[w_x]$ from $w$. 


\begin{example}
Let $x=11$, then
\begin{align*}
    w_{11} &= \langle \ 0,1,2,5,11\ \rangle \mbox{ and } \hat{w}_{11} =\langle \ 1, 1, 0,1\ \rangle
\end{align*}


Let $k(1)= k(11) = 1$, $k(2)= k(5) = 0$, $p=2$ and the initial value $b=1$. Then, $w = w_{11}$, $f[w_{11}] = u *\langle y\rangle$ where 
\begin{align*}
    u &= \langle\  1,0,0,0 \ \rangle \mbox{ and } y=1 
\end{align*}
\end{example}

Now, we define the formula $B_f$ for which we shall prove the completeness. 

\begin{definition} \label{convention}
$B_f(x, y, u)  :\equiv\ \exists w\ B_1(w, x) \wedge B_2(w,u)\wedge y=u(|x|) - k(x) \times n \times \lfloor \frac{u(|x|)}{n-1}\rfloor + k(x)$. 

We adapt the convention that if $x= 0$, then $y=b$ and $u = \langle \rangle$, and $u(|x|)$ has a unique value $v\in [0, \ldots, n-1]$ s.t. $b = v - k(0)\times n \times \lfloor \frac{v}{n-1}\rfloor + k(0)$.  
\end{definition}

\begin{remark}
$b\in [0,\ldots, n-1]$ is the value of $f(0)$ and $v$ does not correspond to any output of $f$. However, $v$ is unique and must exist, as we can define it as follows: 
$$v = \begin{cases}
    b &\text{if }k(0) = 0 \\
    b-1 &\text{if } k(0) = 1\ \&\ b \neq 0\\
    n-1 &\text{otherwise}
\end{cases}$$
\end{remark}

It is from the meaning of $B_f$ that it defines the function $f$ and its history of computation. 

\begin{proposition} \label{phif}
In $\mathbb{N}$, 
$$B_f(x, y, u) \ \ \iff\ \ y = f(x)\ \wedge\ u = w_{f(\lfloor\frac{x}{2}\rfloor)}$$
\end{proposition}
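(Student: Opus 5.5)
The plan is to prove both directions at once by showing that, for each fixed $x$, the formula $B_f(x,\cdot,\cdot)$ has exactly one witness pair. I will then check that this pair is $\big(f(x), f[w_{\lfloor x/2\rfloor}]\big)$. I read the right-hand side $w_{f(\lfloor\frac{x}{2}\rfloor)}$ in the statement as the history sequence $f[w_{\lfloor x/2\rfloor}]$ of Sec.~\ref{sec:BA characterization with new rule}.

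\textbf{Degenerate case $x=0$.} This case is governed by the convention of Def.~\ref{convention}. There $y=b=f(0)$ and $u=\langle\rangle$, and the auxiliary value $v$ given in the subsequent remark is the unique element of $\{0,\dots,n-1\}$ consistent with the last conjunct. Checking it is a direct case analysis on $k(0)\in\{0,1\}$ and on whether $b=0$. This is where one uses that the map $z\mapsto z-n\,k\lfloor z/(n-1)\rfloor+k$ is a bijection on $\{0,\dots,n-1\}$: it is the identity when $k=0$ and $+1 \bmod n$ when $k=1$.

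\textbf{Case $x\ge 1$, in three steps.}
\begin{enumerate}
\item \emph{$B_1$ pins down $w$.} $B_1(w,x)$ prescribes the length $|x|$ and every entry of $w$ as a bit of $x$. By the uniqueness of sequence codes (Prop.~\ref{seq}, ultimately bit extensionality), the only such $w$ is $\hat w_x$. By Lemma~\ref{wx}, $X(w[i])$ is then the prefix of $x$ reconstructed from the first $i+1$ recorded bits, i.e.\ the $(i+1)$-st element of the halving history $w_x$.
\item \emph{$B_2$ pins down $u$.} $B_2(w,u)$ fixes $u(0)=b$ and determines each $u(i+1)$ from $u(i)$ by the $\ell$-$n$ODE update with coefficient $k(X(w[i]))$. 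So $u$ exists and is unique, by a straightforward induction on $i$.
\item \emph{Identify $u$ and $y$.} Next I prove by induction on $i$ that $u(i)=f(w_x(i))$. The key fact is that, from the closed form of Def.~\ref{def:ODE} (see the proof of Prop.~\ref{prop:nODE}), $f$ satisfies the notation-recursive identity
\[
f(z)=f(\lfloor z/2\rfloor)-n\,k(\lfloor z/2\rfloor)\Big\lfloor \tfrac{f(\lfloor z/2\rfloor)}{n-1}\Big\rfloor+k(\lfloor z/2\rfloor)
\]
for $z\ge 1$, with the argument of $k$ adjusted to the same convention as in $B_2$. The induction step then matches this identity term by term. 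Hence $u=f[w_{\lfloor x/2\rfloor}]$, and the final conjunct of $B_f$, applying one more update to $u(|x|)$ with $k(x)$, yields $y=f(x)$.
\end{enumerate}

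\textbf{Both directions.} For ($\Rightarrow$), given any witness, the uniqueness in steps 1 and 2 forces it to be this pair. For ($\Leftarrow$), take $w=\hat w_x$ and the pair above and verify each conjunct.

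\textbf{Main obstacle.} I expect the hard step to be the index bookkeeping in step 3. The statement reads bits in an order that must agree with $\ell$-recursion, the sequences have an off-by-one length ($\bt{lh}(w)=|x|$ while the history has $|x|+1$ entries), the functions are evaluated at $\alpha(u)$ versus at the prefixes of $x$, and the final step uses $k(x)$. I would first fix precisely which prefix $X(w[i])$ denotes and restate the recursion for $f$ in exactly that indexing, using Prop.~\ref{prop:nODE} to express $f(z)$ as a count modulo $n$. This reduces the induction to checking that one extra $k$-term is added modulo $n$ at each step.
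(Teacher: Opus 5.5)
Your skeleton (unfold $B_1$, then $B_2$, then induct along the halving chain) matches what the paper has in mind. The paper gives no argument beyond ``the meaning of $B_f$''. However, your step 3 rests on an identity that is false.

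Write $\mathrm{upd}(z,\kappa)=z-n\kappa\lfloor z/(n-1)\rfloor+\kappa$. For an $\ell$-$n$ODE solution, the closed form of Def.~\ref{def:ODE} (as in the proof of Prop.~\ref{prop:nODE}) gives $f(z)=\mathrm{upd}(f(\lfloor z/2\rfloor),k(\alpha(\ell(z)-1)))$ for $z\ge 1$. So $k$ is evaluated at a point that depends only on $\ell(z)$, and in fact $f(z)=f(\alpha(\ell(z)))$ is constant on numbers of equal length.

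The formula $B_f$ instead evaluates $k$ at the bit-dependent prefixes $\lfloor x/2^j\rfloor$ of $x$, and in its last conjunct at $x$ itself. No re-indexing of the argument of $k$ can make the two agree. Take $n=2$, $b=0$ and $k(z)=1$ iff $z=2$:
\begin{itemize}
\item every $k(\alpha(t))$ vanishes, so the $\ell$-2ODE solution is identically $0$;
\item $B_f(2,y,u)$ forces $y=\mathrm{upd}(0,k(2))=1$;
\item $B_f(3,y,u)$ forces $y=0$.
\end{itemize}
So with $f$ literally the $\ell$-$n$ODE solution the proposition fails. Neither your step 3 nor your planned reduction to a count modulo $n$ via Prop.~\ref{prop:nODE} can go through.

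The statement is true for the recursion-on-notation function $f(0)=b$, $f(z)=\mathrm{upd}(f(\lfloor z/2\rfloor),k(z))$ for $z\ge 1$. This is what the premise of $n$-PIND encodes, and what the paper's worked example for $x=11$ actually computes (it uses $k(1),k(2),k(5),k(11)$). With that definition, step 3 is a direct induction: $u(0)=f(0)$ and $u(j)=f(\lfloor x/2^{|x|-j}\rfloor)$ for $1\le j\le |x|-1$. The ODE closed form plays no role.

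Two bookkeeping points also need repair.
\begin{itemize}
\item In step 1, $B_1$ stores bits least significant first ($w(i)=\fun{BIT}(x,i+1)$; cf.\ $\hat w_{11}=\langle 1,1,0,1\rangle$). So $w[i]$ holds the $i+1$ low-order bits, not a prefix in the halving history. For $x=11$ you get $X(w[1])=X(\langle 1,1\rangle)=3$ rather than $2$. To get the identification you assert, read $w$ most significant bit first, or take $w[i]$ to be the last $i+1$ entries.
\item Since $\bt{lh}(u)=|x|$, the term $u(|x|)$ in the last conjunct of $B_f$ is out of range. It must be read as $u(|x|-1)=f(\lfloor x/2\rfloor)$.
\end{itemize}
With these fixes and the notation-recursion reading of $f$, your uniqueness-plus-identification argument gives exactly $y=f(x)$ and $u=f[w_{\lfloor x/2\rfloor}]$.
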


It follows by definition of $B_f$ that for every $x$, the values of $y$ and $u$ are unique (in $\mathbb{N}$). The following proposition establishes this uniqueness in $\tac$, using the same technique as the proof of uniqueness in Proposition \ref{p-LIND*}. 

\begin{proposition}
For any $a$, 
$\mathsf{TAC^0}\, \vdash\, \exists ! w\ B_1(w,a)$
\end{proposition}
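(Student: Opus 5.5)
We need to show that $\mathsf{TAC^0}\vdash \exists! w\, B_1(w,a)$, i.e., both existence and uniqueness of a sequence $w$ of length $|a|$ whose $i$-th entry is $\fun{BIT}(a,i+1)$.

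\emph{Existence.} The natural witness is $\hat w_a$, which by Lemma~\ref{wx} is given by a function $W$ in $\AC^0$. The cleanest route, however, is to go directly through Bit Comprehension.

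Sequences in $\tac$ are pairs $(c,b)$ in which every entry is stored in a block of fixed width $|b|+1$. Here all entries lie in $\{0,1\}$, so we fix $b=1$, giving blocks of width $2$. The entry $w(i)$ is then read off from bits $2i$ and $2i+1$ of $c$. Consider the formula
\[
A(a,j) :\equiv \big(j = 2\times (j\div 2)+1\big) \vee \big(j = 2\times(j\div 2) \wedge \bt{bit}(a,(j\div 2)+1)=1\big).
\]
It marks the padding bit of every block as $1$, so that each $|t_i|=|b|+1$. It also copies bit $(j\div 2)+1$ of $a$ into the value position. This formula is sharply bounded.

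Apply (BC) with bound term $s(a)$ chosen so that $|s(a)| = 2|a|$; for instance, take $s = a\# 1$ adjusted by $\bt{pad}$. This yields a $c$ with the required bits. It remains to check $\bt{seq}((c,1))$, $\bt{lh}=|a|$ and $w(i)=\fun{BIT}(a,i+1)$ for $i<|a|$. Each of these unfolds, via the esb-definitions of Proposition~\ref{seq}, into sharply bounded statements about $\bt{bit}(\cdot,c)$. These follow from the comprehension clause together with basic length axioms (\bt{ax11}, \bt{ax26}) proved by esb-LIND on $|a|$.

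\emph{Uniqueness.} Suppose $B_1(w,a)$ and $B_1(w',a)$. Both are sequences with the same length $|a|$ and the same entries $w(i)=\fun{BIT}(a,i+1)=w'(i)$. The block width is fixed by the element size, so the codes $c,c'$ have the same length. Every bit of $c$ is determined either by the padding convention or by an entry value, and likewise for $c'$. Hence $\forall j<|c|\,(\bt{bit}(j,c)=\bt{bit}(j,c'))$, and (BE) gives $c=c'$, so $w=w'$. This is precisely the technique used for uniqueness in Proposition~\ref{p-LIND*}: the uniqueness statement is esb, so any needed induction over positions is an instance of esb-LIND.

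\emph{Main obstacle.} The hard part is not conceptual but bookkeeping. One must show inside $\tac$ that the bit-level description of $c$ matches the esb-defined accessors $w(i)$, $\bt{lh}$ and $\bt{seq}$ of \cite{CloteTakeuti}. This requires relating $\bt{msp}(c,2i)$ modulo $4$ to the two relevant bits, via \bt{ax30}, \bt{ax31} and esb-LIND on $i$. I would isolate this as a single auxiliary lemma stating that for sequences with $b=1$, $w(i)=\bt{bit}(2i,c)$ and $\bt{bit}(2i+1,c)=1$, and then derive both existence and uniqueness from it.
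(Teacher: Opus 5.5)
Your route, building the code with (BC) and comparing codes with (BE), is an explicit version of what the paper only gestures at. The paper says uniqueness follows ``by the same technique'' as in Proposition~\ref{p-LIND*} (the uniqueness formula is esb, so esb-LIND applies), and it leaves existence implicit in $W$ of Lemma~\ref{wx} being esb-definable, which itself goes through (BC).

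\textbf{The gap is in your uniqueness step.} The sentence ``the block width is fixed by the element size'' is asserted, not proved, and in the coding you are using it is false. A sequence is a pair $(c,b)$ whose blocks have width $|b|+1$: a marker bit that forces $|t_i|=|b|+1$, followed by the value. Entries in $\{0,1\}$ fit in blocks of every width $\ge 2$, with zero padding between the marker and the value. So besides your $(c,1)$, a width-$3$ code $(c'',2)$ satisfies $B_1(\cdot,a)$ just as well, and so does $(c'',3)$, since the code depends on $b$ only through $|b|$. Your (BE) argument gives $c=c'$ only once you know $|b|=|b'|$. Even then, $c=c'$ yields $w=w'$ only if $b=b'$. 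Nothing in $B_1$ forces either condition.

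You therefore need an explicit normalization. For example, add ``the width parameter of $w$ is $1$'' as a conjunct of $B_1$, or use a coding in which $b$ is canonically determined by the entries. With that convention in place, your (BC)/(BE) argument goes through. The paper's one-line justification skips the same point, so this has to be settled at the level of the definition; it cannot be asserted inside the proof.

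Two smaller bookkeeping points:
\begin{itemize}
\item By \textsc{ax13}, $|a\#1|=|a|+1$, not $2|a|$. Your $A(a,j)$ sets every odd position to $1$, so any bound with $|s(a)|>2|a|$ creates spurious blocks and breaks $\textsc{lh}(w)=|a|$. Take $s(a)=\textsc{pad}(a,a)$, or add the guard $j<|a|+|a|$ to $A$.
\item The nested coding $\textsc{pad}(t_0,\textsc{pad}(t_1,\dots))$ puts $t_0$ at the most significant end. Your auxiliary lemma should therefore read $w(i)$ off bits $2(|a|-1-i)$ and $2(|a|-1-i)+1$, not bits $2i$ and $2i+1$. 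Alternatively, reverse the indexing in $A$.
\end{itemize}
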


\begin{corollary} \label{esb}
$B_f$ is an esb-formula in $\tac$ (hence also in $\tacp$). 
\end{corollary}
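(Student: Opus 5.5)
The statement is Corollary~\ref{esb}: $B_f$ is an esb-formula of $\tac$, and hence of $\tacp$. The plan is to unfold $B_f$ and check each part against the closure conditions defining esb formulas. The one unbounded ingredient is the leading $\exists w$, and the key point is that it is permitted because $\tac$ proves that such a $w$ exists uniquely and is bounded.

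The steps, in order:

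\begin{enumerate}
\item I first check that $B_1(w,x)$ is esb.
\begin{itemize}
\item $\bt{Seq}(w)$, $\bt{lh}(w)$ and $w(i)$ are esb-definable by Proposition~\ref{seq}.
\item $\fun{BIT}(x,i+1)$ is a term via $\bt{msp}$.
\item The quantifier $\forall i<|x|$ is sharply bounded.
\end{itemize}
Symbols for esb-definable functions may be used freely inside esb formulas: this is the conservative-extension property of $\tac$ recalled before Lemma~\ref{lemma:FACTAC}.

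\item Next I show that $B_2(w,u)$ is esb.
\begin{itemize}
\item $\bt{lh}(u)$ is at most logarithmic in $u$, so $\forall i<\bt{lh}(u)-1$ is sharply bounded.
\item $w[i]$ is esb-definable by Proposition~\ref{seq}, and $X$ is in $\fac$ by Lemma~\ref{wx}. By Theorem~\ref{th:TAC}, $X$ is therefore esb-definable in $\tac$.
\item The remaining ingredients are the following.
\begin{itemize}
\item $k$ is esb-definable by hypothesis (in $\tacp$, justified as in Remark~\ref{abuse}).
\item $\lfloor u(i)/(n-1)\rfloor$ abbreviates a bounded comparison with the constant $n-1$.
\item Multiplication by $n$ and by a $\{0,1\}$-valued term is repeated addition or bit-multiplication.
\end{itemize}
\end{itemize}
All of this is available in the language.

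\item I then treat the last conjunct, $y=u(|x|)-k(x)\times n\times\lfloor u(|x|)/(n-1)\rfloor+k(x)$. It is an atomic formula in esb-definable terms, so it is esb.

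\item It remains to handle $\exists w$. I use the closure clause: if $A(\tu a,w)$ is esb, $T\vdash\exists! w\le s(\tu a)\,A$, and $C$ is esb, then $\exists w\le s(\tu a)(A\wedge C)$ is esb.
\begin{itemize}
\item I take $A:=B_1(w,x)$ and $C:=B_2(w,u)\wedge(\text{last conjunct})$.
\item Unique existence is exactly the preceding proposition $\tac\vdash\exists! w\,B_1(w,a)$.
\item I must still supply a bound term $s(x)$ for $w$. A sequence of $|x|$ bits with element width $|b|+1$ is a $\bt{pad}$-nest bounded by something like $x\#x$ (or $\bt{pad}$ iterated), and this bound should be provable. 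I expect the existence part to be obtained via (BC) with the esb condition $\fun{BIT}(x,i+1)$, which yields a witness below $2^{|s|}$ directly.
\end{itemize}
\end{enumerate}

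The main obstacle is Step~4: verifying that the quoted uniqueness proposition gives the bounded form $\exists! w\le s(x)$, rather than an unbounded one, with a concrete term $s$.
\begin{itemize}
\item I would first take $s$ to be the term produced by (BC) in building $\hat w_x$, and rewrite $\exists w$ as $\exists w\le s(x)$, justified by uniqueness together with the fact that the (BC) witness satisfies $B_1$.
\item I would also confirm the convention of Definition~\ref{convention} for $x=0$. Here $u=\langle\rangle$ and $u(|x|)=v$ is given by the case-split esb term of the subsequent Remark, so it adds only a Boolean case distinction and esb-ness is preserved.
\end{itemize}

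Since $\tacp$ extends $\tac$ without changing the language or the definition of esb formulas, $B_f$ is then also esb in $\tacp$.
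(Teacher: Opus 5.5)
Your proposal takes the paper's own route, and it is correct. The paper reads the corollary off the preceding proposition $\mathsf{TAC^0}\vdash\exists!w\,B_1(w,a)$ via the esb closure clause, with $B_1$ as the uniquely satisfied formula and $B_2$ together with the final equation as the esb side formula, exactly as you do. Your insistence on a bounded form $\exists!w\le s(x)$, which the paper's unbounded statement silently skips, is a legitimate refinement.

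One thing to tidy: $k$ is only assumed esb-definable in $\mathsf{TAC^0}[n]$, not in $\mathsf{TAC^0}$. So what your argument (like the paper's) literally gives is esb-ness in the language extended by a symbol for $k$, with only the unique-existence step for $w$ genuinely happening in $\mathsf{TAC^0}$. That is all Lemma~\ref{precompleteness} needs.
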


\begin{lemma} \label{precompleteness}
Assume that $k$ is esb-definable in $\tacp$. Then, $f$ is esb-definable in $\tacp$. 
\end{lemma}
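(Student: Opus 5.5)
The plan is to apply the derived rule $n$-PIND$^*$ of Proposition~\ref{p-LIND*} to the formula $B_f$ of Definition~\ref{convention}, carrying the value $y$ and the encoded history $u$ together as the pair $(y_0,y_1)$. Corollary~\ref{esb} gives that $B_f$ is esb, and Proposition~\ref{phif} gives that $B_f$ defines $f$ in $\mathbb{N}$. Condition (iii) of esb-definability is then immediate. It remains to prove totality $\tacp\vdash\forall x\exists y\leq n-1\,\exists u\, B_f(x,y,u)$ and uniqueness.

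\textbf{Base and premises.} For the base case I use the convention of Definition~\ref{convention}: $B_f(0,b,\langle\rangle)$ with $b<n$. The value $v$ is given by the explicit case definition in the subsequent remark, so this case is provable in $\tac$. Premise (\ref{21}) is the totality of $k$ with values in $\{0,1\}$; it holds by the hypothesis that $k$ is esb-definable in $\tacp$, using Proposition~\ref{conservext} so that $k$ may appear as a symbol. For (\ref{22}) I take $k_1(a,b_0)$ to be the entry appended to the history, namely the value of $f$ at $\lfloor a/2\rfloor$, which is $b_0$. Its bound $t$ is a constant term of length $\ell(n)$ fixed by the sequence format, since all entries of $u$ are below $n$. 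This $k_1$ is trivially total in $\tac$.

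\textbf{Step premise (\ref{1}).} I must show
\[
B_f(\lfloor a/2\rfloor,b_0,b_1)\Rightarrow B_f\big(a,h_0(a,b_0),h_1(a,b_0,b_1)\big),
\]
where $h_0(a,b_0)=b_0-n\times k(a)\times\lfloor b_0/(n-1)\rfloor+k(a)$ and $h_1$ appends $b_0$ to $b_1$, i.e.\ $h_1=\bt{pad}(b_1,t)+b_0$.

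1. Unfold $B_f(\lfloor a/2\rfloor,\dots)$ to obtain a witness $w$ with $B_1(w,\lfloor a/2\rfloor)$ and $B_2(w,b_1)$.
2. Build $w'=w*\langle \fun{BIT}(a,0)\rangle$ (or prepend, depending on the indexing of $B_1$). Check $B_1(w',a)$ bitwise, using $\bt{bit}(i+1,a)=\bt{bit}(i,\lfloor a/2\rfloor)$; this follows from \bt{ax31} and the sequence operations of Proposition~\ref{seq}.
3. Check $B_2(w',b_1*\langle b_0\rangle)$. The old entries are unchanged. The new last entry satisfies the recurrence, because $X(w'[i])$ reconstructs the relevant prefix of $a$ by Lemma~\ref{wx}.
4. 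Check the final clause: $y=h_0(a,b_0)$ is exactly the defining equation once $u(|a|)=b_0$.

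Every step is a sharply bounded manipulation inside $\tac$ plus the definition of $k$.

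\textbf{Main obstacle.} I expect the main difficulty to be the bookkeeping in step 3. The recurrence in $B_2$ evaluates $k$ on $X(w[i])$, the prefix reconstructed from the reversed-bit history. One must show in $\tac$ that $X(w'[i])=X(w[i])$ for old indices and that the new index yields $\lfloor a/2\rfloor$ or $a$ consistently with the offset conventions of $B_f$, including the degenerate case $a=0$. I would first fix the indexing so that $w'$ extends $w$ by appending; I would then prove by esb-LIND an auxiliary lemma $X(w_a[|a|-1])=\lfloor a/2\rfloor$.

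\textbf{Uniqueness.} I would prove uniqueness by esb-PIND on the esb statement $B_f(a,y,u)\wedge B_f(a,y',u')\to y=y'\wedge u=u'$. This uses the uniqueness of $w$ (the proposition preceding Corollary~\ref{esb}) together with bit extensionality (BE) on $u$.
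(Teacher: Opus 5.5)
Your proposal follows the paper's proof essentially step for step. You apply $n$-PIND$^*$ to $B_f$, take the base case from the convention $B_f(0,b,\langle\rangle)$, discharge (\ref{21}) from the esb-definability of $k$, extend the history by padding it and adding the previous value $b_0$, and get uniqueness by induction on the esb uniqueness formula. Your choice $k_1(a,b_0)=b_0$ for (\ref{22}) is cleaner than the paper's $k_1(a)=f(\lfloor a/2\rfloor)$, which as written appears to presuppose the totality being proved; you should still cap it (e.g.\ output $0$ when $b_0\ge n$) so that the bound $k_1\le t$ holds for a free $b_0$.
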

\begin{proof}
We will show that $B_f$ is provably total in $\tacp$, which means that $f$ is esb-definable in $\tacp$ by Proposition \ref{phif}. We need to show the following two statements: 
\begin{itemize}
    \item[(i)] $\tacp\vdash \forall x \exists yu\, B_f(x,y,u)$
    \item[(ii)] $\tacp\vdash \forall x \forall yu \forall y'u'\, B_f(x,y,u)\wedge B_f(x,y',u')\to y=y' \wedge u = u'$
\end{itemize}

By Corollary \ref{esb}, $B_f$ is an esb-formula. So, 
we need to show that (\ref{1}), (\ref{21}) and (\ref{22}) of p-LIND$^*(B_f)$ are provable in $\tacp$ in order to show (i). (\ref{1}) consists of the following: 
\begin{align*} 
    \mathsf{TAC^0}[p]\ \vdash\ B_f(\lfloor \frac{a}{2}\rfloor, d, u_0) \to B_f(a, d - k(a) \times n \times \lfloor \frac{d}{n-1}\rfloor + k(a), u_1)
\end{align*}
for some $u_1$ satisfying $B_f$. Now we construct $u_1$ in $\facp$. 
\vspace{0.2cm}

Let $n = |a|$ and $w = W(a)$. 
Let $u_0$ be the sequence satisfying $u_0(0) = b\ \wedge\ \bt{lh}(u_0) = n-1\ \wedge\ \forall i < n-1\ u_0(i+1) = u_0(i)- n \times k(X(w[i])) \times \lfloor \frac{u_0(i)}{n-1}\rfloor + k(X(w[i]))\ \wedge\  d = u_0(n-2) - p\times k(\lfloor\frac{a}{2}\rfloor) \times \lfloor\frac{u_0(n-2)}{n-1}\rfloor + k(\lfloor\frac{a}{2}\rfloor)$. 
Then, $u_1$ can be defined as follows: 
$$u_1\ \ =\ \ \bt{pad}(u_0, p)+  d $$

Since $a$ is arbitrary and $k$ is esb-definable by assumption, we get $\tacp \vdash$ (\ref{21}). 
By Proposition \ref{phif}, (\ref{22}) is $\exists d< p\, A(a, d)$ where $A(a,b)\iff f(\lfloor\frac{a}{2}\rfloor) = b$. Hence, we also have $\tacp \vdash$ (\ref{22}). 
With $\tacp \vdash B_f(0, b, \langle \rangle)$, we get (i) by $n$-PIND$^*(B_f)$. 
\vspace{0.2cm}

Now we show (ii). Note that the construction of $u_1$ from $u_0$ shows that if there are unique $d_0$ and $u_0$ satisfying $B_f(\lfloor \frac{a}{2}\rfloor, d_0, u_0)$, then there are unique $d_1$ and $u_1$ s.t. $B_f(a, d_1, u_1)$. Namely, $d_1 = d_0 - k(a) \times n \times \lfloor \frac{d_0}{n-1}\rfloor + k(a)$. This reasoning takes place in $\tacp$. Since the formula stating the uniqueness $B_f(x,y,u)\wedge B_f(x,y',u')\to y=y' \wedge u = u'$ is an esb-formula, (ii) follows from the esb-LIND. 
\end{proof}

Any function in $\mathbb{ACDL}$ is esb-definable in $\tac$; thus, Lemma \ref{precompleteness} shows the completeness. Together with Lemma \ref{witness}, we obtain Theorem \ref{completeness}.


\end{toappendix}

\section{Conclusion and perspective}\label{sec:conclusion}

The paper introduces \emph{new} and \emph{uniform} machine-independent characterizations for constant-depth small circuit classes including modulo counting for any constant, in both the recursion- and proof-theoretic settings.
The key novelty of our approach lies in its reliance on very simple (i.e.,~strict) recursion schemas defined by discrete ODEs. 
This apparently simple shift in perspective, initiated in~\cite{BournezDurand} for $\FP$ and scalable from the discrete to the continuous realm, has already proven to be highly effective in capturing (small circuit) classes, possibly including (unrestricted) counting gates~\cite{ADK24a,ADK25}. 
In the ODE framework it becomes natural to ``parametrize'' the recursion schema in ways that are not natural (hence were not considered) in the classical recursion-theoretic context.
This is precisely what allows us to obtain the first implicit characterizations for constant-depth circuit computation that extend to modulo counting. 
%
%
By simply equipping algebras with properly-defined (still strict) schemas, the correspondence  between strict schemas and $\AC^0$ (see also~\cite{ADK24a,ADK25}) extends to $\ACCp{n}$ classes.
This was not the case for characterizations presented so far in the literature; as mentioned, $\ACCp{2}$ and $\ACCp{6}$ were captured by restricting $k$-BRN, the schema defining $\NC^1$ (for $k\ge 4$), in a way that did not allow for a generalization to other modulo counting classes.
%
Furthermore, by mirroring the behavior of the new  schemas, we present bounded arithmetic theories capturing the corresponding circuit classes.
%
As an additional positive feature, our axioms are conceptually simpler than the corresponding rules by~\cite{CloteTakeuti} and the naturalness of the approach leads to completely self-contained and technically straightforward proofs, which were not the case for (partial) characterizations obtained in the less flexible frameworks of~\cite{CloteTakeuti,ADK25}.
%
%
%

Overall, this work is conceived as a first step towards a more general study of (circuit) complexity from the viewpoint of ODEs and several research directions are still open.
First, it would be  profitable to better understand connections between strict and non-strict schemas (the latter, allowing one to call more freely for preceding values in their definition), to delineate the frontier between  computation in constant and (bounded) logarithmic-depth in this machine-independent framework. 
In this vein, the role of the change of variable as a method of reduction should be investigated much further. 
Other natural and intriguing questions concern $\ell$-ODE schemas allowing for exact division or over the reals.
In the longer term, it would be interesting to investigate separation between classes or bounds, possibly in terms of syntactical constraints over ODE schemas (or bounded theories) and relying on tools from the difference calculus.
Regarding bounded arithmetic, this work is the first attempt to link ODE-based characterizations and proof-theoretical approaches. Although here we can rely on multiple simplifications due to the limited small circuit settings (i.e.,~the fact that we are mostly dealing with ``small sequences'', whose encoding is especially simple), our results seem to point to promising directions to ``translate'' ODE-based algebras into corresponding rule systems (and vice versa).
At first, it would  be natural to see whether this approach can be used to define ODE-designed deduction rules for  $\mathbb{LDL}$, the algebra defined in~\cite{BournezDurand} for $\FP$.
Follow-up, more general studies might help to  develop an original machine-independent approach to relate recursion- and proof-theoretical views of complexity.

\arxiv{(considering interesting separation results in the vein of Cor.~\ref{cor:nODE}).}
%
%
%

\bibliographystyle{plain}
\bibliography{reference}

\appendix

\end{document}

%% file: macros.tex
\newcommand{\TC}{\mathbf{FTC}}

\newcommand{\cc}[1]{\mathbf{#1}}

\newcommand{\bth}[1]{\mathsf{#1}}
\newcommand{\bt}[1]{\textsc{#1}}

\newcommand{\FP}{\mathbf{FP}}
\newcommand{\NC}{\mathbf{FNC}}
\newcommand{\AC}{\mathbf{FAC}}
\newcommand{\ACCt}{\mathbf{FAC}^0\mathbf{[2]}}
\newcommand{\ACCp}[1]{\mathbf{FAC}^0\mathbf{[{#1}]}}

\newcommand{\ACDL}{\mathbb{ACDL}}

\newcommand{\tu}[1]{\mathbf{#1}}
 
\newcommand{\longv}[1]{}

\newcommand{\fun}[1]{\mathsf{#1}}

\newcommand{\Nat}{\mathbb{N}}

\newcommand\N{\mathbb{N}}

\newcommand{\ODEup}{\ell\text{-ODE}_2^\uparrow}

 \usepackage{bussproofs}

 \newcommand{\raus}[1]{}